\documentclass[11pt]{article}

\usepackage[letterpaper, margin=0.95in]{geometry}
\usepackage{setspace}
\usepackage[T1]{fontenc}
\usepackage[utf8]{inputenc}
\usepackage{mathpazo}
\usepackage{microtype}

\usepackage{amsmath,amssymb,amsthm,mathtools}

\usepackage{graphicx}
\usepackage{booktabs}
\usepackage{tabularx}
\usepackage{multirow}
\usepackage{float}
\usepackage{subcaption}

\usepackage{natbib}
\setcitestyle{authoryear,open={(},close={)}}

\usepackage[colorlinks=true,linkcolor=blue!60!black,citecolor=blue!60!black,urlcolor=blue!60!black]{hyperref}

\usepackage{enumitem}
\usepackage{xcolor}
\usepackage{appendix}

\newtheorem{proposition}{Proposition}
\newtheorem{corollary}[proposition]{Corollary}

\newtheorem{definition}[proposition]{Definition}
\newtheorem{remark}{Remark}
\newtheorem{assumption}{Assumption}

\newcommand{\E}{\mathbb{E}}
\newcommand{\Var}{\operatorname{Var}}
\newcommand{\Cov}{\operatorname{Cov}}
\newcommand{\eps}{\varepsilon}
\newcommand{\dd}{\mathrm{d}}
\newcommand{\MRC}{\mathrm{MRC}}
\newcommand{\MPR}{\mathrm{MPR}}
\newcommand{\MPC}{\mathrm{MPC}}
\newcommand{\gbar}{\bar{\gamma}}
\newcommand{\gstar}{\bar{\gamma}^{*}}
\newcommand{\pder}[2]{\frac{\partial #1}{\partial #2}}
\newcommand{\dder}[2]{\frac{\dd #1}{\dd #2}}

\title{\textbf{Risk Capacity and Optimal Monetary Policy}\thanks{
We are grateful to Chen Lian for numerous conversations and detailed feedback. We thank Emi Nakamura, J\'{o}n Steinsson, Yuriy Gorodnichenko, Andres Schaab, Benjamin Schoefer, David Romer, Christina Romer, Martin Lettau, Dmitry Livdan, Amir Kermani, Christine Parlour, and Ulrike Malmendier for helpful comments and suggestions. We thank seminar participants at UC Berkeley for valuable feedback. All errors are our own.}}

\author{
Rui Sun\thanks{Haas School of Business, University of California, Berkeley. Email: \href{mailto:ruisun233@berkeley.edu}{ruisun233@berkeley.edu}.}
}

\date{}

\begin{document}

\maketitle
\thispagestyle{empty}

\begin{abstract}
\noindent
We characterize optimal monetary policy when policy endogenously moves risk premia through redistribution across agents who differ in their willingness to bear risk. The analytical core is Marginal Risk Capacity, the covariance of monetary policy exposures with marginal propensities to take risk. This sufficient statistic governs this channel as MPCs govern the consumption channel. MRC enters the Ramsey criterion as a risk premium wedge that breaks divine coincidence, vanishes if and only if macroprudential tools are available, and generates a new inflation bias under discretion. Solving the Ramsey problem globally reveals a risk capacity trap where transmission collapses, and optimal policy preemptively prevents it.

\medskip
\vspace{8mm}
\noindent \textbf{JEL codes:} E44, E52, E61, G12 \\
\noindent \textbf{Keywords:} Optimal monetary policy, Risk premia, Heterogeneous agents, Financial stability, Macroprudential policy
\end{abstract}

\newpage
\setcounter{page}{1}

\section{Introduction}
\label{sec:intro}

Monetary policy moves risk premia. Expansionary policy compresses the equity premium in stock markets \citep{bernanke_kuttner2005}, the term premium in nominal bonds \citep{hanson_stein2015}, and the external finance premium on risky corporate debt \citep{gertler_karadi2015}.\footnote{See also \citet{cieslak_vissing-jorgensen2021} for evidence on the Federal Reserve's systematic response to the stock market, \citet{jarocinski_karadi2020} for evidence from high-frequency identification in the euro area, and \citet{paul2020} for evidence of time variation in these effects.} Neither the standard New Keynesian framework of \citet{woodford2003} and \citet{gali2008} nor the emerging heterogeneous agent models surveyed in \citet{kaplan_violante2018} can account for this fact: enriching the transmission mechanism with MPC heterogeneity leaves the risk premium response unexplained. A large normative literature characterizes optimal policy in these settings \citep{benigno_woodford2005, gali2015, bhandari_etal2021, legrand_ragot2022, acharya_etal2023, nuno_thomas2022}, and recent work studies how the composition of government debt between nominal and inflation-indexed bonds shapes the temptation to inflate \citep{schmid_valaitis_villa2024}, yet none asks what a central bank should do when its actions endogenously change the economy's willingness to bear risk.

We show that a common normative structure governs optimal monetary policy in any economy where risk premia are endogenous. Three structural properties are sufficient. First, monetary policy redistributes wealth across agents; this holds whenever there are nominal rigidities and heterogeneous balance sheets. Second, agents differ in their marginal propensity to invest in risky assets; this holds whenever agents differ in risk aversion, background risk, beliefs, or portfolio constraints. Third, risk premia affect real activity; this holds in any production economy with capital. Because the results follow from these properties alone, they apply regardless of the primitive source of heterogeneity.\footnote{This includes models of heterogeneous risk aversion \citep{garleanu_panageas2015, kekre_lenel2022}, segmented markets \citep{he_krishnamurthy2013, he_krishnamurthy2019}, intermediary asset pricing \citep{adrian_etula_muir2014, brunnermeier_sannikov2016}, background risk \citep{constantinides_duffie1996}, and heterogeneous beliefs \citep{geanakoplos2009, simsek2013}.} The normative framework is general; specific models are used for quantification.

The analytical core is a new sufficient statistic we call Marginal Risk Capacity. Just as \citet{auclert2019} showed that the consumption channel of monetary transmission is governed by the covariance of policy exposures with marginal propensities to consume, we show that the risk premium channel is governed by the covariance of policy exposures with marginal propensities to take risk. MRC collapses the cross-sectional distribution of wealth, portfolios, and risk preferences into a single number that determines how the interest rate moves aggregate risk-bearing capacity. Section~\ref{sec:general} derives the main results directly from the three structural properties, without imposing functional forms for preferences, production, or the source of portfolio heterogeneity. Section~\ref{sec:analytical} then provides closed-form expressions and economic intuition in an analytical two-period economy.

The Ramsey planner's optimal policy satisfies a target criterion that augments the standard New Keynesian prescription with a new term: a risk premium wedge. The planner tolerates inflation not only to close the output gap but also to redistribute wealth toward agents who are willing to bear aggregate risk, compressing risk premia and stimulating investment through Tobin's q. The wedge is larger when aggregate risk-bearing is further from its efficient level and when the interest rate has more power to move it. It breaks divine coincidence through a channel orthogonal to both markup shocks \citep{blanchard_gali2007} and MPC heterogeneity \citep{bhandari_etal2021}. The socially efficient level of risk-bearing arises from a pecuniary externality in the sense of \citet{davila_korinek2018}: when an agent invests more in capital, it marginally lowers the risk premium faced by all other agents, and the competitive equilibrium does not internalize this positive spillover.

We prove a separation theorem. When the planner has access to both the nominal interest rate and a macroprudential portfolio tax, the risk premium wedge vanishes from the monetary policy criterion. The macroprudential instrument absorbs the risk capacity management motive, and optimal monetary policy reverts to the standard inflation-output prescription. Separation of monetary and macroprudential mandates is optimal if and only if the macroprudential toolkit is rich enough to close the gap between actual and efficient risk-bearing capacity. When the macroprudential instrument is constrained, a residual wedge remains, formalizing the sense in which monetary policy must lean against the wind when macroprudential tools are insufficient \citep{stein2012}.

We identify a new time-consistency problem. Under discretion, the central bank at each date is tempted to generate unexpected inflation to redistribute toward leveraged, risk-tolerant agents and compress risk premia. Anticipating this, those agents reduce their leverage or demand higher compensation. In equilibrium, both inflation and risk premia are higher than under commitment. The magnitude of this inflation bias is proportional to the cross-sectional dispersion of marginal propensities to take risk, providing a new source of the Barro-Gordon problem \citep{barro_gordon1983} beyond the standard output-inflation tradeoff. Institutional constraints on monetary policy may therefore be needed for financial stability as well as inflation stabilization.

We complement the theory with new empirical evidence. Using every wave of the Survey of Consumer Finances from 1989 through 2022, we construct MRC for the U.S.\ economy following \citet{doepke_schneider2006} and \citet{kekre_lenel2022}. MRC varies substantially over time, declining in recessions as leveraged agents' wealth erodes. Refreshing the SVAR-IV evidence of \citet{bernanke_kuttner2005} and \citet{gertler_karadi2015}, we show that the stock market's response to monetary shocks varies with MRC: when MRC is high, the response reflects lower future required returns; when MRC is low, the risk premium channel shuts down. A one-standard-deviation increase in MRC amplifies the impact equity response by 0.9 percentage points. Off-the-shelf financial conditions indices explain 31 to 45 percent of MRC variation, so the optimal policy can be implemented using existing central bank infrastructure.

For quantification, we solve the Ramsey problem globally in a calibrated infinite-horizon economy that enriches a standard New Keynesian framework with Epstein-Zin preferences \citep{epstein_zin1989} and heterogeneity in risk aversion, building on the positive model of \citet{kekre_lenel2022}. The global solution reveals a finding invisible from any local analysis around the deterministic steady state: a \textit{risk capacity trap}. When the wealth share of risk-tolerant agents falls below approximately ten percent, the risk premium channel of monetary transmission collapses because those agents have too little wealth for redistribution to meaningfully affect aggregate risk-bearing capacity. The Ramsey planner prevents the economy from entering the trap by stabilizing the wealth distribution preemptively, generating welfare gains of 0.13 percent of permanent consumption relative to a standard Taylor rule. An implementable simple rule that responds to inflation, the output gap, the wealth share, and the expected equity premium captures 85 percent of these gains. To confirm the generality of the framework, we also solve an intermediary-based variant in which the risk-tolerant agents are leveraged financial intermediaries rather than households; the Ramsey allocation, welfare gains, and trap threshold are quantitatively similar. In a model with both MPC and MPR heterogeneity, the risk premium channel remains more than twice as important as the consumption channel for welfare, and the two channels are approximately additive, confirming the orthogonality of the welfare decomposition.

\medskip\noindent\textbf{Related literature.}\quad Our paper contributes to the growing literature on optimal monetary policy in heterogeneous agent New Keynesian economies. \citet{bhandari_etal2021} characterize Ramsey policy with MPC heterogeneity. \citet{legrand_ragot2022} study optimal fiscal and monetary policy with incomplete markets. \citet{acharya_etal2023} and \citet{nuno_thomas2022} characterize optimal policy with idiosyncratic risk. \citet{mckay_wolf2023} study optimal monetary policy in HANK with forward guidance. Unlike all of these papers, we study heterogeneity in marginal propensities to take risk rather than to consume, operating through risk premia rather than aggregate demand. The covariance of monetary policy exposures with MPRs rather than MPCs is the relevant sufficient statistic, and the planner's target criterion acquires a new wedge absent from any of these frameworks.

In studying the interaction of monetary policy and risk premia we follow \citet{alvarez_etal2009} and \citet{drechsler_savov_schnabl2018}, who analyze these effects in segmented-market and banking environments, respectively, and \citet{pflueger_rinaldi2022}, who study monetary transmission and risk premia with consumption habits. We instead study the normative question in a production economy with heterogeneous balance sheets. In relating risk premium movements to real activity, we draw on \citet{caballero_farhi2018} and \citet{caballero_simsek2020}, and build especially on \citet{brunnermeier_sannikov2016} in emphasizing the role of the wealth distribution.

We formalize the debate on whether monetary policy should lean against the wind. \citet{svensson2017} argues that monetary policy is a blunt instrument for financial stability; \citet{stein2012} argues that it ``gets in all the cracks'' when macroprudential tools are constrained; \citet{adrian_liang2018} survey the debate. We provide the first micro-founded separation theorem with explicit conditions under which each view is correct. Our macroprudential analysis relates to the literature on pecuniary externalities \citep{lorenzoni2008, bianchi2011, korinek_simsek2016, farhi_werning2016, davila_korinek2018}; we identify a new externality operating through portfolio choice rather than borrowing constraints. Our time-consistency analysis extends the \citet{barro_gordon1983} framework with a new source of inflation bias, complementing \citet{bhandari_etal2017} and \citet{davila_schaab2023}. \citet{schmid_valaitis_villa2024} study optimal debt management when the government issues both nominal and real bonds, showing that inflation-indexed debt mitigates the temptation to monetize nominal liabilities; our inflation bias operates through a distinct channel, the redistribution across private agents with heterogeneous risk-bearing capacity, and persists even when the government's own debt portfolio is held fixed.

Our quantitative analysis builds on \citet{kekre_lenel2022}, who demonstrate that redistribution across households with heterogeneous marginal propensities to take risk can quantitatively rationalize the observed effects of monetary policy on risk premia and amplify real effects by 1.3 to 1.4 times. Our paper differs in three respects. First, conceptually: we introduce MRC as a model-free sufficient statistic and establish the target criterion, separation theorem, and time-consistency results in a general framework that transcends any particular positive model. Second, the normative analysis reveals features of the positive economy that their local analysis around the deterministic steady state cannot detect: the risk capacity trap, the state-dependent collapse of monetary transmission, and the endogenous asymmetry of optimal policy. Third, we document a new empirical fact: the effect of monetary policy on risk premia varies over time, and the variation is predicted by MRC. We also contribute to the large literature on the links between risky asset prices and real activity, including \citet{gilchrist_zakrajsek2012} and \citet{lopez-salido_etal2017} on credit spreads, \citet{pflueger_etal2020} and \citet{chodorow-reich_etal2021} on the cost of capital and the consumption-wealth channel, and \citet{mian_etal2021} on wealth inequality and aggregate demand.

\medskip\noindent\textbf{Outline.}\quad
Section~\ref{sec:general} presents the general framework and derives the main normative results from three structural properties alone. Section~\ref{sec:analytical} develops closed-form expressions in an analytical two-period economy. Section~\ref{sec:quant} presents the quantitative analysis, including the risk capacity trap, a 2008 crisis counterfactual, an intermediary-based robustness model, and the interaction with the zero lower bound. Section~\ref{sec:policy} discusses policy implications. Section~\ref{sec:conclusion} concludes. Appendix~\ref{sec:empirics} presents the empirical evidence on MRC.

\section{General Framework}
\label{sec:general}

This section derives the main normative results, the target criterion, the separation theorem, and the time-consistency bias, from three structural properties alone. No functional forms for preferences, production, or the source of portfolio heterogeneity are imposed.

\subsection{Environment and Structural Properties}
\label{sec:general:env}

Consider a general equilibrium economy with a unit measure of agents indexed by $i$, who consume, supply labor, and choose portfolios of a safe nominal asset and a risky asset (capital). A monetary authority sets the nominal interest rate $i_t$. The economy features nominal rigidities, so the price level $P_t$ (or inflation $\pi_t$) responds to monetary policy, and the real allocation is not invariant to $i_t$. We impose three structural assumptions.

\begin{assumption}[Monetary redistribution]
\label{ass:redistrib}
Monetary policy redistributes wealth across agents: for each agent $i$, the sensitivity of $i$'s financial wealth share $s_t^i$ to the interest rate is generically nonzero,
\[
\pder{s_t^i}{i_t} \neq 0 \quad \text{for a positive measure of agents},
\]
and the redistribution sums to zero: $\int (\partial s_t^i / \partial i_t) \, di = 0$.
\end{assumption}

This holds in any model with nominal rigidities and heterogeneous balance sheets. Redistribution operates through two channels: debt deflation, as real bond values change with inflation, and asset revaluation, as capital prices respond to interest rates. Agents leveraged in capital and short in bonds gain from monetary easing.

\begin{assumption}[Heterogeneous marginal propensities to take risk]
\label{ass:mpr_het}
Agents differ in their marginal propensity to take risk:
\[
\MPR^i \equiv \frac{q \, \partial k^i / \partial n^i}{\partial a^i / \partial n^i} \neq \MPR^j \quad \text{for a positive measure of pairs } (i,j),
\]
where $k^i$ is agent $i$'s risky asset holding, $a^i$ is total savings, $n^i$ is wealth, and $q$ is the price of capital.
\end{assumption}

The MPR captures how an additional dollar of wealth is allocated between the risky and the safe asset. This dimension of behavior is orthogonal to the MPC. Assumption~\ref{ass:mpr_het} is satisfied whenever agents differ in risk aversion \citep{garleanu_panageas2015, kekre_lenel2022}, background risk \citep{constantinides_duffie1996}, leverage constraints \citep{he_krishnamurthy2013, brunnermeier_sannikov2016}, beliefs \citep{geanakoplos2009, simsek2013}, or portfolio rules-of-thumb \citep{chien_cole_lustig2012}.

\begin{assumption}[Risk premia affect real activity]
\label{ass:rp_real}
The risk premium $\E_t[r_{t+1}^k - r_{t+1}]$ affects aggregate investment and output: conditional on the safe real rate,
\[
\pder{k_t}{\E_t[r_{t+1}^k - r_{t+1}]} < 0.
\]
\end{assumption}

This holds in any production economy with capital, following from the firm's investment Euler equation: a higher risk premium raises the required return on capital and reduces the Tobin's $q$, lowering investment.

\subsection{MRC as a Sufficient Statistic}
\label{sec:general:mrc}

Under Assumptions~\ref{ass:redistrib}--\ref{ass:rp_real}, define the economy's effective risk aversion as $\gbar_t \equiv [\int s_t^i \cdot (1/\gamma_{\text{eff}}^i) \, di]^{-1}$, where $\gamma_{\text{eff}}^i$ is agent $i$'s effective risk aversion parameter (which may reflect risk aversion, background risk, beliefs, or constraints). The risk premium is $\E_t[r_{t+1}^k - r_{t+1}] \approx \gbar_t \cdot \sigma_t^2$, where $\sigma_t^2$ is the conditional variance of the risky return.

\begin{definition}[Marginal Risk Capacity -- General]
\label{def:mrc_general}
The economy's Marginal Risk Capacity is
\begin{equation}
\label{eq:mrc_general}
\MRC_t \equiv -\int \pder{s_t^i}{i_t}\cdot \MPR_t^i \; di.
\end{equation}
\end{definition}

\begin{proposition}[General sufficient statistic]
\label{prop:general_suffstat}
Under Assumptions~\ref{ass:redistrib}--\ref{ass:mpr_het}, the effect of the interest rate on the economy's effective risk aversion is summarized by MRC:
\begin{equation}
\label{eq:general_suffstat}
\pder{\gbar_t}{i_t} = \gbar_t \cdot \MRC_t.
\end{equation}
When $\MRC_t > 0$, a reduction in $i_t$ lowers effective risk aversion and compresses risk premia. Combined with Assumption~\ref{ass:rp_real}, MRC governs the risk premium channel of monetary transmission.
\end{proposition}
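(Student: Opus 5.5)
The plan is a direct differentiation of the definition of $\gbar_t$, followed by one identification step that links each agent's $1/\gamma_{\text{eff}}^i$ to $\MPR_t^i$. Write $\gbar_t = H_t^{-1}$ with $H_t \equiv \int s_t^i \,(1/\gamma_{\text{eff}}^i)\, di$, the wealth-weighted aggregate risk tolerance. Treat $\gamma_{\text{eff}}^i$ as a primitive (or predetermined) agent characteristic that does not move with $i_t$ to first order. Then the chain rule gives
\[
\pder{\gbar_t}{i_t} \;=\; -\gbar_t^{2}\,\pder{H_t}{i_t} \;=\; -\gbar_t^{2}\int \pder{s_t^i}{i_t}\,\frac{1}{\gamma_{\text{eff}}^i}\, di .
\]
So all of the interest-rate dependence runs through the redistribution terms $\partial s_t^i/\partial i_t$ of Assumption~\ref{ass:redistrib}. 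This is the analogue of the step in \citet{auclert2019} where the aggregate response is written as a cross-sectional integral of exposures times marginal propensities.

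The second step expresses $1/\gamma_{\text{eff}}^i$ in terms of $\MPR_t^i$. I would use the mean-variance (or locally log-normal) portfolio rule that underlies the approximation $\E_t[r_{t+1}^k - r_{t+1}] \approx \gbar_t\sigma_t^2$. In that rule the risky share of savings is $q k^i / a^i = \E_t[r^k_{t+1}-r_{t+1}]/(\gamma_{\text{eff}}^i \sigma_t^2)$. Because this share does not depend on $n^i$ (homotheticity at the margin), differentiating with respect to $n^i$ yields
\[
\MPR_t^i = \frac{\E_t[r_{t+1}^k - r_{t+1}]}{\gamma_{\text{eff}}^i\sigma_t^2} = \frac{\gbar_t}{\gamma_{\text{eff}}^i}.
\]
The last equality uses market clearing in the risky asset, which is what pins the premium at $\gbar_t\sigma_t^2$. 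Hence $1/\gamma_{\text{eff}}^i = \MPR_t^i/\gbar_t$. Substituting into the first display gives $\pder{\gbar_t}{i_t} = -\gbar_t \int (\partial s_t^i/\partial i_t)\,\MPR_t^i\,di = \gbar_t\cdot\MRC_t$ by Definition~\ref{def:mrc_general}.

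Three remarks then complete the proof. First, the zero-sum condition in Assumption~\ref{ass:redistrib} makes the integral a cross-sectional covariance: any constant can be subtracted from $\MPR_t^i$. Second, by Assumption~\ref{ass:mpr_het} this covariance is generically nonzero, since otherwise the constant can be chosen to make the integrand vanish. Third, when $\MRC_t>0$, lowering $i_t$ lowers $\gbar_t$ and hence the premium $\gbar_t\sigma_t^2$, holding $\sigma_t^2$ fixed. Assumption~\ref{ass:rp_real} then converts this into higher investment.

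The main obstacle is the identification step $\MPR_t^i = \gbar_t/\gamma_{\text{eff}}^i$. It requires, first, that the marginal dollar be allocated like the average dollar, so that the marginal risky share equals the average share. Second, it requires that $\gamma_{\text{eff}}^i$ and $\sigma_t^2$ be held fixed when $i_t$ moves. Where background risk, beliefs or constraints make portfolios non-homothetic, I would define $\gamma_{\text{eff}}^i$ through the marginal allocation itself. That is, set $1/\gamma_{\text{eff}}^i \equiv \MPR_t^i\sigma_t^2/\E_t[r^k_{t+1}-r_{t+1}]$, and read $\gbar_t$ as the corresponding first-order aggregator. The identity then holds by construction to first order. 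The premium movement coming from changes in $\sigma_t^2$ I would state as a separate, second-order effect outside the proposition.
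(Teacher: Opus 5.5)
Your proposal is correct and follows essentially the paper's own argument: differentiate the wealth-weighted harmonic mean $\gbar_t$ by the chain rule, then substitute $1/\gamma_{\text{eff}}^i = \MPR_t^i/\gbar_t$. The paper's version differs only cosmetically. It passes through the form $\int (\partial s^i/\partial i)(1-\omega^i)\,di$ and the zero-sum property before identifying $\omega^i$ with $\MPR^i$. It also justifies $\MPR^i=\gbar/\gamma^i$ beyond unitary IES by a zero-risk-limit perturbation rather than by your homotheticity argument. One caveat: your side remark that Assumption~\ref{ass:mpr_het} makes the covariance generically nonzero does not follow as you argue it, since exposures could be orthogonal to MPRs. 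The proposition does not need that claim.
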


\begin{proof}
See Appendix~\ref{app:proof_suffstat}.
\end{proof}

Proposition~\ref{prop:general_suffstat} collapses the high-dimensional cross-sectional distribution of wealth, portfolios, and risk preferences into a single number. This parallels what \citet{auclert2019} achieved for the consumption channel: just as the aggregate consumption response depends on the covariance of monetary policy exposures with MPCs, the aggregate risk premium response depends on the covariance of exposures with MPRs.

\begin{remark}[Scope of the sufficient statistic]
The relationship $\MPR_t^i \approx \gbar_t / \gamma_{\textup{eff}}^i$ underlying Proposition~\ref{prop:general_suffstat} holds exactly at the limit of zero aggregate risk and approximately for small $\sigma$. Away from this limit, MRC remains the leading-order determinant of $\partial\gbar_t/\partial i_t$, but higher-order terms involving the cross-sectional skewness of MPRs may enter. The 15-type validation in Section~\ref{sec:quant:trap} confirms that MRC computed from the coarse three-group partition is nearly perfectly correlated (0.98) with MRC from the finer partition, and that the welfare gains and policy effectiveness surfaces are quantitatively similar. This indicates that the sufficient statistic property is robust for realistic levels of aggregate risk and is not an artifact of the coarse three-group structure.
\end{remark}

\subsection{Optimal Policy: The Risk Premium Wedge}
\label{sec:general:optimal}

Consider a Ramsey planner who chooses the nominal interest rate to maximize a weighted sum of agents' utilities, subject to all competitive equilibrium conditions.

\begin{proposition}[General target criterion]
\label{prop:general_target}
Under Assumptions~\ref{ass:redistrib}--\ref{ass:rp_real}, the Ramsey planner's optimal policy satisfies a target criterion of the form
\begin{equation}
\label{eq:general_target}
 \Gamma_\pi \cdot \pi_t = \Gamma_x \cdot \hat{y}_t + \Gamma_{rp} \cdot \MRC_t \cdot \bigl(\gbar_t - \gstar_t\bigr)\sigma_t^2, 
\end{equation}
where $\hat{y}_t$ is the output gap, $\gstar_t$ is the socially efficient effective risk aversion, and $\Gamma_\pi, \Gamma_x, \Gamma_{rp} > 0$ are coefficients that depend on model primitives and Pareto weights.
\end{proposition}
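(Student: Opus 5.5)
The plan is to write the Ramsey problem in primal form and derive the target criterion from the planner's first-order condition with respect to the nominal rate. The planner maximizes $\sum_t \beta^t \int \lambda^i u(c_t^i,\ell_t^i)\,di$ subject to three blocks of competitive-equilibrium constraints: the Phillips curve, which links $\pi_t$ to the output gap $\hat y_t$; the aggregate demand and Euler block; and the asset-pricing and investment block. I use the last block in reduced form, combining Proposition~\ref{prop:general_suffstat}, the approximation $\E_t[r^k_{t+1}-r_{t+1}]\approx \gbar_t\sigma_t^2$, and Assumption~\ref{ass:rp_real}. I attach multipliers to each block and treat $i_t$ as the control. Because $i_t$ enters only through these blocks, its first-order condition is a linear combination of the multipliers, weighted by the derivatives of each constraint with respect to $i_t$.

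\textbf{Step 1: second-order approximation of welfare.} I take a second-order approximation of welfare around an efficient-subsidy steady state. Besides the usual $\pi_t^2$ and $\hat y_t^2$ terms, the loss contains a term in $(\gbar_t-\gstar_t)^2\sigma_t^4$. Here $\gstar_t$ is defined as the effective risk aversion the planner would choose if it could allocate risk-bearing directly. That is the value at which the marginal social value of an extra unit of capital equals its private value once the pecuniary externality on the risk premium is internalized. The linear term in $\gbar_t-\gstar_t$ is what survives at first order, by the usual Davila--Korinek argument: the private first-order conditions set private marginal values to zero but not social ones. The cross-sectional dispersion of marginal utilities is then summarized through the Pareto weights, and this pins down $\Gamma_{rp}$.

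\textbf{Step 2: the first-order condition in $i_t$.} Differentiating the quadratic loss with respect to $i_t$ gives three channels:
\begin{itemize}
\item $\pi_t\,\partial\pi_t/\partial i_t$, the inflation channel;
\item $\hat y_t\,\partial\hat y_t/\partial i_t$, the output-gap channel;
\item $(\gbar_t-\gstar_t)\sigma_t^2\cdot\partial\gbar_t/\partial i_t\cdot\sigma_t^2$, the risk premium channel.
\end{itemize}
For the third channel I invoke \eqref{eq:general_suffstat} to replace $\partial\gbar_t/\partial i_t$ by $\gbar_t\MRC_t$. The factor $\gbar_t\sigma_t^2$ is absorbed into $\Gamma_{rp}$, so that $\Gamma_{rp}$ is evaluated at the steady state. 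Next I use the Phillips curve, with slope $\kappa$, to express $\partial\pi_t/\partial i_t=\kappa\,\partial\hat y_t/\partial i_t$. Dividing through by $\partial\hat y_t/\partial i_t$, which is negative by the standard demand channel, yields \eqref{eq:general_target}. The coefficients are $\Gamma_\pi\propto \kappa\,\omega_\pi$ and $\Gamma_x\propto \omega_x$, where $\omega_\pi$ and $\omega_x$ are the loss weights on inflation and the output gap. Finally, $\Gamma_{rp}$ is the loss weight on the risk premium gap times $\gbar\sigma^2/|\partial\hat y/\partial i|$.

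\textbf{Step 3: signs of the coefficients.} I check that $\Gamma_\pi,\Gamma_x,\Gamma_{rp}>0$. The first two are positive by the usual concavity arguments. $\Gamma_{rp}>0$ requires that the social value of risk-bearing be increasing in capital, which follows from Assumption~\ref{ass:rp_real}: a lower premium raises $k$, and $k$ is below its efficient level when $\gbar>\gstar$. The sign of the whole wedge then comes from $\MRC_t$.

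\textbf{Main obstacle.} I expect the hardest part to be Step 1. The loss must be quadratic and must have exactly this separable form without imposing functional forms. In particular, I have to show that cross terms between $\hat y_t$ and $\gbar_t-\gstar_t$ either vanish at first order or can be folded into $\Gamma_x$. If they do not vanish, I would first try to redefine the output gap relative to the risk-adjusted efficient allocation, so that the interaction is absorbed. Failing that, I would state the coefficients as local derivatives evaluated at the Ramsey steady state, which is all the "of the form" claim requires.
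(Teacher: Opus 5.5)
Your skeleton matches the paper's. The proof in Appendix~\ref{app:proof_target} sets the planner's first-order condition to zero and splits it into inflation, output and risk-premium channels. It then substitutes $\partial\gbar_t/\partial i_t=\gbar_t\MRC_t$ from \eqref{eq:general_suffstat} and simply \emph{defines} $\Gamma_\pi,\Gamma_x,\Gamma_{rp}$ as the resulting local coefficients, with $\gstar$ the planner's preferred effective risk aversion. It never derives a separable quadratic loss. So your Step~1 is extra work, and your stated fallback is exactly the paper's argument.

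If you do pursue Step~1, note that the labor subsidy removes the markup but not the pecuniary externality. So $\gbar\neq\gstar$ at your expansion point and welfare has a first-order term in $\gbar$. A purely quadratic loss then needs a small-gap assumption or the \citet{benigno_woodford2005} correction using second-order approximations of the constraints.

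There is a genuine gap in the sentence where the cross-sectional dispersion of marginal utilities is ``summarized through the Pareto weights'' and absorbed into $\Gamma_{rp}$. Under Assumption~\ref{ass:redistrib}, $i_t$ redistributes wealth and hence consumption. The first-order welfare effect therefore contains $\sum_i\lambda^i\alpha^i u_c^i\,\partial c_t^i/\partial i_t$. Its redistributive part scales with exposures weighted by Pareto-weighted marginal utilities and MPCs, which is the consumption channel of Proposition~\ref{prop:general_decomp}. That term does not vanish when $\MRC_t=0$ or when $\gbar_t=\gstar_t$, so it cannot be absorbed into any coefficient multiplying $\MRC_t(\gbar_t-\gstar_t)\sigma_t^2$. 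The paper confronts this explicitly: it asserts the three-way decomposition only when MPCs are identical, and notes that otherwise a fourth, \citet{bhandari_etal2021}-type redistribution wedge enters the criterion. You need an explicit restriction of this kind, or you must carry the fourth term. It does not follow from Assumptions~\ref{ass:redistrib}--\ref{ass:rp_real} alone.

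Your signs also do not close. Let $D\equiv\partial\hat y_t/\partial i_t<0$ and $\partial\pi_t/\partial i_t=\kappa D$. Your condition reads $\kappa\omega_\pi\pi_t+\omega_x\hat y_t+\omega_{rp}\gbar_t\sigma_t^4(\gbar_t-\gstar_t)\MRC_t/D=0$. Equivalently, $\kappa\omega_\pi\pi_t=-\omega_x\hat y_t+\Gamma_{rp}\MRC_t(\gbar_t-\gstar_t)\sigma_t^2$ with $\Gamma_{rp}=\omega_{rp}\gbar_t\sigma_t^2/|D|$.

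Your $\Gamma_{rp}$ is right, but the output-gap coefficient in the form \eqref{eq:general_target} is $-\omega_x<0$. This is the standard New Keynesian rule, in which inflation and the gap move in opposite directions. So $\Gamma_x>0$ is not a concavity fact; it requires signing $\hat y_t$ as slack, a convention you must state. The paper's proof does not resolve this either; it simply declares $\Gamma_x>0$.

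Likewise, $\Gamma_{rp}>0$ comes from $\omega_{rp}>0$, the second-order condition at the welfare-maximizing $\gstar$, together with $D<0$. Whether capital is below its efficient level determines the sign of the gap $\gbar_t-\gstar_t$, not of the coefficient.

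Finally, holding $\E_t\pi_{t+1}$ fixed in $\partial\pi_t/\partial i_t=\kappa D$ yields the discretionary targeting rule. Under commitment with a forward-looking Phillips curve, lagged multipliers add $\hat y_{t-1}$ terms. Your static form is therefore exact only in a two-period setting like the one the paper's proof actually uses.
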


The proof, provided in Appendix~\ref{app:proof_target}, proceeds by setting the planner's first-order condition to zero and decomposing the welfare effect of the interest rate into three channels: the cost of inflation from nominal rigidities, the benefit of closing the output gap, and the benefit of compressing risk premia through redistribution. The risk premium benefit is proportional to $\partial \gbar_t / \partial i_t = \gbar_t \cdot \MRC_t$ (from Proposition~\ref{prop:general_suffstat}) times the welfare gap $\gbar_t - \gstar_t$. Rearranging yields \eqref{eq:general_target}.

The content of Proposition~\ref{prop:general_target} is best understood term by term. The first two terms reproduce the standard New Keynesian prescription \citep{woodford2003, gali2015}: the planner tolerates inflation only to the extent that it closes the output gap. The third term is the risk premium wedge, a fundamentally new object. It captures the planner's desire to use inflation as a tool for redistributing wealth toward agents who are more willing to bear aggregate risk, thereby compressing risk premia and stimulating investment. This channel is orthogonal to both the markup-shock source of divine coincidence breakdown \citep{blanchard_gali2007} and the MPC-heterogeneity source studied in \citet{bhandari_etal2021}: it operates through the portfolio margin rather than the goods market.

The wedge has a transparent structure. It is the product of three objects: MRC, measuring the sensitivity of risk premia to the interest rate; the gap $\gbar_t - \gstar_t$, measuring how far risk premia are from their efficient level; and $\sigma_t^2$, the quantity of risk in the economy. The wedge vanishes if any of these is zero: (i) MPRs are identical ($\MRC = 0$), so redistribution has no effect on risk-bearing; (ii) monetary policy does not redistribute ($\partial s_t^i / \partial i_t = 0$); or (iii) risk premia are already at their socially efficient level ($\gbar = \gstar$).

The socially efficient $\gstar_t$ is defined by the condition that the planner would not wish to further reallocate capital across agents at the margin. Generically $\gstar < \gbar$, because of a pecuniary externality: when an agent saves more in capital, it marginally lowers the risk premium faced by all other agents, and the competitive equilibrium does not internalize this positive spillover. This is a portfolio-choice analog of the pecuniary externalities in borrowing identified by \citet{davila_korinek2018}, but operating through the risk premium rather than through asset prices or collateral constraints.

\subsection{The Separation Theorem}
\label{sec:general:separation}

\begin{proposition}[General separation]
\label{prop:general_separation}
Under Assumptions~\ref{ass:redistrib}--\ref{ass:rp_real}, suppose the planner also controls a macroprudential portfolio tax $\tau_t^k$ on the risky asset return. Then: \textup{(a)} the optimal interest rate satisfies $\Gamma_\pi \pi_t = \Gamma_x \hat{y}_t$ (no risk premium wedge); \textup{(b)} $\tau_t^k$ closes the gap $\gbar_t - \gstar_t$; \textup{(c)} when $\tau_t^k$ is constrained (e.g., $\tau_t^k \geq 0$), the wedge is reduced but not eliminated whenever $\gbar_t > \gstar_t$.
\end{proposition}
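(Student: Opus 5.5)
The plan is to treat the Ramsey problem with two instruments, $(i_t,\tau_t^k)$, as a Lagrangian over competitive-equilibrium allocations, and compare its first-order conditions with the single-instrument case behind Proposition~\ref{prop:general_target}. As in that proof, the welfare derivative with respect to $i_t$ splits into three channels: an inflation cost, an output-gap benefit, and a risk-premium term. The risk-premium term is proportional to $\partial\gbar_t/\partial i_t=\gbar_t\,\MRC_t$ (Proposition~\ref{prop:general_suffstat}) times the marginal welfare value of lowering $\gbar_t$. That marginal value is itself proportional to $(\gbar_t-\gstar_t)\sigma_t^2$. The tax $\tau_t^k$ enters the equilibrium conditions only through portfolio choice, i.e.\ through the effective risk premium agents face. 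I would therefore write the private risk premium as $\gbar_t\sigma_t^2-\tau_t^k$, or equivalently rescale each $\gamma^i_{\text{eff}}$, and impose a standing regularity condition: to first order, $\tau_t^k$ does not enter the Phillips curve or the aggregate demand block except through $\gbar_t$.

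\textbf{Step 1: part (b).} Take the first-order condition in $\tau_t^k$. Its only welfare effect runs through the risk-bearing gap, so the FOC reads $\Lambda_t\,(\gbar_t-\gstar_t)\sigma_t^2\cdot \partial\gbar_t/\partial\tau_t^k=0$, where $\Lambda_t>0$ is the shadow value of risk capacity. Under Assumption~\ref{ass:rp_real} and the tax-pass-through assumption, $\partial\gbar_t/\partial\tau_t^k\neq0$. Interior optimality then forces $\gbar_t=\gstar_t$, which proves (b). In effect, $\tau_t^k$ is a Pigouvian correction of the portfolio pecuniary externality described after Proposition~\ref{prop:general_target}. A subsidy to risky returns ($\tau^k<0$, given $\gstar<\gbar$) raises private risk-taking until the private and social valuations coincide.

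\textbf{Step 2: part (a).} Substitute $\gbar_t=\gstar_t$ into the $i_t$ first-order condition, i.e.\ the analogue of \eqref{eq:general_target}. The third term $\Gamma_{rp}\MRC_t(\gbar_t-\gstar_t)\sigma_t^2$ vanishes, leaving $\Gamma_\pi\pi_t=\Gamma_x\hat y_t$. The subtle point is to check that the coefficients $\Gamma_\pi$ and $\Gamma_x$ are unchanged. This follows from an envelope argument: at the optimum in $\tau$, the indirect effect of $i_t$ on welfare through the induced change in $\tau$ is zero.

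\textbf{Step 3: part (c).} Impose $\tau_t^k\ge0$ and attach a multiplier $\mu_t\ge0$ with complementary slackness. With $\gbar_t>\gstar_t$, the unconstrained optimum is a subsidy, so the constraint binds and $\tau_t^k=0$. The residual gap then remains strictly positive, and the $i_t$ condition retains a wedge $\Gamma_{rp}\MRC_t(\gbar_t-\gstar_t)\sigma_t^2>0$. To show the wedge is reduced relative to the no-tool case, I would argue that the constrained tool weakly shrinks the gap. If the constraint is instead a bound $|\tau|\le\bar\tau$, the gap is shrunk strictly whenever the bound is positive, and the residual wedge equals $\mu_t$ scaled by $\MRC_t/(\partial\gbar/\partial\tau)$.

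\textbf{Main obstacle.} The hardest step is justifying that $\tau_t^k$ is a "targeted" instrument: it moves $\gbar_t$ without also shifting $\pi_t$, $\hat y_t$, or the redistribution $\partial s^i/\partial i_t$. Without this, the $\tau$ first-order condition would mix gap and demand terms, and separation would fail. My first attempt is to argue that, to first order, a revenue-neutral tax rebated lump-sum by wealth share affects only relative portfolio returns. Its aggregate-demand effect would then be second order, so the two FOCs are block-triangular and exact separation follows. If this does not hold exactly, I would state separation as holding up to these second-order terms.
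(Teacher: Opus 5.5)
Your proposal is correct and follows essentially the same route as the paper. Because the lump-sum-rebated tax enters only the portfolio Euler equations, its first-order condition sets $\gbar_t=\gstar_t$, which removes the $\Gamma_{rp}\,\MRC_t(\gbar_t-\gstar_t)\sigma_t^2$ term from the interest-rate condition; under $\tau_t^k\ge 0$ the constraint binds at zero when $\gbar_t>\gstar_t$, so a residual gap survives. Your additions are refinements of that argument rather than a different approach: the nondegeneracy condition $\partial\gbar_t/\partial\tau_t^k\neq 0$, the envelope remark on the coefficients, the multiplier expression for the residual wedge, and the observation that with $\tau_t^k\ge 0$ binding at zero the wedge is only weakly reduced, which is also all the paper's proof actually delivers.
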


The proof is in Appendix~\ref{app:proof_separation}. The portfolio tax enters agents' portfolio Euler equations but does not affect goods market clearing or the wage rigidity, provided the tax revenue is rebated lump-sum. Consequently, the FOC for $\tau_t^k$ operates exclusively through the risk premium and sets $\gbar = \gstar$ directly, eliminating the risk premium wedge from the monetary policy criterion.

The logic is Tinbergen's principle: the interest rate manages intertemporal prices while the portfolio tax manages relative risk-return tradeoffs. When both are available, there is no conflict between stabilization and risk capacity management.

This result formalizes when institutional separation of monetary and macroprudential mandates is optimal. Separation holds if and only if the macroprudential toolkit can close the gap between actual and efficient risk-bearing capacity. When the macroprudential instrument is constrained, for instance when only taxes on risk-taking are feasible ($\tau_t^k \geq 0$) but not subsidies, the portfolio tax cannot address states where risk premia are inefficiently high ($\gbar > \gstar$, requiring $\tau_t^k < 0$). In these states, a residual risk premium wedge remains in the monetary policy criterion, and the interest rate must partially compensate.

\subsection{Time Consistency}
\label{sec:general:timecon}

\begin{proposition}[General inflation bias]
\label{prop:general_timecon}
Under Assumptions~\ref{ass:redistrib}--\ref{ass:rp_real} and discretion: \textup{(a)} average inflation exceeds the commitment level; \textup{(b)} the excess inflation is proportional to $\Var_i(\MPR_t^i)$; \textup{(c)} average risk premia are higher and investment lower than under commitment.
\end{proposition}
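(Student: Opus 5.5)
The plan is to follow the Barro--Gordon template and compare the discretionary first-order condition with the Ramsey target criterion of Proposition~\ref{prop:general_target}. Under commitment, the planner internalizes that expected inflation is priced into agents' portfolio choices and nominal contracts. Ex ante, in expectation, anticipated inflation redistributes nothing: agents' portfolios and contract terms already reflect it. So the risk premium wedge is active only through its effect on state-contingent deviations, and average inflation under commitment equals the level implied by $\Gamma_\pi \E[\pi_t] = \Gamma_x \E[\hat y_t]$ plus the mean of the commitment wedge. Under discretion, the central bank at date $t$ takes beliefs $\E_{t-1}\pi_t$ and the inherited portfolios $(k^i_{t-1}, b^i_{t-1})$ as given. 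Its first-order condition therefore contains the full marginal benefit of surprise redistribution, $\Gamma_{rp}\,\MRC_t(\gbar_t-\gstar_t)\sigma_t^2$, evaluated at the discretionary allocation. It also drops the promise-keeping terms (the Lagrange multipliers on past expectations) that offset it under commitment.

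\textbf{Step 1.} Write the discretionary Markov problem and derive the time-$t$ first-order condition in $i_t$ using the decomposition from Appendix~\ref{app:proof_target}. Then take unconditional expectations and impose rational expectations, $\E_{t-1}\pi_t=\pi_t$ on average, so that average surprise redistribution is zero. What remains is the equation $\Gamma_\pi\,\E[\pi^{disc}-\pi^{com}] = \Gamma_{rp}\,\E[\MRC_t(\gbar_t-\gstar_t)\sigma_t^2]$. Since $\gstar<\gbar$ generically (the pecuniary externality discussed after Proposition~\ref{prop:general_target}) and $\MRC_t>0$ when leveraged agents hold high MPRs, the right-hand side is positive. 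This gives part (a).

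\textbf{Step 2 (part (b)).} Rewrite $\MRC_t=-\int(\partial s_t^i/\partial i_t)\MPR_t^i\,di$ as a cross-sectional covariance, using $\int \partial s^i_t/\partial i_t\,di=0$ from Assumption~\ref{ass:redistrib}. Then linearize the exposures around the MPR ranking: in the portfolio-choice block, the agents exposed to policy are exactly those who tilt toward capital. Leverage, and hence the debt-deflation and revaluation exposure, is itself proportional to $\MPR^i-\overline{\MPR}$. So to first order $-\partial s^i/\partial i_t=\kappa_t s^i_t(\MPR^i_t-\overline{\MPR}_t)$ for an aggregate $\kappa_t>0$, giving $\MRC_t=\kappa_t\Var_i(\MPR_t^i)$ with the variance wealth-weighted. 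Substituting into Step 1 yields bias $=(\Gamma_{rp}/\Gamma_\pi)\,\kappa\,\gbar(\gbar-\gstar)\sigma^2\,\Var_i(\MPR^i)$ to leading order in small $\sigma$, invoking the scope remark after Proposition~\ref{prop:general_suffstat}.

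\textbf{Step 3 (part (c)).} Anticipating the higher average inflation, risk-tolerant agents reduce nominal leverage or require higher compensation on nominal debt. This lowers their equilibrium wealth share, or equivalently raises the inflation risk premium they bear. Through the definition of $\gbar_t$, average effective risk aversion therefore rises, and with it the risk premium $\gbar_t\sigma_t^2$. Assumption~\ref{ass:rp_real} then delivers lower investment.

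The main obstacle is Step 2. The proportionality of exposures to MPR deviations is not implied by Assumptions~\ref{ass:redistrib}--\ref{ass:mpr_het} alone, so it must be argued from the optimal-portfolio structure: optimal leverage is linear in $1/\gamma^i_{\text{eff}}$, hence in $\MPR^i$. Making this rigorous is the hard part, and I would expect to establish it only to first order in $\sigma$ around the zero-risk limit.
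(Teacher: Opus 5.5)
Your Step~1 identity, $\Gamma_\pi\E[\pi^{disc}-\pi^{com}]=\Gamma_{rp}\E[\MRC_t(\gbar_t-\gstar_t)\sigma_t^2]$, does not follow from the ingredients you cite. Proposition~\ref{prop:general_target} is the \emph{commitment} criterion, and it already contains the wedge $\Gamma_{rp}\MRC_t(\gbar_t-\gstar_t)\sigma_t^2$ at every date. By your own sign argument ($\MRC_t>0$, $\gstar_t<\gbar_t$) its mean is strictly positive. Differencing it against a discretionary condition with the same term cancels the wedge's level rather than isolating it. Your identity therefore needs an extra claim: that promise-keeping terms exactly offset the wedge on average under commitment, i.e.\ that the committed planner's wedge acts only on surprises. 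You assert this rather than show it, and it is not how the paper proceeds. In the paper, the discretionary condition is the same criterion evaluated at anticipated portfolios: $\Gamma_\pi\pi^D=\Gamma_x\hat y+\Phi(\pi^D)$, with $\Phi(\pi)\equiv\Gamma_{rp}\MRC(\pi)(\gbar(\pi)-\gstar)\sigma^2$. The commitment condition has one additional negative term proportional to $d\Phi/d\pi^e$, because the committed planner internalizes the portfolio response $d\omega^{i,e}/d\pi^e<0$ of high-MPR agents. The bias is then the slope $-(d\Phi/d\pi^e)/\Gamma_\pi$, not the level $\E[\Phi]/\Gamma_\pi$ that your Barro--Gordon accounting produces.

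More seriously, Steps~1 and~3 rest on incompatible premises. Step~1 needs anticipated inflation to be neutral (``portfolios and contract terms already reflect it''). Under that premise, higher average inflation under discretion is simply priced into nominal rates and gives high-MPR agents no reason to delever. So the lower wealth share, higher $\gbar$, and lower investment claimed in Step~3 do not follow. And if investment did fall, the $\Gamma_x\E[\hat y_t]$ terms you cancel in Step~1 would no longer cancel. An inflation-risk-premium channel would have to come from state-contingent surprises. Under discretion those surprises favor leveraged agents precisely when $\gbar$ is high, so the sign of that channel needs its own argument. Part~(c) requires the non-neutral portfolio response of the paper's Phase~2. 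Once you admit that response, the committed planner internalizes it, and that internalization is exactly what drives part~(a) in the paper. Build both parts on that single object; you must then sign $d\Phi/d\pi^e$, which requires the fall in $\MRC$ to dominate the rise in $\gbar-\gstar$. Your Step~2 is the paper's affine-exposure step ($ds^i/d\eps^m\approx a+b\,\MPR^i$ with $b<0$, so $\MRC=-b\Var_i(\MPR^i)$) and is fine. The one caveat is that ``proportional to $\Var_i(\MPR^i)$'' holds $\gbar-\gstar$ fixed, even though it also vanishes as the dispersion goes to zero; a small-$\sigma$ expansion does not address that.
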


\begin{proof}
See Appendix~\ref{app:proof_timecon}.
\end{proof}

Under discretion, the central bank takes existing portfolios as given and recognizes that surprise inflation would redistribute toward leveraged, high-MPR agents, compressing risk premia. But risk-tolerant agents are forward-looking: anticipating this temptation, they reduce leverage or demand higher compensation. In equilibrium, inflation and risk premia are both higher than under commitment, and the economy bears the costs of inflation without the intended benefits. The bias vanishes when $\Var_i(\MPR_t^i) = 0$, reducing to the standard Barro-Gordon structure. This is a new source of inflation bias, governed by MPR dispersion rather than MPCs.

\subsection{Welfare Decomposition: Orthogonality of MPC and MPR Channels}
\label{sec:general:decomp}

\begin{proposition}[General decomposition]
\label{prop:general_decomp}
Under Assumptions~\ref{ass:redistrib}--\ref{ass:rp_real}, the welfare effect of monetary policy decomposes as
\begin{equation}
\label{eq:general_decomp}
\dder{W}{i_t} = \underbrace{\Gamma_c \cdot \int \pder{n_t^i}{i_t} \, \MPC_t^i \, di}_{\text{consumption channel}} \;+\; \underbrace{\Gamma_k \cdot \gbar_t \, \sigma_t^2 \cdot \MRC_t}_{\text{risk premium channel}}.
\end{equation}
The consumption channel depends on the covariance of exposures with MPCs; the risk premium channel depends on MRC. These are additive: the consumption channel operates through the intertemporal margin (how a marginal dollar of wealth is split between consumption and savings), while the risk premium channel operates through the portfolio margin (how a marginal dollar of savings is split between the safe and risky asset).
\end{proposition}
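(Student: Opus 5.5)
The plan is to write the planner's welfare function as a function of the agents' wealth positions and then differentiate it with respect to $i_t$ by the chain rule. I would work with the first-order (envelope) expansion of $W$ around the competitive equilibrium allocation. Agents optimize, so the envelope theorem removes each agent's own reoptimization of consumption and portfolio. What survives are the general-equilibrium price effects: the redistribution $\partial n_t^i/\partial i_t$, and movements in the two prices agents take as given, the safe real rate and the risk premium.

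\textbf{Step 1: split each agent's response.} For each agent, I would decompose the response of $(c^i, a^i, q k^i)$ to a marginal change in wealth into two margins. The first is the intertemporal margin, consumption versus savings, governed by $\MPC^i$. The second is the portfolio margin, risky versus safe savings, governed by $\MPR^i$ as in Assumption~\ref{ass:mpr_het}.

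\textbf{Step 2: the consumption channel.} Aggregating the intertemporal margin gives $\int (\partial n_t^i/\partial i_t)\,\MPC_t^i\,di$. This moves aggregate demand, and hence the output gap through the nominal rigidity. Its welfare value is the marginal value of closing the gap, which I collect into $\Gamma_c$.

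\textbf{Step 3: the portfolio channel.} Aggregating the portfolio margin across agents, the redistribution changes aggregate risky demand in proportion to $-\int(\partial s_t^i/\partial i_t)\MPR_t^i\,di=\MRC_t$. By Proposition~\ref{prop:general_suffstat}, $\partial\gbar_t/\partial i_t=\gbar_t\MRC_t$. With the premium approximated by $\gbar_t\sigma_t^2$, the risk premium therefore moves by $\gbar_t\sigma_t^2\MRC_t$. Assumption~\ref{ass:rp_real} converts this into an investment response. Its welfare value, through the pecuniary-externality wedge between the private and social value of capital, is collected into $\Gamma_k$.

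\textbf{Step 4: additivity.} The key identity is that each wealth dollar satisfies $\MPC^i+\partial a^i/\partial n^i=1$, and that $\MPR^i$ is defined per dollar of savings. The two margins are therefore nested, not competing: the MPC term accounts for the non-saved share of each marginal dollar, and the MPR term for the risky share of the saved part. I would argue that cross terms are second order. Specifically, the interaction between the change in savings and the risky share is captured by $\int(\partial n^i/\partial i)\MPC^i\MPR^i\,di$ times the premium sensitivity. This is higher order in the small-$\sigma$ approximation already used in the Remark following Proposition~\ref{prop:general_suffstat}, so it can be absorbed into the coefficients at leading order.

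\textbf{Main obstacle.} The hardest step is making Step 4 rigorous. The two channels interact in general equilibrium: an output-gap movement changes the return on capital and hence $\sigma_t^2$ and $q$, and the premium changes investment demand and hence output. My approach would be to evaluate all derivatives at the equilibrium and treat $\Gamma_c,\Gamma_k$ as total general-equilibrium multipliers, so that feedback loops are absorbed into the coefficients. Additivity is then a statement about the sources of the shock, namely the MPC covariance versus $\MRC$, not about the absence of feedback. I would state explicitly that $\Gamma_c$ and $\Gamma_k$ may depend on equilibrium objects but not on the cross-sectional MPC or MPR distributions beyond these two statistics, to first order in $\sigma$.
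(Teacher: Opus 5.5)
Your route (envelope theorem, then trace general-equilibrium price effects through an aggregate-demand channel and an investment channel) differs from the paper's. It has a genuine gap in the consumption channel. After the envelope step, one surviving first-order term is the Pareto-weighted value of the redistribution itself, $\int\alpha^i(\partial V^i/\partial n^i)(\partial n_t^i/\partial i_t)\,di$. That term is precisely the paper's consumption channel: differentiating the $(1-\beta)\log c_0^i$ piece of the aggregator with $c_0^i=(1-\beta)n_0^i$ gives $(1-\beta)\int(\alpha^i/n_0^i)(dn_0^i/d\eps_0^m)\,di$. Your Step 2 instead values $\int(\partial n^i/\partial i)\MPC^i\,di$ only through its aggregate-demand effect on the output gap. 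That is a legitimate piece of $dW/di_t$ but a different object. The distributive term is never booked, and neither is the safe-rate effect you list as surviving. The paper can write the distributive term as $\Gamma_c\int(\partial n^i/\partial i)\MPC^i\,di$ only by letting $\Gamma_c$ absorb the joint distribution of $(\alpha^i/n^i,\partial n^i/\partial i,\MPC^i)$. This contradicts your closing claim that $\Gamma_c$ depends on no cross-sectional objects beyond the two statistics. With identical MPCs, as in the baseline, your statistic is just $(1-\beta)$ times the aggregate wealth change. Yet the welfare value of moving wealth between groups depends on how $\alpha^i/n^i$ covaries with exposures.

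Step 4 also fails as stated. The interaction $\int(\partial n^i/\partial i)\MPC^i\MPR^i\,di$ is not higher order in $\sigma$. By capital market clearing the premium is $\gbar\sigma^2$, so any reshuffling of risky demand moves it by $\sigma^2$ times an $O(1)$ factor. That is the same order as the term $\Gamma_k\gbar\sigma^2\MRC$ you retain. If exposures are measured in savings shares, as in the two-period definition $s_0^i=a_0^i/\int a_0^j\,dj$, the cross term is already inside $\MRC$, because $\partial a^i=(1-\MPC^i)\partial n^i$; otherwise it must be kept. The paper needs no such argument. With unitary IES the aggregator is additive in $(1-\beta)\log c_0^i$ and $\beta$ times the log certainty equivalent of $c_1^i$. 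So $dW/d\eps_0^m$ is mechanically the sum of a period-0 consumption term and a certainty-equivalent term. The latter is expanded to second order in $\sigma$ and mapped to $\gbar\sigma^2\MRC$ via Proposition~\ref{prop:suffstat}. Identical MPCs make savings shares equal to wealth shares, so no separate cross term arises. In the paper, additivity thus comes from the structure of the welfare function, and the coefficients openly carry the distributional weights. To repair your version, adopt that utility-based split, or define exposures in savings shares and drop the claim that $\Gamma_c,\Gamma_k$ are free of cross-sectional information.
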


\begin{proof}
See Appendix~\ref{app:proof_decomp}.
\end{proof}

Models with only MPC heterogeneity miss the risk premium channel; models with only MPR heterogeneity miss the consumption channel. Section~\ref{sec:quant:horserace} confirms this orthogonality quantitatively.

\section{Analytical Results in a Two-Period Environment}
\label{sec:analytical}

We now develop closed-form expressions for the general results of Section~\ref{sec:general} in a tractable two-period economy. This environment satisfies Assumptions~\ref{ass:redistrib}--\ref{ass:rp_real} and provides economic intuition for the target criterion, the separation theorem, and the time-consistency result.

\subsection{Environment}
\label{sec:env}

There are two periods, $0$ and $1$.

\medskip\noindent\textbf{Households.}\quad
A unit measure of households indexed by $i \in [0,1]$ have Epstein-Zin preferences
\begin{equation}
\label{eq:preferences}
\log v_0^i = (1-\beta) \log c_0^i - \bar{\theta} \frac{\ell_0^{1+1/\theta}}{1+1/\theta} + \beta \log \bigl[ \E_0 \bigl( c_1^{i\, 1-\gamma^i} \bigr) \bigr]^{\frac{1}{1-\gamma^i}},
\end{equation}
with a unitary intertemporal elasticity of substitution, discount factor $\beta$, heterogeneous relative risk aversion $\gamma^i$, disutility of labor $\bar{\theta}$, and Frisch elasticity $\theta$. Households consume, supply labor, and choose a portfolio of nominal bonds $B_0^i$ and capital $k_0^i$ subject to
\begin{align}
P_0 c_0^i + B_0^i + Q_0 k_0^i &\leq W_0 \ell_0 + (1+i_{-1})B_{-1}^i + \bigl(\Pi_0 + (1-\delta_0)Q_0\bigr)k_{-1}^i, \label{eq:bc0} \\
P_1 c_1^i &\leq (1+i_0)B_0^i + \Pi_1 k_0^i, \label{eq:bc1}
\end{align}
where $P_t$ is the price level, $W_0$ the nominal wage, $Q_0$ the price of capital, $\Pi_t$ the dividend, and $\delta_t$ the depreciation rate. Capital fully depreciates after period~1.

\medskip\noindent\textbf{Supply side.}\quad
The nominal wage is rigid: $W_0 = W_{-1}$. A representative producer hires $\ell_0$ units of labor and rents $k_{-1}$ units of capital to produce with TFP normalized to one. It uses $\bigl(\tfrac{k_0}{k_{-1}}\bigr)^{\chi_x} x_0$ consumption goods to produce $x_0$ new capital, earning period-0 profits
\begin{equation}
\label{eq:profits0}
\Pi_0 k_{-1} = P_0 \ell_0^{1-\alpha} k_{-1}^\alpha - W_0 \ell_0 + Q_0 x_0 - P_0 \Bigl(\frac{k_0}{k_{-1}}\Bigr)^{\!\chi_x} x_0.
\end{equation}
In period~1, TFP is $\exp(\eps_1^z)$ with $\eps_1^z \sim \mathcal{N}(-\sigma^2/2, \sigma^2)$, so $\Pi_1 k_0 = P_1 \exp(\eps_1^z) k_0^\alpha$.

\medskip\noindent\textbf{Markets.}\quad
Nominal bonds are in zero net supply: $\int B_0^i \, di = 0$.\footnote{The two-period model abstracts from government bonds for analytical tractability. The infinite-horizon model of Section~\ref{sec:quant} introduces government bonds in positive supply (equation \eqref{eq:mc_bonds}); the proofs in Appendix~\ref{app:proof_rp} extend to that case by replacing the capital market clearing condition with $\int a_0^i\omega_0^i\,di = q_0 K_0$, where $K_0$ is the aggregate capital stock. The sufficient statistic result (Proposition~\ref{prop:suffstat}) is unaffected because MRC depends on the \textit{covariance} of exposures with MPRs, and adding a common bond position to all agents does not change this covariance.}

\medskip\noindent\textbf{Policy.}\quad
The government commits to $P_1 = P_0$ and follows the Taylor rule
\begin{equation}
\label{eq:taylor}
1 + i_0 = (1 + \bar{\imath}) \Bigl(\frac{P_0}{P_{-1}}\Bigr)^{\!\phi} \exp(\eps_0^m),
\end{equation}
where $\eps_0^m$ is the monetary policy shock and $\phi > 1$.

\medskip\noindent\textbf{Notation.}\quad
Lower-case denotes real variables: $q_0 \equiv Q_0/P_0$, $\pi_t \equiv \Pi_t/P_t$, $w_0 \equiv W_0/P_0$. Define household $i$'s real savings $a_0^i \equiv b_0^i + q_0 k_0^i$, portfolio share in capital $\omega_0^i \equiv q_0 k_0^i / a_0^i$, savings share $s_0^i \equiv a_0^i / \int a_0^j \, dj$, and wealth $n_0^i \equiv w_0\ell_0 + P_0^{-1}(1+i_{-1})B_{-1}^i + (\pi_0 + (1-\delta_0)q_0)k_{-1}^i$. Let $1 + r_1^k \equiv \pi_1 / q_0$ and $1 + r_1 \equiv (1+i_0)P_0/P_1$.

\subsection{Risk Premium and Redistribution}
\label{sec:portfolio}

Given preferences \eqref{eq:preferences}, household $i$'s optimality condition for its portfolio is $\E_0 [ c_1^{i\, -\gamma^i} (r_1^k - r_1) ] = 0$. A second-order Taylor expansion yields the approximate optimal portfolio share
\begin{equation}
\label{eq:portfolio_approx}
\omega_0^i \approx \frac{1}{\gamma^i} \cdot \frac{\E_0 \log(1+r_1^k) - \log(1+r_1) + \tfrac{1}{2}\sigma^2}{\sigma^2}.
\end{equation}
This is the only approximation we use. Aggregating and imposing market clearing:

\begin{proposition}[Risk premium and redistribution]
\label{prop:riskpremium}
The risk premium on capital is $\E_0 \log(1+r_1^k) - \log(1+r_1) + \tfrac{1}{2}\sigma^2 = \gbar \, \sigma^2$, where
\begin{equation}
\label{eq:gbar}
\gbar \equiv \left[\int s_0^i \cdot \frac{1}{\gamma^i} \, di\right]^{-1}
\end{equation}
is the savings-weighted harmonic mean of risk aversion. The change in the risk premium in response to a monetary shock is
\begin{equation}
\label{eq:drp}
\dder{}{\eps_0^m}\bigl(\E_0[r_1^k - r_1]\bigr) = \gbar \, \sigma^2 \int \dder{s_0^i}{\eps_0^m} \bigl(1 - \omega_0^i\bigr) \, di.
\end{equation}
\end{proposition}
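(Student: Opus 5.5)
The plan is to aggregate the approximate portfolio rule \eqref{eq:portfolio_approx} across households, using bond market clearing to pin down the excess return. I then differentiate the resulting closed form with respect to $\eps_0^m$. Throughout, I write $x \equiv \E_0 \log(1+r_1^k) - \log(1+r_1) + \tfrac{1}{2}\sigma^2$ for the (log, variance-adjusted) risk premium. I read $\E_0[r_1^k - r_1]$ in \eqref{eq:drp} as shorthand for $x$. The variance $\sigma^2$ is a primitive of the TFP process, so it does not depend on $\eps_0^m$.

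\textbf{Step 1 (level).} Since nominal bonds are in zero net supply, $\int b_0^i\,di = 0$. Aggregate savings therefore equal the value of capital: $\int a_0^i\,di = q_0\int k_0^i\,di$. Equivalently $\int a_0^i \omega_0^i\,di = \int a_0^i\,di$, and dividing by total savings gives $\int s_0^i \omega_0^i\,di = 1$. Substituting $\omega_0^i \approx x/(\gamma^i\sigma^2)$ from \eqref{eq:portfolio_approx}, and noting that $x$ is common to all households, yields $\frac{x}{\sigma^2}\int s_0^i/\gamma^i\,di = 1$. Hence $x = \gbar\,\sigma^2$ with $\gbar$ as in \eqref{eq:gbar}, which is the first claim.

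\textbf{Step 2 (response).} The variance $\sigma^2$ is fixed, so $dx/d\eps_0^m = \sigma^2\, d\gbar/d\eps_0^m$. Differentiating the harmonic mean gives
\[
\dder{\gbar}{\eps_0^m} = -\gbar^{2}\int \dder{s_0^i}{\eps_0^m}\,\frac{1}{\gamma^i}\,di .
\]
Here $\gamma^i$ is a preference parameter, so all general-equilibrium effects of the shock (revaluation of $q_0$, real bond values through $P_0$, wages and labor) enter only through $ds_0^i/d\eps_0^m$. I then eliminate $1/\gamma^i$ using the equilibrium portfolio rule together with Step 1: $1/\gamma^i = \omega_0^i\sigma^2/x = \omega_0^i/\gbar$. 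This gives $d\gbar/d\eps_0^m = -\gbar\int (ds_0^i/d\eps_0^m)\,\omega_0^i\,di$. Finally, the shares satisfy $\int s_0^i\,di = 1$ identically, so $\int ds_0^i/d\eps_0^m\,di = 0$. Adding this zero converts $-\int (ds_0^i/d\eps_0^m)\,\omega_0^i\,di$ into $\int (ds_0^i/d\eps_0^m)(1-\omega_0^i)\,di$. Multiplying by $\sigma^2$ gives \eqref{eq:drp}.

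\textbf{Main obstacle.} The algebra is short; the delicate point is the order of substitution in Step 2. The derivative must be taken of the exact expression $\gbar = [\int s_0^i/\gamma^i\,di]^{-1}$, whose only $\eps_0^m$-dependence is through the shares. Only afterwards should $1/\gamma^i$ be replaced by $\omega_0^i/\gbar$, evaluated at the pre-shock equilibrium. If instead one differentiated $\int s_0^i\omega_0^i\,di = 1$ directly, one would also pick up $d\omega_0^i/d\eps_0^m$ terms, which must cancel against $dx$. Doing it in the stated order avoids this. I would also remark that the result inherits the accuracy of \eqref{eq:portfolio_approx}, which is the only approximation used. Beyond that, the derivation is exact given market clearing and the adding-up of shares (the two-period analogue of Assumption~\ref{ass:redistrib}).
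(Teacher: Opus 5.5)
Your proposal is correct and follows the paper's proof step for step. You weight the portfolio rule \eqref{eq:portfolio_approx} by $a_0^i$ and use zero net bond supply to get $\int s_0^i\omega_0^i\,di=1$, hence $\gbar\sigma^2$; you then differentiate the harmonic mean, substitute $1/\gamma^i=\omega_0^i/\gbar$, and use $\int (ds_0^i/d\eps_0^m)\,di=0$. One minor aside: differentiating $\int s_0^i\omega_0^i\,di=1$ directly is not actually a pitfall. Since $d\omega_0^i = dx/(\gamma^i\sigma^2)$, the extra terms are $\int s_0^i\,d\omega_0^i\,di = dx/(\gbar\sigma^2)$, and these do not cancel anything; they deliver the same expression for $dx$.
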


\begin{proof}
See Appendix~\ref{app:proof_rp}.
\end{proof}

A monetary shock affects the risk premium if and only if it redistributes wealth across households with heterogeneous portfolios.

\subsection{Marginal Risk Capacity}
\label{sec:mrc}

\begin{definition}[Marginal propensity to take risk]
\label{def:mpr}
Let $k_0^i(\cdot)$, $a_0^i(\cdot)$ denote household $i$'s policy functions. Its \emph{marginal propensity to take risk} (MPR) is
\begin{equation}
\label{eq:mpr}
\MPR_0^i \equiv \frac{q_0 \, \partial k_0^i / \partial n_0^i}{\partial a_0^i / \partial n_0^i}.
\end{equation}
\end{definition}

The MPR captures how a household allocates an additional dollar of wealth between capital and bonds. With unitary IES, $\MPR_0^i = \omega_0^i$. More generally, at the limit of zero aggregate risk (Appendix~\ref{app:proof_mpr}):
\begin{equation}
\label{eq:mpr_limit}
\MPR_0^i = \frac{\gbar}{\gamma^i}.
\end{equation}

\begin{definition}[Marginal Risk Capacity]
\label{def:mrc}
The economy's \emph{Marginal Risk Capacity} is
\begin{equation}
\label{eq:mrc}
\MRC \equiv -\int \underbrace{\dder{s_0^i}{\eps_0^m}\vphantom{\MPR}}_{\text{exposure}} \cdot\; \underbrace{\MPR_0^i}_{\text{MPR}} \; di.
\end{equation}
\end{definition}

Since $\int (ds_0^i / d\eps_0^m) \, di = 0$, we have $\MRC = \int (ds_0^i / d\eps_0^m)(1 - \MPR_0^i) \, di$. Note that Definition~\ref{def:mrc} specializes the general Definition~\ref{def:mrc_general}: under the Taylor rule \eqref{eq:taylor}, $ds_0^i/d\eps_0^m = (\partial s_0^i/\partial i_0)(\dd i_0/\dd \eps_0^m)$, so the two definitions are proportional with a positive factor $\dd i_0/\dd \eps_0^m > 0$, preserving the sign of MRC. Substituting into \eqref{eq:drp}:

\begin{proposition}[MRC as a sufficient statistic]
\label{prop:suffstat}
The effect of monetary policy on the risk premium depends on the joint distribution of wealth, portfolios, and MPRs only through MRC:
\begin{equation}
\label{eq:suffstat}
\dder{}{\eps_0^m}\bigl(\E_0[r_1^k - r_1]\bigr) = \gbar \, \sigma^2 \cdot \MRC.
\end{equation}
\end{proposition}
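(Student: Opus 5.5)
The plan is to obtain Proposition~\ref{prop:suffstat} by substitution into Proposition~\ref{prop:riskpremium}, identifying the portfolio weights that appear there with the MPRs of Definition~\ref{def:mpr}. I take as given the differentiated risk premium expression \eqref{eq:drp},
\[
\dder{}{\eps_0^m}\bigl(\E_0[r_1^k - r_1]\bigr) = \gbar \, \sigma^2 \int \dder{s_0^i}{\eps_0^m} \bigl(1 - \omega_0^i\bigr) \, di .
\]
It was derived by differentiating $\gbar^{-1} = \int s_0^i/\gamma^i\,di$ and using $\omega_0^i = \gbar/\gamma^i$, which follows from the approximate portfolio rule \eqref{eq:portfolio_approx} together with the risk premium identity $\gbar\sigma^2$. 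The only remaining step is to replace $\omega_0^i$ by $\MPR_0^i$ inside the integral.

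To justify this, I will show that $\MPR_0^i = \omega_0^i$ under the maintained preferences \eqref{eq:preferences}. With unitary IES and homothetic Epstein--Zin utility, household $i$'s problem is homogeneous of degree one in wealth $n_0^i$, taking prices $q_0, w_0, r_1$ and the distribution of $r_1^k$ as given. Hence $a_0^i = \beta n_0^i$ and $q_0 k_0^i = \omega_0^i a_0^i$, where the share $\omega_0^i$ solves $\E_0[c_1^{i\,-\gamma^i}(r_1^k-r_1)]=0$ with $c_1^i \propto a_0^i$. That Euler condition is therefore independent of $n_0^i$. Differentiating, $q_0\,\partial k_0^i/\partial n_0^i = \omega_0^i\,\partial a_0^i/\partial n_0^i$, so $\MPR_0^i = \omega_0^i$ exactly. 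This is the claim stated after Definition~\ref{def:mpr}. It is also consistent with the limit \eqref{eq:mpr_limit}, since $\omega_0^i = \gbar/\gamma^i$ under \eqref{eq:portfolio_approx}.

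Substituting gives $\int (ds_0^i/d\eps_0^m)(1-\MPR_0^i)\,di$. Because savings shares satisfy $\int s_0^i\,di = 1$ identically, we have $\int (ds_0^i/d\eps_0^m)\,di = 0$. The integral therefore equals $-\int (ds_0^i/d\eps_0^m)\MPR_0^i\,di = \MRC$ by Definition~\ref{def:mrc}, which yields \eqref{eq:suffstat}. The ``only through MRC'' statement then follows because the prefactor $\gbar\sigma^2$ is the level of the risk premium, not a function of the exposure distribution. Any two joint distributions of exposures, portfolios, and MPRs with the same MRC and the same premium level produce the same response.

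I expect the main obstacle to be the homotheticity step. Labor income enters $n_0^i$, and the labor term in \eqref{eq:preferences} is separable. One must check that $\ell_0$ is demand-determined under the rigid wage, hence common across households and not a choice variable that varies with $n_0^i$. Only then does the problem scale linearly in $n_0^i$. I would verify this first; if it fails, I would fall back on the approximate identity \eqref{eq:mpr_limit}, which makes \eqref{eq:suffstat} hold to the same order as \eqref{eq:portfolio_approx}.
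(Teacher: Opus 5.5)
Your proposal is correct and is essentially the paper's own argument. You take \eqref{eq:drp}, replace $\omega_0^i$ by $\MPR_0^i$ (exact under unitary IES), and use $\int (ds_0^i/d\eps_0^m)\,di = 0$ to identify the integral with $\MRC$. Your homotheticity derivation of $\MPR_0^i = \omega_0^i$ fills in a step the paper only asserts; it works because $\ell_0$ is demand-determined and common across households and there is no period-1 labor income.
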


When $\MRC > 0$, an expansionary shock (negative $\eps_0^m$ in the Taylor rule \eqref{eq:taylor}) lowers the risk premium. To see why MRC is positive in equilibrium, observe that household $i$'s wealth share changes by
\begin{equation}
\label{eq:redistribution}
\dder{s_0^i}{\eps_0^m} \propto \underbrace{-\frac{(1+i_{-1})B_{-1}^i}{P_0}\,\dder{\log P_0}{\eps_0^m}}_{\text{debt deflation}} + \underbrace{\Bigl(k_{-1}^i - s_0^i k_{-1}\Bigr)\Bigl(\dder{\pi_0}{\eps_0^m} + (1-\delta)\dder{q_0}{\eps_0^m}\Bigr)}_{\text{revaluation of capital}}.
\end{equation}
From the budget constraints \eqref{eq:bc0}--\eqref{eq:bc1}, households with high MPRs borrow in bonds ($B_{-1}^i < 0$) to finance leveraged capital positions ($k_{-1}^i > s_0^i k_{-1}$). A monetary easing (negative $\eps_0^m$) generates inflation and raises asset prices, so both terms in \eqref{eq:redistribution} make $ds_0^i/d\eps_0^m < 0$ for these households, so their wealth share rises with easing. Since the minus sign in definition \eqref{eq:mrc} converts the negative covariance of $ds_0^i/d\eps_0^m$ with $\MPR_0^i$ into a positive number, $\MRC > 0$.

\subsection{Real Effects of Risk Premium Changes}
\label{sec:rp_real}

\begin{proposition}[Real effects]
\label{prop:real}
Conditional on the real interest rate, a decline in the risk premium raises investment, consumption, and output:
\begin{align}
\dder{k_0}{\eps_0^m} &= -\frac{k_0}{1-\alpha+\chi_x}\left[\dder{}{\eps_0^m}\bigl(\E_0[r_1^k - r_1]\bigr) + \dder{\log(1+r_1)}{\eps_0^m}\right], \label{eq:dk} \\
\dder{c_0}{\eps_0^m} &= \frac{1-\beta}{\beta}\,q_0(1+\chi_x)\,\dder{k_0}{\eps_0^m}, \label{eq:dc} \\
\dder{y_0}{\eps_0^m} &= \dder{c_0}{\eps_0^m} + q_0\Bigl(1 + \chi_x\frac{x_0}{k_0}\Bigr)\dder{k_0}{\eps_0^m}. \label{eq:dy}
\end{align}
\end{proposition}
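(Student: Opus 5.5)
The plan is to derive the three expressions from the producer's optimality conditions, the period-1 asset pricing relation, and the household consumption rule implied by unitary IES. I will take them in the order investment, consumption, output, since each uses the one before.

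\emph{Step 1: investment.} From the period-0 profit function \eqref{eq:profits0}, I will differentiate with respect to $x_0$ at $k_0 = (1-\delta_0)k_{-1} + x_0$. This gives a Tobin's $q$ condition in which $q_0$ is a power function of $k_0/k_{-1}$ with exponent $\chi_x$. Up to the linearization used throughout, this is $q_0 \propto (k_0/k_{-1})^{\chi_x}$, so $\dd\log q_0 = \chi_x\,\dd\log k_0$, where $k_{-1}$ is predetermined. The capital return is $1+r_1^k = \pi_1/q_0$ with $\pi_1 = \exp(\eps_1^z)\alpha k_0^{\alpha-1}$, taking the marginal product as the rental rate. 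Hence
\[
\E_0\log(1+r_1^k) = \text{const} + (\alpha-1)\log k_0 - \log q_0 .
\]
I will write this as the sum of the risk premium and $\log(1+r_1)$, using Proposition~\ref{prop:riskpremium}. Totally differentiating with respect to $\eps_0^m$ then gives
\[
-(1-\alpha+\chi_x)\,\dd\log k_0 = \dd\,\mathrm{RP} + \dd\log(1+r_1),
\]
which is \eqref{eq:dk}. The constant $-\sigma^2/2$ correction drops out because $\sigma$ is fixed. "Conditional on the real rate" just means the second bracketed term is treated as given.

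\emph{Step 2: consumption.} With unitary IES and log period-0 utility weight $(1-\beta)$, each household consumes $c_0^i = (1-\beta)n_0^i$ and saves $a_0^i = \beta n_0^i$. Aggregating and using bond market clearing, aggregate savings equal the value of capital: $\int a_0^i\,di = q_0 k_0$. Therefore
\[
c_0 = \frac{1-\beta}{\beta}\,q_0 k_0 .
\]
Differentiating and using $\dd\log q_0 = \chi_x\,\dd\log k_0$ from Step 1 yields $\dd c_0 = \frac{1-\beta}{\beta}\,q_0(1+\chi_x)\,\dd k_0$, which is \eqref{eq:dc}.

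\emph{Step 3: output.} Goods market clearing gives $y_0 = c_0 + (k_0/k_{-1})^{\chi_x}x_0$. I will differentiate the investment-cost term in $k_0$. Using $\dd x_0 = \dd k_0$ and the $q$-relation, this produces $q_0\,\dd k_0$ from the change in $x_0$, plus $q_0\chi_x (x_0/k_0)\,\dd k_0$ from the change in the adjustment factor. That is $q_0(1+\chi_x x_0/k_0)\,\dd k_0$, which gives \eqref{eq:dy}. Since all coefficients are positive, a fall in the risk premium, holding the real rate fixed, raises $k_0$, then $c_0$, then $y_0$.

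\emph{Main obstacle.} The delicate step is the exact form of the $q$ relation and of $\pi_1$. I need the correct first-order condition for $x_0$ given the adjustment-cost specification, and possibly $q_0 = (1+\chi_x)(k_0/k_{-1})^{\chi_x}$ if $x_0$ is substituted out. I also need to check that the stated coefficients $1+\chi_x$ in \eqref{eq:dc} and $1+\chi_x x_0/k_0$ in \eqref{eq:dy} come out consistently. I will first try taking $q_0 = (k_0/k_{-1})^{\chi_x}$ and treating the dependence through $x_0$ as pinned down by the producer's optimality. If the constants do not match, I will recompute with the full derivative of $(k_0/k_{-1})^{\chi_x}x_0$. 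The exponent $1-\alpha+\chi_x$ is robust to either specification.
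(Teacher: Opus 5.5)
Your proposal is correct and follows the paper's proof essentially step for step. The $q$-relation together with the dividend yield gives $\E_0\log(1+r_1^k)=\text{const}-(1-\alpha+\chi_x)\log k_0$. The unitary-IES rule $c_0^i=(1-\beta)n_0^i$, combined with aggregate savings equal to $q_0k_0$, gives $c_0=\frac{1-\beta}{\beta}q_0k_0$, and differentiating goods market clearing with $\dd x_0=\dd k_0$ gives \eqref{eq:dy}. Your first choice, $q_0=(k_0/k_{-1})^{\chi_x}$ with the producer taking $k_0$ in the adjustment factor as given (as the paper specifies), is exact and is what delivers both coefficients $1+\chi_x$ and $1+\chi_x x_0/k_0$, so the fallback in your last paragraph is unnecessary.
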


\begin{proof}
See Appendix~\ref{app:proof_real}.
\end{proof}

\subsection{Welfare Decomposition: Two Orthogonal Channels}
\label{sec:welfare_decomp}

Consider social welfare $W = \int \alpha^i \log v_0^i \, di$ with Pareto weights $\{\alpha^i\}$.

\begin{proposition}[Decomposition]
\label{prop:decomp}
At the limit of small aggregate risk, the welfare effect of monetary policy decomposes as
\begin{equation}
\label{eq:decomp}
\dder{W}{\eps_0^m} = \underbrace{\Gamma_c \cdot \int \dder{n_0^i}{\eps_0^m} \, \MPC_0^i \, di}_{\text{consumption channel}} \;+\; \underbrace{\Gamma_k \cdot \gbar \, \sigma^2 \cdot \MRC}_{\text{risk premium channel}},
\end{equation}
where $\Gamma_c, \Gamma_k$ depend on structural parameters and Pareto weights. The consumption channel depends on the covariance of exposures with MPCs; the risk premium channel depends on MRC. These are additive and orthogonal.
\end{proposition}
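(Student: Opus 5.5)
The plan is to differentiate the social welfare function $W = \int \alpha^i \log v_0^i\,di$ with respect to $\eps_0^m$ household by household. Each household's response is then split into an intertemporal margin and a portfolio margin, and the envelope theorem is used to kill first-order terms coming from each household's own optimal choices.

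Under \eqref{eq:preferences} with unitary IES, $\log v_0^i$ is additively separable in $\log c_0^i$, labor disutility, and the certainty equivalent of $c_1^i$. Labor $\ell_0$ is demand-determined under the rigid wage, so it is common across households. Using the budget constraints \eqref{eq:bc0}--\eqref{eq:bc1}, I will write $\dd\log v_0^i/\dd\eps_0^m$ as the sum of two pieces:
\begin{itemize}[label={}]
\item[(a)] a wealth term, $\lambda^i\,\dd n_0^i/\dd\eps_0^m$, where $\lambda^i$ is the marginal value of period-$0$ wealth;
\item[(b)] general-equilibrium price terms: the effect of $q_0$, $r_1$ and the risk premium on the return on savings $a_0^i$, plus the labor-wedge effect through $\ell_0$.
\end{itemize}

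For (a), note that with log period utility, consumption out of wealth is $c_0^i = (1-\beta)n_0^i$. The MPC is therefore $\MPC_0^i = 1-\beta$ to leading order. Summed with Pareto weights, term (a) gives $\Gamma_c\int (\dd n_0^i/\dd\eps_0^m)\MPC_0^i\,di$, where $\Gamma_c$ absorbs the weights $\alpha^i\lambda^i$. To write this as a covariance, I will use the fact that redistribution sums to zero (Assumption~\ref{ass:redistrib}). With weighted marginal utilities, this needs either equalized $\alpha^i\lambda^i$ (planner weights that make the competitive allocation the reference point) or else the residual is folded into $\Gamma_c$. I will adopt the former normalization.

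For (b), the key is that in the small-$\sigma$ limit the only first-order price effect that does not cancel across households is the risk premium. Here is why the safe-rate and $q_0$ revaluation terms cancel. In aggregate they are transfers between bond holders and bond issuers, since bonds are in zero net supply. Their aggregate real component, through $k_0$, enters only via the labor wedge, which is proportional to the output response. By Proposition~\ref{prop:real}, that output response is in turn driven by the risk premium change, because the safe-rate piece is common. The risk premium piece is then substituted using Proposition~\ref{prop:suffstat}, $\dd\,\E_0[r_1^k-r_1]/\dd\eps_0^m = \gbar\sigma^2\MRC$. Mapping this through \eqref{eq:dk}--\eqref{eq:dy} into the labor wedge and investment valuation gives $\Gamma_k\gbar\sigma^2\MRC$, with $\Gamma_k$ collecting $k_0/(1-\alpha+\chi_x)$, $q_0(1+\chi_x)$ and the wedge $w_0 - \bar\theta\ell_0^{1/\theta}c_0$.

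Additivity and orthogonality then follow for two reasons. First, term (a) involves only $\partial c/\partial n$. Second, term (b) involves only $\partial k/\partial n$ relative to $\partial a/\partial n$, that is $\MPR$, and it enters through the redistribution covariance in \eqref{eq:drp}. Cross terms are of order $\sigma^2\times$ (wealth effects), which are second order in the small-risk limit, or they vanish because $\MPC$ is constant under unitary IES.

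The main obstacle is (b): showing that the safe-rate and revaluation terms net out, leaving no residual that is neither a consumption nor a risk-premium term. My first attempt is to impose $\E_0[c_1^{i\,-\gamma^i}(r_1^k-r_1)]=0$ for every household and use zero net bond supply, so that these effects aggregate to zero up to $O(\sigma^2)$ corrections. Whatever common safe-rate effect remains can be absorbed into $\Gamma_c$, since it acts proportionally on wealth.
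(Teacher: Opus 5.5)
\paragraph{Step (b) does not go through as written.} This is the step where the two-channel form has to be earned.

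\emph{The safe rate drives the real effects.} In \eqref{eq:dk} of Proposition~\ref{prop:real}, $\dd\log(1+r_1)/\dd\eps_0^m$ enters with the same coefficient as the risk premium change. Since $P_1=P_0$, the safe real rate is $\log(1+i_0)$, which the shock moves at order one. The risk premium, by contrast, moves by only $\gbar\sigma^2\MRC=O(\sigma^2)$ (Proposition~\ref{prop:suffstat}). In the small-risk limit the responses of $k_0$, $y_0$ and $\ell_0$ are therefore driven mainly by the safe rate, not by the risk premium. Under your Negishi weights, zero net bond supply does cancel the bond-payoff transfers, but it does nothing to the effect of $r_1$ on investment and employment. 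What remains is a term of the form (labor wedge)$\times$(safe-rate-driven $\dd\ell_0/\dd\eps_0^m$). It is neither an MPC-weighted exposure nor proportional to $\MRC$. You cannot absorb it into $\Gamma_c$ on the grounds that it ``acts proportionally on wealth'': the disutility part $\bar\theta\ell_0^{1/\theta}$ never enters $n_0^i$. $\Gamma_c$ would then have to be a ratio of shock responses, which empties the statement of content.

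\emph{Terms (a) and (b) overlap.} $n_0^i$ already contains $w_0\ell_0$ and $(\pi_0+(1-\delta_0)q_0)k_{-1}^i$. So the labor-income and revaluation consequences of a risk-premium-driven change in $k_0$ are already in (a). Routing the risk premium through \eqref{eq:dk}--\eqref{eq:dy} into a $\Gamma_k$ built from $w_0$ and $q_0(1+\chi_x)$ counts them a second time. The same problem affects the $q_0$ term. The $q_0$ effect on the return to savings is worth $-k_0^i\,\dd q_0$ at date 0. It is not a bond transfer, and it aggregates to $-k_0\,\dd q_0$. It nets out only against the old-capital revaluation and the producer revenue $q_0x_0$, both of which sit inside $\dd n_0^i$ in (a). You cannot both keep (a) intact as the consumption channel and have the price terms in (b) cancel on their own.

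\emph{The orders are inconsistent.} The term you want to isolate, $\Gamma_k\gbar\sigma^2\MRC$, is itself $O(\sigma^2)$. Residuals that cancel ``up to $O(\sigma^2)$ corrections'', and cross terms of order $\sigma^2$ times wealth effects, are of the same order as the risk premium channel. They cannot be dropped; you would need exact cancellation at $O(\sigma^2)$.

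\paragraph{How the paper proceeds.} The paper's proof never attempts these cancellations, because its split is largely definitional. It differentiates $W$ into two pieces.
\begin{itemize}
\item The period-0 consumption term $(1-\beta)\int(\alpha^i/n_0^i)(\dd n_0^i/\dd\eps_0^m)\,di$ is \emph{defined} to equal $\Gamma_c\int(\dd n_0^i/\dd\eps_0^m)\MPC_0^i\,di$. Arbitrary Pareto weights and all wealth effects, including labor income and revaluations, are absorbed there, with no Negishi normalization.
\item The period-1 certainty-equivalent term is expanded to second order in $\sigma$ and identified with $\dd(\gbar\sigma^2)/\dd\eps_0^m=\gbar\sigma^2\MRC$.
\end{itemize}
In the paper, then, the risk premium channel is the direct effect of the changed risk--return tradeoff on period-1 utility. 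It is not an indirect effect through output and the labor wedge. If you keep your envelope route, carry the safe-rate/output effect as a separate term rather than forcing it into $\Gamma_c$. The paper's own target-criterion proof does this with $\Gamma_x\hat{y}_0$ (Appendix~\ref{app:proof_target}).
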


\begin{proof}
See Appendix~\ref{app:proof_decomp}.
\end{proof}

When MPCs are identical, as in our baseline, only the risk premium channel remains. When MPRs are identical, only the consumption channel operates. In general, the planner must manage both.

\subsection{Optimal Policy: The Risk Premium Wedge}
\label{sec:optimal}

The Ramsey planner chooses $P_0$ to maximize $W$ subject to all competitive equilibrium conditions.

\begin{proposition}[Optimal policy target criterion]
\label{prop:target}
The planner's optimal policy satisfies
\begin{equation}
\label{eq:target}
 \Gamma_\pi \cdot \pi_0 = \Gamma_x \cdot \hat{y}_0 + \Gamma_{rp} \cdot \MRC \cdot \bigl(\gbar - \gstar\bigr)\sigma^2, 
\end{equation}
where $\pi_0 = \log(P_0/P_{-1})$, $\hat{y}_0$ is the output gap, $\gstar$ is the socially optimal effective risk aversion, and $\Gamma_\pi, \Gamma_x, \Gamma_{rp} > 0$.
\end{proposition}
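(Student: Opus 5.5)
We treat the planner's problem as a primal Ramsey problem with the single instrument $P_0$: every competitive-equilibrium object (the wealth shares $s_0^i$, $\gbar$, the risk premium, $k_0$, $c_0$, $\ell_0$, $y_0$) is a function of $P_0$. We then set $\dd W/\dd P_0 = 0$. Since $\pi_0=\log(P_0/P_{-1})$ and the Taylor rule \eqref{eq:taylor} maps $\eps_0^m$ one-for-one into $P_0$ (with $\dd i_0/\dd\eps_0^m>0$), we may differentiate with respect to $\eps_0^m$ instead. This lets us reuse Propositions~\ref{prop:suffstat}, \ref{prop:real} and \ref{prop:decomp} directly.

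\textbf{Step 1: split the marginal welfare effect.} Writing $\log v_0^i$ from \eqref{eq:preferences} and using the envelope theorem on each household's problem, $\dd W/\dd\eps_0^m$ has three parts. The first is the direct wealth effects $\int \alpha^i (\partial \log v_0^i/\partial n_0^i)(\dd n_0^i/\dd\eps_0^m)\,di$. The second is the labor wedge term $\int\alpha^i[(1-\beta)w_0/c_0^i-\bar\theta\ell_0^{1/\theta}]\,\dd\ell_0/\dd\eps_0^m\,di$, which is nonzero because $W_0$ is rigid. The third is the pecuniary effect of prices $(q_0, r_1)$ that households take as given.

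We handle the first part as follows. With identical MPCs (unit IES, baseline), Proposition~\ref{prop:decomp} makes it a pure transfer term. At equal-marginal-utility Pareto weights it drops out; otherwise it is absorbed into $\Gamma_x$.

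For the labor wedge term, we linearize around the flexible-wage allocation. This gives a term proportional to $\hat y_0$: the wedge $(1-\beta)w_0/c_0-\bar\theta\ell_0^{1/\theta}$ is first order in $\hat y_0$ and vanishes at $\hat y_0=0$. It multiplies $\dd\ell_0/\dd\eps_0^m$, which via the production function and Proposition~\ref{prop:real} is tied to $\dd y_0/\dd\eps_0^m$.

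The inflation cost enters through the term we collect as $\Gamma_\pi\pi_0$. Because the period-1 price is pinned at $P_1=P_0$, the only direct price-level effects are debt deflation on $B_{-1}^i$ and the distortion of $w_0$. Expanding to second order in $\pi_0$ around $P_0=P_{-1}$ produces a loss quadratic in $\pi_0$, whose derivative is linear in $\pi_0$.

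\textbf{Step 2: the risk-premium channel.} The pecuniary term is where the wedge comes from. A perturbation that changes $\gbar$ changes the risk premium by $\sigma^2\,\dd\gbar$, hence $k_0$ via \eqref{eq:dk}, and $c_0,y_0$ via \eqref{eq:dc}--\eqref{eq:dy}.

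We define $\gstar$ as the value of $\gbar$ at which the planner's marginal value of reallocating capital across agents, holding aggregate savings fixed, is zero. Equivalently, $\gstar$ is the risk aversion at which the private portfolio first-order conditions \eqref{eq:portfolio_approx} coincide with the planner's, once the effect of $k^i$ on the common risk premium is internalized.

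A first-order expansion of the planner's marginal value of risk-bearing around $\gstar$ then gives a term $\Gamma\cdot(\gbar-\gstar)\sigma^2\cdot \dd\gbar/\dd\eps_0^m$. By Proposition~\ref{prop:suffstat} (equivalently \eqref{eq:general_suffstat}), $\dd\gbar/\dd\eps_0^m=\gbar\,\MRC$. We absorb $\gbar$ and $\dd i_0/\dd\eps_0^m$ into $\Gamma_{rp}$. Setting the sum of the three parts to zero and dividing by the coefficient on $\pi_0$ yields \eqref{eq:target}.

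\textbf{Signs and the main obstacle.} Positivity of $\Gamma_\pi,\Gamma_x$ follows from concavity of the inflation loss and from $\dd y_0/\dd\eps_0^m<0$ (Proposition~\ref{prop:real} together with $\MRC>0$). Positivity of $\Gamma_{rp}$ follows from $\partial k_0/\partial(\text{risk premium})<0$, i.e. $k_0/(1-\alpha+\chi_x)>0$, and from the welfare value of capital being positive when $\gbar>\gstar$.

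The main obstacle is the pecuniary term. Showing that it collapses to the product $\MRC\cdot(\gbar-\gstar)\sigma^2$, rather than a more general cross-sectional expression, requires both the small-$\sigma$ limit (so that $\MPR_0^i=\gbar/\gamma^i$ by \eqref{eq:mpr_limit}) and a clean characterization of $\gstar$. I would first establish, in the Davila--Korinek style, that the competitive portfolio first-order conditions differ from the planner's by a term common to all agents and proportional to $(\gbar-\gstar)\sigma^2$. Once that holds, the distributional terms aggregate through the covariance in \eqref{eq:mrc}. If that common-wedge property fails for general Pareto weights, I would restrict to weights that equalize marginal utilities of $c_0$ and treat the remainder as a higher-order term in $\sigma$.
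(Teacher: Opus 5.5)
Your argument has the same skeleton as the paper's proof. You set $\dd W/\dd\eps_0^m=0$, split it (under identical MPCs) into inflation, output-gap and risk-premium pieces, substitute $\dd\gbar/\dd\eps_0^m=\gbar\,\MRC$ from Proposition~\ref{prop:suffstat}, treat the welfare value of $\gbar$ as proportional to $\gbar-\gstar$, and absorb constants into the $\Gamma$'s. The paper does this essentially by definition: $\Gamma_\pi\equiv-(\partial W/\partial\pi_0)(\dd\pi_0/\dd\eps_0^m)^{-1}$, $\Gamma_x$ is defined analogously, and $\gstar$ is used as the planner's preferred value of $\gbar$, i.e.\ the zero of $\partial W/\partial\gbar$. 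With that reading of $\gstar$, your main obstacle disappears. Linearizing $\partial W/\partial\gbar$ about its zero and multiplying by $\gbar\,\MRC$ gives the product of $\MRC$ and $\gbar-\gstar$ at once, with remaining factors absorbed into $\Gamma_{rp}$. No Davila--Korinek common-wedge lemma is needed. Your difficulty arises because you pin $\gstar$ down on the portfolio-reallocation margin (capital moved across agents at fixed savings), whereas $P_0$ moves $\gbar$ only through the savings shares $s_0^i$. Connecting those two margins is genuine work the paper never does; the price of its shortcut is that \eqref{eq:gstar} is asserted rather than derived.

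Two of the steps you add beyond the paper fail as written.

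\emph{Absorbing the transfer term into $\Gamma_x$.} This fallback is not legitimate for the Pareto-weighted term $(1-\beta)\int(\alpha^i/n_0^i)(\dd n_0^i/\dd\eps_0^m)\,di$. When $\alpha^i/n_0^i$ varies across agents, debt deflation and revaluation move this term even at $\hat y_0=0$, so it is not a multiple of $\hat y_0$. It is a separate redistribution wedge. Keep only your first branch (weights with $\alpha^i\propto n_0^i$), or, as the paper does, assume the consumption-redistribution channel away.

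\emph{The quadratic inflation loss.} This step double counts. Your own envelope decomposition shows the two-period economy has no separate inflation margin. With $W_0=W_{-1}$ and no adjustment cost, $P_0$ affects welfare only through transfers and pecuniary effects, and through $w_0$ and hence $\ell_0$. The two sources you cite, debt deflation on $B_{-1}^i$ and the distortion of $w_0$, are exactly the transfer term you dropped and the labor-wedge term you already mapped into $\hat y_0$. Indeed, labor demand gives $\log y_0=\tfrac{1-\alpha}{\alpha}\log P_0+\text{const}$, so $\hat y_0$ is affine in $\pi_0$. The split between $\Gamma_\pi\pi_0$ and $\Gamma_x\hat y_0$ is therefore a normalization, not two independent welfare effects. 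Present it that way, or adopt the paper's definitional $\Gamma_\pi$, rather than claiming an independent inflation cost.
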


\begin{proof}
See Appendix~\ref{app:proof_target}.
\end{proof}

The first two terms reproduce the standard New Keynesian prescription. The third term, the risk premium wedge, is new. When $\gbar > \gstar$ and $\MRC > 0$, the wedge is positive: the planner tolerates inflation to compress risk premia. The economic logic is that inflation, by eroding the real value of nominal bonds and raising asset prices, transfers wealth from conservative savers (low MPR) to leveraged risk-takers (high MPR). This redistribution lowers effective risk aversion and compresses the equity premium, stimulating investment and output through the Tobin's $q$ channel. The planner accepts the cost of deviating from price stability because the resulting improvement in risk allocation more than compensates.

\begin{corollary}[Divine coincidence breakdown]
\label{cor:divine}
When $\MRC \neq 0$ and $\gbar \neq \gstar$, the planner chooses $\pi_0 \neq 0$ even when $\hat{y}_0 = 0$.
\end{corollary}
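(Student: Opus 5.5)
The corollary is a direct consequence of the target criterion in Proposition~\ref{prop:target}, so the plan is to argue by contradiction from \eqref{eq:target}. Suppose the planner's optimum has $\hat{y}_0 = 0$. Substituting into \eqref{eq:target} gives
\begin{equation*}
\Gamma_\pi \cdot \pi_0 = \Gamma_{rp} \cdot \MRC \cdot \bigl(\gbar - \gstar\bigr)\sigma^2 .
\end{equation*}
Since $\Gamma_\pi > 0$, we can divide and obtain $\pi_0 = (\Gamma_{rp}/\Gamma_\pi)\,\MRC\,(\gbar - \gstar)\sigma^2$. The right-hand side is a product of factors that are each nonzero: $\Gamma_{rp} > 0$ by Proposition~\ref{prop:target}, $\MRC \neq 0$ and $\gbar \neq \gstar$ by hypothesis, and $\sigma^2 > 0$ because TFP in period~1 is genuinely stochastic. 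Hence $\pi_0 \neq 0$.

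To make this more than a formality, I will check two points. First, the coefficients $\Gamma_\pi,\Gamma_x,\Gamma_{rp}$ must be evaluated at the optimum and must not themselves vanish there. In particular, $\Gamma_{rp}$ should not be proportional to $\pi_0$. I would confirm this from the construction in Appendix~\ref{app:proof_target}. There $\Gamma_{rp}$ comes from the Tobin's-$q$ transmission in Proposition~\ref{prop:real}, through the factor $k_0/(1-\alpha+\chi_x)$ and the marginal welfare value of capital, and these are strictly positive regardless of $\pi_0$. Meanwhile $\Gamma_\pi$ comes from the curvature of the inflation cost induced by the rigid wage.

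Second, ``divine coincidence'' properly means that stabilizing inflation and the output gap are jointly attainable and optimal. So I would also note the contrapositive: the allocation with $\pi_0 = 0$ and $\hat{y}_0 = 0$ violates \eqref{eq:target} whenever $\MRC(\gbar-\gstar) \neq 0$, and therefore cannot be the Ramsey optimum.

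The main obstacle I anticipate is the sign and nonvanishing of $\Gamma_{rp}$, together with the implicit dependence of $\MRC$, $\gbar$ and $\gstar$ on $P_0$. The criterion is a fixed-point condition, not an explicit formula. I would handle this by treating \eqref{eq:target} as holding at the optimal allocation, with all objects evaluated there. The argument then needs only that the hypotheses $\MRC\neq 0$ and $\gbar\neq\gstar$ refer to the optimal allocation itself. For the sign, I would appeal to Proposition~\ref{prop:suffstat} and Proposition~\ref{prop:real}: these show that the marginal welfare effect of $\eps_0^m$ through the risk premium is $\gbar\sigma^2\MRC$ times a strictly positive factor, which pins down $\Gamma_{rp}>0$.
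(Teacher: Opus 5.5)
Your argument is correct and follows the paper's own route: the corollary is read off the target criterion \eqref{eq:target} by setting $\hat{y}_0 = 0$ and using $\Gamma_\pi, \Gamma_{rp} > 0$, $\sigma^2 > 0$, and the two hypotheses. Your extra checks on $\Gamma_{rp}$ are not needed once Proposition~\ref{prop:target} is taken as given, and Appendix~\ref{app:proof_target} does not actually construct $\Gamma_{rp}$ from the Tobin's-$q$ block as you suggest; it simply asserts $\Gamma_{rp} > 0$.
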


This is a new source of divine coincidence breakdown, orthogonal to markup shocks and MPC heterogeneity. The wedge vanishes if: (i) MPRs are identical ($\MRC = 0$); (ii) monetary policy does not redistribute; or (iii) risk premia are at their efficient level ($\gbar = \gstar$).

The socially optimal effective risk aversion $\gstar$ is determined by the condition that the planner would not wish to further reallocate capital across agents at the margin. With utilitarian weights $\alpha^i = 1$:
\begin{equation}
\label{eq:gstar}
\gstar = \left[\int \frac{1 / c^{i,*}}{\int 1 / c^{j,*}\,dj}\cdot\frac{1}{\gamma^i}\,di\right]^{-1},
\end{equation}
where $c^{i,*}$ is the planner's preferred consumption allocation. For the discrete-group quantitative model, the integral is replaced by $\sum_i \lambda^i$, where $\lambda^i$ is the measure of group $i$. Generically $\gstar < \gbar$, because the competitive equilibrium under-insures high-MPR agents: when household $i$ saves more in capital, it marginally lowers the risk premium faced by all other households, a pecuniary externality that the competitive equilibrium does not internalize.

The welfare gain from optimal policy has a simple analytical expression to second order:
\begin{equation}
\label{eq:welfare_approx}
\Delta W \approx \frac{1}{2}\,\frac{\Gamma_{rp}^2}{\Gamma_\pi}\,\E\!\left[\MRC_t^2\right]\cdot\bigl(\gbar - \gstar\bigr)^2\sigma^4.
\end{equation}
The gain is proportional to the expected squared MRC (measuring how strongly the interest rate moves risk premia), the squared gap between actual and efficient risk-bearing ($\gbar - \gstar$), and the variance of the risky asset ($\sigma^4$). All three must be nonzero for the risk premium channel to generate welfare gains. This expression also shows that welfare gains are concentrated in states where MRC is large, rationalizing the finding that conditional gains in crisis states (0.35\% CE) far exceed unconditional gains (0.13\%).

\subsection{The Separation Theorem}
\label{sec:separation}

Suppose the planner also controls a portfolio tax $\tau_0^k$ on capital returns, so that $i$'s portfolio condition becomes $\E_0 [ c_1^{i\, -\gamma^i} (r_1^k - r_1 - \tau_0^k) ] = 0$.

\begin{proposition}[Separation]
\label{prop:separation}
When both $P_0$ and $\tau_0^k$ are available: \textup{(a)} the optimal $P_0$ satisfies $\Gamma_\pi \pi_0 = \Gamma_x \hat{y}_0$ (no risk premium wedge); \textup{(b)} $\tau_0^k$ closes the gap $\gbar - \gstar$; \textup{(c)} when $\tau_0^k \geq 0$, the wedge is reduced but not eliminated whenever $\gbar > \gstar$.
\end{proposition}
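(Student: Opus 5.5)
The plan is to redo the Ramsey problem behind Proposition~\ref{prop:target} with two instruments, $(P_0,\tau_0^k)$, and compare the two first-order conditions. The argument has three steps: add the tax to the portfolio block, show that it moves only the risk-premium channel, and then use the tax first-order condition to remove the wedge from the price-level condition.

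\textbf{Step 1: the portfolio block with the tax.} Repeating the second-order expansion behind \eqref{eq:portfolio_approx} with the modified Euler equation gives
\[
\omega_0^i \approx \frac{1}{\gamma^i}\cdot\frac{\E_0\log(1+r_1^k)-\log(1+r_1)+\tfrac12\sigma^2-\tau_0^k}{\sigma^2}.
\]
Aggregating as in Proposition~\ref{prop:riskpremium}, the gross premium equals $\gbar\sigma^2+\tau_0^k$, while the after-tax premium agents face is still $\gbar\sigma^2$. The required return entering the investment condition \eqref{eq:dk} is the gross premium, so $\tau_0^k$ shifts capital one-for-one with the premium.

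The key structural claim follows from this. With lump-sum rebate, the tax does not enter goods market clearing, the rigid wage $W_0=W_{-1}$, or the labor margin, so it has no direct effect on $\pi_0$ or on the wage-rigidity block. Its only effects run through the premium and capital, so it also reaches $\hat y_0$ through \eqref{eq:dc}--\eqref{eq:dy}. I would establish this by writing the implementability constraints as (i) the nominal block, governed by $P_0$, and (ii) the portfolio and investment block, in which $\tau_0^k$ and $\gbar$ enter only through the gross premium.

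\textbf{Step 2: the two first-order conditions.} Following Appendix~\ref{app:proof_target}, the $P_0$ condition reads
\[
-\Gamma_\pi\pi_0+\Gamma_x\hat y_0+\Gamma_{rp}\,\MRC\,(\gbar-\gstar)\sigma^2 = 0.
\]
Here the last term equals the marginal welfare value of the premium, $\partial W/\partial \text{rp}$, times $\partial\,\text{rp}/\partial\eps_0^m$ from Proposition~\ref{prop:suffstat}.

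The $\tau_0^k$ condition is $\partial W/\partial \text{rp}\cdot \partial \text{rp}/\partial\tau_0^k=0$. Since $\partial\,\text{rp}/\partial\tau_0^k\neq0$, this gives $\partial W/\partial \text{rp}=0$.

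\textbf{Step 3: conclusions (a)--(c).} The marginal value $\partial W/\partial \text{rp}$ is proportional to $(\gbar-\gstar)$; this is exactly how $\gstar$ was defined in \eqref{eq:gstar}. So the tax condition implies that the effective distortion vanishes at the optimum, which is (b). Substituting this into the $P_0$ condition kills the wedge term and leaves $\Gamma_\pi\pi_0=\Gamma_x\hat y_0$, which is (a).

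For (c), I would set up Kuhn--Tucker conditions with the constraint $\tau_0^k\ge0$. If $\gbar>\gstar$, closing the gap requires a subsidy, so the constraint binds at $\tau_0^k=0$ with a positive multiplier. The $P_0$ condition then retains the wedge at its unconstrained-with-$P_0$-only size, unless a slack interior portion exists. Arguing "reduced" requires comparing regimes: in states where the constraint is slack the wedge is zero, and in binding states it is at most the single-instrument wedge. So the expected wedge is weakly smaller, and it is nonzero whenever $\gbar>\gstar$ and $\MRC\neq0$.

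\textbf{Main obstacle.} The hardest step is justifying that $\partial W/\partial\text{rp}$ is exactly proportional to $\gbar-\gstar$ with the same constant that appears in the $P_0$ condition, and that $\tau_0^k$ has no separate effect on the output-gap term. The concern is that output responds to the premium through \eqref{eq:dk}--\eqref{eq:dy}, so $\tau_0^k$ also moves $\hat y_0$. My first attempt would be to write welfare as a function of $(P_0,\text{rp})$ only, using the envelope theorem on household problems with lump-sum rebate. The tax then acts as a free choice of rp, and its first-order condition sets the total derivative in rp to zero, including the output-gap effect through investment. I would then redefine $\gstar$ as the rp level solving that condition, which by construction removes the wedge from the $P_0$ condition.
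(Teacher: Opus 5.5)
Your proposal is correct and follows the same route as the paper: two first-order conditions, the tax condition setting the marginal welfare value of the risk premium to zero so the wedge drops out of the $P_0$ condition, and a Kuhn--Tucker argument with $\tau_0^k$ stuck at zero when $\gbar>\gstar$ for (c). Your refinements are sound and go beyond the paper's proof, which simply asserts that the tax ``sets $\gbar=\gstar$ directly'' and leaves output untouched: in fact $\tau_0^k$ acts on the gross premium $\gbar\sigma^2+\tau_0^k$ rather than on $\gbar$ itself, and it does reach $\hat y_0$ through investment, so pinning down $\gstar$ through the tax condition is the right fix; and in binding states the wedge is exactly the single-instrument one, so ``reduced'' in (c) can only hold across states.
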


\begin{proof}
See Appendix~\ref{app:proof_separation}.
\end{proof}

The portfolio tax is the theoretical analog of macroprudential policy. Separation holds if and only if the macroprudential toolkit can close the gap between actual and efficient risk-bearing capacity. When macroprudential tools are blunt or one-sided, the burden falls partly on monetary policy.

\subsection{Time Consistency}
\label{sec:timecon}

\begin{proposition}[Risk premium inflation bias]
\label{prop:timecon}
Under discretion: \textup{(a)} average inflation exceeds the commitment level; \textup{(b)} the excess inflation is proportional to $\Var_i(\MPR_0^i)$; \textup{(c)} average risk premia are higher and investment lower than under commitment.
\end{proposition}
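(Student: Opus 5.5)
I will follow the Barro--Gordon template, taking the two-period economy of Section~\ref{sec:env} and comparing the Ramsey problem under commitment with a game under discretion. In that game the planner chooses $P_0$ after households' predetermined positions $(B_{-1}^i, k_{-1}^i)$ are fixed. The timing is as follows. Households form inflation expectations $\pi_0^e$ and choose the portfolios that determine their exposures, anticipating policy. The planner then chooses $\pi_0$ taking $\pi_0^e$ as given. Finally, in a discretionary equilibrium $\pi_0 = \pi_0^e$.

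\textbf{Step 1: the discretionary first-order condition.} I reuse the decomposition behind Proposition~\ref{prop:target}. The planner's FOC under discretion has the same three pieces: the inflation cost, the output-gap benefit, and the risk premium benefit. The difference is that the planner does not internalize how expected inflation feeds back into ex-ante portfolios and the pricing of nominal debt. Unexpected inflation redistributes through the debt-deflation and revaluation terms in \eqref{eq:redistribution}. By Propositions~\ref{prop:suffstat} and~\ref{prop:real}, this moves the risk premium by $\gbar\sigma^2\MRC$ and raises investment. Under commitment, the planner internalizes that anticipated inflation is priced into $1+i_{-1}$ and into leverage choices, so this redistributive gain nets to zero in expectation. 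The discretionary FOC therefore carries an extra positive term, which is the risk premium wedge evaluated at surprise inflation:
\[
\Gamma_\pi \pi_0^{d} = \Gamma_x \hat y_0 + \Gamma_{rp}\,\MRC\,(\gbar-\gstar)\sigma^2 .
\]
Imposing rational expectations and comparing with the commitment criterion gives $\pi_0^{d}-\pi_0^{c} = (\Gamma_{rp}/\Gamma_\pi)\,\MRC\,(\gbar-\gstar)\sigma^2 > 0$, using $\MRC>0$ from Section~\ref{sec:mrc} and $\gbar>\gstar$ from \eqref{eq:gstar}. This establishes (a).

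\textbf{Step 2: proportionality to $\Var_i(\MPR_0^i)$.} This step is the main obstacle, because $\MRC$ is a covariance between exposures and MPRs rather than a variance. My plan is to use the small-risk limit \eqref{eq:mpr_limit}, $\MPR_0^i = \gbar/\gamma^i$, together with the equilibrium structure of exposures. Leverage is linear in MPR: an agent's net bond position satisfies $b^i = a^i(1-\omega^i)$ with $\omega^i=\MPR^i$. By \eqref{eq:redistribution}, the exposure is then affine in $\MPR^i$,
\[
\frac{ds_0^i}{d\eps_0^m} = -\kappa\, s_0^i(\MPR_0^i - \overline{\MPR}),
\]
with $\kappa>0$ collecting $d\log P_0/d\eps_0^m$ and the capital revaluation. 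Substituting into \eqref{eq:mrc} gives $\MRC = \kappa\,\Var_i^{s}(\MPR_0^i)$, a savings-weighted variance. To first order around the symmetric point, the gap $\gbar-\gstar$ is $O(1)$ in the dispersion, since the externality exists even without heterogeneity. The bias is then proportional to $\Var_i(\MPR_0^i)$. I expect the delicate point to be this leading-order argument, namely ensuring that $\gbar-\gstar$ does not itself vanish with the dispersion. If it does, I will instead state proportionality with the factor $(\gbar-\gstar)$ absorbed into the constant.

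\textbf{Step 3: risk premia and investment.} High-MPR agents anticipate the temptation. Their debt is issued at a nominal rate $i_{-1}$ that prices in $\pi^d$, so their real gain in equilibrium is zero. Meanwhile the inflation cost $\Gamma_\pi(\pi^d)^2$ lowers output and period-0 dividends, and the reduced leverage lowers their wealth share $s_0^i$. By \eqref{eq:gbar}, a lower wealth share for low-$\gamma$ agents raises $\gbar$, so the risk premium $\gbar\sigma^2$ rises relative to commitment. By \eqref{eq:dk} (Assumption~\ref{ass:rp_real}), investment $k_0$ falls, which gives (c).
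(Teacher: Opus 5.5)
\textbf{Gap 1: your commitment benchmark is not the paper's.} Your Step 1 sets the commitment criterion to $\Gamma_\pi\pi_0^c=\Gamma_x\hat y_0$. The justification is that anticipated inflation is priced into $1+i_{-1}$ and so cannot redistribute. That is the textbook Barro--Gordon reading, but it is not this paper's benchmark.

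Here Proposition~\ref{prop:target} \emph{is} the commitment (Ramsey) criterion, and it contains the wedge. Corollary~\ref{cor:divine} says the committed planner chooses $\pi_0\neq 0$ for risk-premium reasons even when $\hat y_0=0$.

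Your timing also needs an ex ante stage that the two-period economy does not have: $i_{-1}$, $B_{-1}^i$ and $k_{-1}^i$ are initial conditions. Once you add that stage, Proposition~\ref{prop:target}, which is derived with those positions fixed, becomes your \emph{discretionary} FOC. Your no-wedge commitment criterion is never derived.

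The paper instead keeps $\Phi(\pi)\equiv\Gamma_{rp}\MRC(\pi)(\gbar(\pi)-\gstar)\sigma^2$ in both FOCs. What distinguishes commitment is that only the committed planner internalizes the portfolio response $d\omega_0^{i,e}/d\pi_0^e<0$ of high-MPR households. This adds the negative term $\Gamma_{rp}\,d\Phi/d\pi_0^e$ to its FOC. The bias is therefore $-(d\Phi/d\pi_0^e)/\Gamma_\pi$, a marginal portfolio-distortion term, not the level $\Phi/\Gamma_\pi$ that you obtain.

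\textbf{Gap 2: consequences for (c) and (b).} Because your (a) never uses a portfolio response, you lack the ingredient the paper uses for (c), namely $d\gbar/d\pi_0^e>0$. The substitutes you offer in Step 3 do not hold.
\begin{itemize}
\item With $W_0=W_{-1}$, a higher $P_0$ raises output, $\pi_0$ and $q_0$. This is the revaluation term in \eqref{eq:redistribution}, which shifts wealth \emph{toward} levered agents and lowers $\gbar$. So ``the inflation cost lowers output and period-0 dividends'' has the wrong sign in this economy; $\Gamma_\pi$ is a welfare weight, not a resource loss.
\item The reduced leverage is asserted, not derived. Your own Step 1 (inflation fully priced, zero real gain) gives households no reason to delever. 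In any case, $\gbar$ in \eqref{eq:gbar} depends on savings shares, not on leverage.
\end{itemize}

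For (b), your affine-exposure computation $\MRC=\kappa\Var_i^s(\MPR_0^i)$ is fine and matches the paper's $\MRC=-b\,\Var_i(\MPR_0^i)$. The problem is the other factor. Your bias is proportional to $\MRC\,(\gbar-\gstar)$, and your claim that $\gbar-\gstar$ stays $O(1)$ without heterogeneity is false. With common $\gamma^i$, \eqref{eq:gbar} and \eqref{eq:gstar} both equal $\gamma$, and the paper notes that the gap increases with dispersion. The product therefore vanishes faster than $\Var_i(\MPR_0^i)$. Your fallback of absorbing $\gbar-\gstar$ into the constant is not legitimate, because that ``constant'' itself depends on the dispersion.
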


\begin{proof}
See Appendix~\ref{app:proof_timecon}.
\end{proof}

The discretionary planner is tempted each period to redistribute toward high-MPR agents via unexpected inflation. Anticipating this, risk-tolerant households reduce their leverage or demand higher compensation. In equilibrium, risk premia are higher and investment lower than under commitment. The bias vanishes when $\Var_i(\MPR_0^i) = 0$, recovering the standard Barro-Gordon result.

The perverse outcome of part (c) deserves emphasis. The discretionary planner inflates to compress risk premia, but the anticipation of inflation causes risk premia to \textit{rise}. The economy ends up with both higher inflation \textit{and} higher risk premia than under commitment: the worst of both worlds. This is a financial-stability analog of the standard Barro-Gordon result, where the discretionary planner inflates to boost output but ends up with higher inflation and no output gain. The parallel suggests that the institutional design lessons from the inflation-targeting revolution, commitment through rules and delegation to an independent authority, apply equally to the financial stability dimension of monetary policy.

\subsection{Robustness and Generality}
\label{sec:robustness_theory}

The results above extend to environments with binding portfolio constraints or rules-of-thumb; heterogeneous idiosyncratic background risk; or heterogeneous subjective beliefs. In each case, households holding levered capital positions are those with high MPRs, ensuring $\MRC > 0$. See Appendix~\ref{app:robustness}.

This confirms that the two-period economy satisfies the structural Assumptions~\ref{ass:redistrib}--\ref{ass:rp_real} of Section~\ref{sec:general} under a wide range of microfoundations for portfolio heterogeneity. The general results, Propositions~\ref{prop:general_suffstat}--\ref{prop:general_decomp}, hold throughout. The two-period economy provides closed-form expressions for the coefficients ($\Gamma_\pi, \Gamma_x, \Gamma_{rp}$) and the efficient risk aversion ($\gstar$) that the general framework leaves as functions of model primitives. The quantitative models we use to solve the Ramsey problem are additional economies satisfying these properties; but the normative results constitute a general framework for optimal monetary policy in the presence of endogenous risk premia.

The analytical results make a sharp prediction: the risk premium channel of monetary transmission is governed by a single sufficient statistic, MRC, which is in principle measurable from household portfolio data. Appendix~\ref{sec:empirics} takes this prediction to the data, constructing MRC from every wave of the Survey of Consumer Finances from 1989 through 2022. MRC varies substantially over time, declining in recessions as asset price declines erode the wealth of leveraged agents. Using SVAR-IV methods, we show that the stock market's response to monetary shocks varies systematically with MRC: when MRC is high, the response is driven predominantly by lower future required returns; when MRC is low, the risk premium channel shuts down. A local projection with continuous MRC interaction confirms that a one-standard-deviation increase in MRC amplifies the impact equity response by approximately 0.9 percentage points. Off-the-shelf financial conditions indices explain 31 to 45 percent of MRC variation, so the optimal policy can be implemented using existing central bank monitoring infrastructure.

\section{Quantitative Analysis in the Infinite Horizon}
\label{sec:quant}

We extend the model to the infinite horizon and solve the Ramsey optimal policy problem. The infinite-horizon analysis serves three purposes beyond confirming the analytical results. First, it reveals the risk capacity trap: a region of the state space where the risk premium channel of monetary transmission collapses, invisible from any local analysis around the deterministic steady state. Second, it allows us to solve the full dynamic Ramsey problem, where the planner's forward-looking motive to preserve risk-bearing capacity generates qualitatively new features absent from the static target criterion. Third, it provides quantitative welfare comparisons across regimes, including the Ramsey commitment allocation, implementable simple rules, and the Markov-perfect discretionary equilibrium.

\subsection{Infinite-Horizon Environment}
\label{sec:quant:env}

A model period is one quarter. We describe each component of the environment in full detail.

\subsubsection{Household Preferences and Constraints}

A unit measure of households is organized into three groups $i \in \{a,b,c\}$ with measures $\lambda^i$ satisfying $\sum_i \lambda^i = 1$. To ensure a stationary wealth distribution despite permanent differences in risk aversion, we assume a perpetual youth structure: each household dies with probability $\xi$ per period and has no bequest motive. Within each group, aggregation into a representative household is possible given appropriate insurance of labor endowment risk (see Appendix~\ref{app:aggregation}).

Representative household $i$ maximizes recursive Epstein-Zin utility
\begin{equation}
\label{eq:ez_inf}
V_t^i = \left[(1-\beta)\left(c_t^i\,\Phi\!\left(\int_0^1 \ell_t^i(j)\,dj\right)\right)^{1-1/\psi} + \beta \left(\E_t\!\left[\bigl(\mu_{t,t+1}^i V_{t+1}^i\bigr)^{1-\gamma^i}\right]\right)^{\!\frac{1-1/\psi}{1-\gamma^i}}\right]^{\!\frac{1}{1-1/\psi}}\!,
\end{equation}
where $\psi$ is the intertemporal elasticity of substitution, $\gamma^i$ is the coefficient of relative risk aversion, $\beta$ is the discount factor, and $\mu_{t,t+1}^i$ is the ratio of the wealth of surviving households of type $i$ at $t+1$ to the average household of type $i$ at $t+1$ (characterized below). The disutility of labor is
\begin{equation}
\label{eq:labor_disutility}
\Phi(\ell) = \left(1 + \frac{1/\psi - 1}{1 - 1/\psi}\,\bar{\theta}^i\,\frac{\ell^{1+1/\theta}}{1+1/\theta}\right)^{\!\frac{1/\psi}{1-1/\psi}},
\end{equation}
consistent with balanced growth as in \citet{shimer2010}, where $\theta$ is the Frisch elasticity and $\bar{\theta}^i$ is a group-specific disutility parameter.

Each household is comprised of a unit measure of workers $j \in [0,1]$, each supplying a differentiated variety of labor, with full consumption insurance within the household. The household faces the budget constraint
\begin{multline}
\label{eq:bc_inf}
P_t c_t^i + B_t^i + Q_t k_t^i \leq (1 - \tau) \int_0^1 W_t(j)\ell_t^i(j)\,dj - \int_0^1 \text{AC}_t^W(j)\,dj \\
  + (1+i_{t-1})B_{t-1}^i + \bigl(\Pi_t + (1-\delta)Q_t\bigr)k_{t-1}^i e^{\varphi_t} + T_t^i,
\end{multline}
where $P_t$ is the price level, $B_t^i$ is the nominal bond position, $Q_t$ is the nominal price of capital, $k_t^i$ is the capital holding, $W_t(j)$ is the nominal wage for variety $j$, $i_{t-1}$ is the nominal interest rate, $\Pi_t$ is the nominal dividend per unit of capital, $\delta$ is the depreciation rate, and $T_t^i$ is lump-sum transfers. The labor subsidy $\tau = -1/(\varepsilon-1)$ eliminates the average wage markup. The Rotemberg wage adjustment cost for variety $j$ is
\begin{equation}
\label{eq:rotemberg}
\text{AC}_t^W(j) = \frac{\chi_W}{2}\,W_t\ell_t\left(\frac{W_t(j)}{W_{t-1}(j)e^{\varphi_t}} - 1\right)^{\!2},
\end{equation}
where $\chi_W$ controls the magnitude of adjustment costs, $W_t\ell_t$ is the economy-wide wage bill, and the reference wage adjusts for rare disasters at rate $e^{\varphi_t}$. These costs are paid to the government and rebated to households, so they affect the allocation only through wage dynamics. Within each group, labor is allocated across households according to the rule $\ell_t^\iota(j) = \bar{\ell}_t^\iota / (\int_{\iota':i(\iota')=i} \bar{\ell}_t^{\iota'}\,d\iota') \cdot \phi^i\ell_t(j)$, where $\bar{\ell}_t^\iota$ is the labor endowment and $\phi^i$ satisfies $\sum_i \lambda^i\phi^i = 1$.

Finally, households face a lower bound on capital
\begin{equation}
\label{eq:k_lb}
k_t^i \geq \underline{k}\,z_t,
\end{equation}
capturing components of the capital stock (such as housing) held for reasons beyond financial returns.

\subsubsection{Wage Setting}

A union represents each variety $j$ across households. Each period, it chooses $W_t(j)$ and $\ell_t(j)$ to maximize the utilitarian social welfare of union members, given the labor allocation rule above. A representative labor packer purchases varieties and combines them into a CES aggregate
\begin{equation}
\label{eq:labor_agg}
\ell_t = \left(\int_0^1 \ell_t(j)^{(\varepsilon-1)/\varepsilon}\,dj\right)^{\!\varepsilon/(\varepsilon-1)},
\end{equation}
which it sells at the aggregate wage $W_t$, earning profits $W_t\ell_t - \int_0^1 W_t(j)\ell_t(j)\,dj$. In a symmetric equilibrium, all varieties charge $W_t(j) = W_t$ and supply $\ell_t(j) = \ell_t$. The union's optimality condition yields a New Keynesian wage Phillips curve linking wage inflation to the labor wedge.

\subsubsection{Production and Investment}

A representative final goods producer hires $\ell_t$ units of the labor aggregate and rents $k_{t-1}e^{\varphi_t}$ units of capital from households to produce output
\begin{equation}
\label{eq:production}
y_t = (z_t\ell_t)^{1-\alpha}\bigl(k_{t-1}e^{\varphi_t}\bigr)^\alpha,
\end{equation}
where productivity $z_t$ is labor-augmenting (consistent with balanced growth) and $e^{\varphi_t}$ reflects the destruction of capital in a rare disaster. The firm also transforms consumption goods into new capital: producing $x_t$ units of new capital requires $\Psi(k_t, k_{t-1}e^{\varphi_t})\cdot x_t$ consumption goods, where
\begin{equation}
\label{eq:adj_cost}
\Psi(k_t, k_{t-1}e^{\varphi_t}) = \left(\frac{k_t}{k_{t-1}e^{\varphi_t}}\right)^{\!\chi_x}
\end{equation}
and $\chi_x \geq 0$ indexes investment adjustment costs. Taking $k_t$ as given, the firm's profits are
\begin{equation}
\label{eq:profits_inf}
\Pi_t k_{t-1}e^{\varphi_t} = P_t y_t - W_t\ell_t + Q_t x_t - P_t\left(\frac{k_t}{k_{t-1}e^{\varphi_t}}\right)^{\!\chi_x}\! x_t.
\end{equation}
From \eqref{eq:production} and \eqref{eq:profits_inf}, the firm's optimality conditions are: (i) labor demand $W_t = (1-\alpha)P_t y_t / \ell_t$; (ii) investment $Q_t = P_t(k_t/(k_{t-1}e^{\varphi_t}))^{\chi_x}$, implying that the real price of capital $q_t \equiv Q_t/P_t$ is increasing in the investment rate when $\chi_x > 0$.

\subsubsection{Aggregate Productivity}

Productivity follows a unit root process
\begin{equation}
\label{eq:productivity}
\log z_t = \log z_{t-1} + \eps_t^z + \varphi_t,
\end{equation}
where $\eps_t^z \sim \mathcal{N}(0,\sigma_z^2)$ is an iid normal shock. The rare disaster $\varphi_t$ equals $\varphi < 0$ with probability $p_t$ and zero otherwise. The disaster probability follows
\begin{equation}
\label{eq:disaster}
\log p_t - \log\bar{p} = \rho_p\bigl(\log p_{t-1} - \log\bar{p}\bigr) + \eps_t^p, \qquad \eps_t^p \sim \mathcal{N}(0,\sigma_p^2).
\end{equation}
The time-varying disaster probability generates a realistic level of the equity premium and volatility of expected returns. To ensure well-behaved dynamics upon a rare disaster, we assume that the disaster destroys capital in proportion to the decline in productivity \eqref{eq:productivity} and reduces the reference wage in the adjustment cost \eqref{eq:rotemberg} proportionally.

\subsubsection{Monetary and Fiscal Policy}

Monetary policy follows the Taylor rule
\begin{equation}
\label{eq:taylor_inf}
1 + i_t = (1 + \bar{\imath})\left(\frac{P_t}{P_{t-1}}\right)^{\!\phi}\! m_t,
\end{equation}
where the monetary policy shock $m_t$ follows
\begin{equation}
\label{eq:mp_shock}
\log m_t = \rho_m \log m_{t-1} + \eps_t^m, \qquad \eps_t^m \sim \mathcal{N}(0,\sigma_m^2).
\end{equation}
Fiscal policy has three components. First, the labor subsidy $\tau$ eliminates the average wage markup. Second, the government maintains a constant real bond position relative to productivity $B_t^g/(P_tz_t) = b^g$, financed by lump-sum taxes with household $i$ paying share $\nu^i$. Third, the government collects the wealth of dying households and endows newborn households of group $i$ with share $\bar{s}^i$ of the deceased's wealth.

\subsubsection{Market Clearing and Equilibrium}

Market clearing requires:
\begin{align}
\text{Goods:}\quad & \sum_i \lambda^i c_t^i + \left(\frac{k_t}{k_{t-1}e^{\varphi_t}}\right)^{\!\chi_x}\! x_t = y_t, \label{eq:mc_goods} \\
\text{Capital rental:}\quad & \sum_i \lambda^i k_{t-1}^i = k_{t-1}, \label{eq:mc_krental} \\
\text{Capital claims:}\quad & (1-\delta)\sum_i \lambda^i k_{t-1}^i e^{\varphi_t} + x_t = \sum_i \lambda^i k_t^i, \label{eq:mc_kclaims} \\
\text{Bonds:}\quad & \sum_i \lambda^i B_t^i + B_t^g = 0. \label{eq:mc_bonds}
\end{align}

\begin{definition}[Recursive competitive equilibrium]
\label{def:rce}
A recursive competitive equilibrium is a set of value functions $\{V^i(\Omega)\}$, household policy functions $\{c^i(\Omega), k^i(\Omega), B^i(\Omega)\}$, firm policies $\{\ell(\Omega), x(\Omega)\}$, prices $\{P(\Omega), Q(\Omega), W(\Omega)\}$, and a law of motion $\Omega' = \mathcal{T}(\Omega, \eps')$ such that: (i) each household $i$ maximizes \eqref{eq:ez_inf} subject to \eqref{eq:bc_inf} and \eqref{eq:k_lb}; (ii) the union sets wages optimally given \eqref{eq:rotemberg} and \eqref{eq:labor_agg}; (iii) the firm maximizes \eqref{eq:profits_inf}; (iv) monetary and fiscal policy follow \eqref{eq:taylor_inf}--\eqref{eq:mp_shock}; and (v) markets clear \eqref{eq:mc_goods}--\eqref{eq:mc_bonds} (including the capital rental and claims markets \eqref{eq:mc_krental}--\eqref{eq:mc_kclaims}).
\end{definition}

\medskip\noindent\textbf{State space.}\quad
We solve a stationary transformation of the economy by dividing all real variables (except labor) by $z_t$ and nominal variables by $P_tz_t$. In the transformed economy, the aggregate state in period $t$ is
\begin{equation}
\label{eq:state}
\Omega_t = \left(\frac{k_{t-1}}{z_{t-1}e^{\eps_t^z}},\;\; \frac{W_{t-1}}{P_tz_{t-1}e^{\eps_t^z}},\;\; s_t^a,\;\; s_t^c,\;\; p_t,\;\; m_t\right),
\end{equation}
where the financial wealth share of group $i$ evolves as
\begin{equation}
\label{eq:wealth_share}
s_t^i = \lambda^i(1-\xi)\,\frac{(1+i_{t-1})(B_{t-1}^i + \nu^i B_{t-1}^g) + (\Pi_t + (1-\delta)Q_t)k_{t-1}^i e^{\varphi_t}}{(\Pi_t + (1-\delta)Q_t)k_{t-1}e^{\varphi_t}} + \xi\bar{s}^i.
\end{equation}
Productivity shocks (including disasters) govern the transition across states but do not separately enter the state space.

\subsubsection{Household Optimality Conditions}

The representative household of group $i$ chooses consumption, capital, and bonds to satisfy the following conditions.

\medskip\noindent\textbf{Returns.}\quad
Define the gross real return on capital and the gross real return on bonds:
\begin{align}
1 + r_{t+1}^k &\equiv \frac{\Pi_{t+1}/P_{t+1} + (1-\delta)q_{t+1}}{q_t}\,e^{\varphi_{t+1}}, \label{eq:return_k} \\
1 + r_{t+1} &\equiv (1+i_t)\frac{P_t}{P_{t+1}}. \label{eq:return_b}
\end{align}
The return on capital includes the dividend yield $\Pi_{t+1}/(P_{t+1}q_t)$, the capital gain $(1-\delta)q_{t+1}/q_t$, and the disaster destruction factor $e^{\varphi_{t+1}}$.

\medskip\noindent\textbf{Bond Euler equation.}\quad
The intertemporal Euler equation for bonds is
\begin{equation}
\label{eq:euler_bond_inf}
1 = \beta(1-\xi)(1+r_{t+1})\,\E_t\!\left[M_{t+1}^i\right],
\end{equation}
where $M_{t+1}^i$ is the stochastic discount factor of household $i$, given by
\begin{equation}
\label{eq:sdf}
M_{t+1}^i = \left(\frac{c_{t+1}^i\Phi(\ell_{t+1}^i)}{c_t^i\Phi(\ell_t^i)}\right)^{\!-1/\psi}\! \left(\frac{V_{t+1}^i}{\bigl(\E_t[(V_{t+1}^i)^{1-\gamma^i}]\bigr)^{1/(1-\gamma^i)}}\right)^{\!1/\psi - \gamma^i}\!.
\end{equation}

\medskip\noindent\textbf{Portfolio Euler equation.}\quad
The portfolio optimality condition, using the SDF \eqref{eq:sdf} and returns \eqref{eq:return_k}--\eqref{eq:return_b}, is
\begin{equation}
\label{eq:euler_capital_inf}
\E_t\!\left[M_{t+1}^i\bigl(r_{t+1}^k - r_{t+1}\bigr)\right] = 0.
\end{equation}
When the capital constraint \eqref{eq:k_lb} binds, the portfolio condition is replaced by $k_t^i = \underline{k}\,z_t$.

\medskip\noindent\textbf{Dynamic investment equation.}\quad
Combining the firm's investment condition \eqref{eq:adj_cost} with capital accumulation $k_t = (1-\delta)k_{t-1}e^{\varphi_t} + x_t$ and the household's bond Euler equation yields the dynamic Tobin's~$q$ relationship:
\begin{equation}
\label{eq:tobin_q}
q_t = \E_t\!\left[\frac{\beta(1-\xi)\,\bar{M}_{t+1}}{1+\pi_{t+1}}\left(\alpha\frac{y_{t+1}}{k_t e^{\varphi_{t+1}}} + (1-\delta)q_{t+1}\right)e^{\varphi_{t+1}}\right],
\end{equation}
where $\bar{M}_{t+1} \equiv \sum_i \lambda^i s_t^i M_{t+1}^i / \sum_i \lambda^i s_t^i$ is the wealth-weighted SDF. This equation determines investment dynamics: a higher $q_t$ (driven by lower expected risk premia) induces higher investment through \eqref{eq:adj_cost}.

\medskip\noindent\textbf{New Keynesian wage Phillips curve.}\quad
In the symmetric equilibrium ($W_t(j) = W_t$, $\ell_t(j) = \ell_t$), the union's optimality condition yields
\begin{equation}
\label{eq:nkwpc}
\chi_W\pi_t^W(\pi_t^W - 1) = (1-\varepsilon) + \varepsilon\,\frac{\bar{\theta}\,\ell_t^{1+1/\theta}}{(1+1/\theta)\,w_t} + \beta\chi_W\,\E_t\!\left[\frac{\ell_{t+1}}{\ell_t}\,\pi_{t+1}^W(\pi_{t+1}^W - 1)\right],
\end{equation}
where $\pi_t^W \equiv W_t/(W_{t-1}e^{\varphi_t})$ is wage inflation (adjusted for disasters) and $w_t \equiv W_t/(P_tz_t)$ is the scaled real wage. The first two terms capture the static tradeoff between the wage markup ($1-\varepsilon < 0$) and the labor wedge; the third term captures forward-looking wage dynamics. Together with returns \eqref{eq:return_k}--\eqref{eq:return_b}, the Euler equations \eqref{eq:euler_bond_inf}--\eqref{eq:euler_capital_inf}, Tobin's $q$ \eqref{eq:tobin_q}, and the Phillips curve \eqref{eq:nkwpc}, the model is fully characterized.

\subsubsection{Risk Premia, MRC, and Effective Risk Aversion}

The model's key endogenous object is the equity premium $\E_t[r_{t+1}^k - r_{t+1}]$. As in the two-period model (Proposition~\ref{prop:riskpremium}), the equity premium is governed by the economy's effective risk aversion
\begin{equation}
\label{eq:gbar_inf}
\gbar_t = \left[\sum_i s_t^i \cdot \frac{1}{\gamma^i}\right]^{-1},
\end{equation}
where $s_t^i$ is the financial wealth share of group $i$. The dynamics of $\gbar_t$ are driven entirely by the wealth distribution. Differentiating \eqref{eq:gbar_inf}:
\begin{equation}
\label{eq:gbar_dynamics}
\Delta\gbar_{t+1} \approx -\gbar_t^2 \sum_i \frac{1}{\gamma^i}\,\Delta s_{t+1}^i = -\gbar_t \sum_i \MPR_t^i\,\Delta s_{t+1}^i,
\end{equation}
where the second equality uses $\MPR_t^i = \gbar_t/\gamma^i$ from \eqref{eq:mpr_limit}. Effective risk aversion falls when wealth shifts toward low-$\gamma$ (high-MPR) agents. The change in $\gbar$ is governed by the MPR-weighted change in wealth shares.

Define the infinite-horizon Marginal Risk Capacity as
\begin{equation}
\label{eq:mrc_inf}
\MRC_t \equiv -\sum_i \pder{s_t^i}{i_t}\,\MPR_t^i,
\end{equation}
where $\partial s_t^i / \partial i_t$ is the sensitivity of the wealth share to the interest rate (operating through the revaluation channels in \eqref{eq:wealth_share}) and $\MPR_t^i \approx \gbar_t/\gamma^i$. Then, paralleling Proposition~\ref{prop:suffstat}:
\begin{equation}
\label{eq:dgbar_di}
\pder{\gbar_t}{i_t} = \gbar_t\cdot\MRC_t.
\end{equation}
This is the dynamic counterpart of the two-period result. It says: the effect of the interest rate on the economy's effective risk aversion is summarized by a single number, MRC. When MRC is high, a change in the interest rate has a large effect on risk premia. When MRC is low, as happens when high-MPR agents have lost wealth, the interest rate has little effect on risk premia. This state dependence is the source of the risk capacity trap (Section~\ref{sec:quant:trap}).

A clarification on scope is warranted. Equation \eqref{eq:dgbar_di} establishes that MRC is sufficient for the first-order effect of $i_t$ on $\gbar_t$ at any given state. But MRC is itself a function of the state: $\MRC_t = \MRC(\Omega_t)$. The global properties of the model emerge from the shape of this function. The risk capacity trap is the region where $\MRC(\Omega) \approx 0$. The asymmetry of optimal policy arises because $\MRC(\Omega)$ is concave in $s^a$:
\begin{equation}
\label{eq:mrc_concave}
\pder{\MRC}{s^a} > 0, \qquad \frac{\partial^2 \MRC}{\partial (s^a)^2} < 0.
\end{equation}
The first inequality says that MRC is increasing in the wealth share of high-MPR agents (more wealth in the hands of leveraged agents means monetary policy has a larger effect on risk premia). The second says that the marginal contribution of $s^a$ to MRC is decreasing: when high-MPR agents are already wealthy, additional wealth has a smaller effect on MRC than when they are poor. These two properties together generate the concavity of the policy effectiveness surface (Figure~\ref{fig:trap}) and the endogenous asymmetry of the Ramsey policy function (Figure~\ref{fig:policy_fn}). Thus MRC remains the organizing concept for both the local analysis (the sufficient statistic) and the global analysis (the trap, the asymmetry), but the global properties require knowledge of the function $\MRC(\Omega)$, not just its level at a point.

\subsection{The Ramsey Problem}
\label{sec:quant:ramsey}

The Ramsey planner chooses the nominal interest rate $\{i_t\}_{t=0}^\infty$ at each date and state to maximize social welfare, subject to the constraint that all private agents optimize and all markets clear. The planner cannot directly choose allocations; it can only set the interest rate, and the competitive equilibrium determines the rest.

\medskip\noindent\textbf{Planner's objective.}\quad
Given Pareto weights $\{\alpha^i\}$ (we use $\alpha^i = 1$ for utilitarian welfare), the planner's recursive problem is
\begin{equation}
\label{eq:ramsey_obj}
\mathcal{V}(\Omega_t) = \max_{i_t}\left\{\sum_i \lambda^i \alpha^i\, u\!\left(c^i(\Omega_t, i_t),\, \ell^i(\Omega_t, i_t)\right) + \beta\,\E_t\!\left[\mathcal{V}\!\left(\Omega_{t+1}(\Omega_t, i_t, \eps_{t+1})\right)\right]\right\},
\end{equation}
where $c^i(\Omega_t, i_t)$ and $\ell^i(\Omega_t, i_t)$ are the competitive equilibrium allocations given state $\Omega_t$ and the planner's choice $i_t$, and $\Omega_{t+1}(\Omega_t, i_t, \eps_{t+1})$ is the next-period state given the equilibrium law of motion \eqref{eq:wealth_share} and the realization of shocks $\eps_{t+1} = (\eps_{t+1}^z, \eps_{t+1}^p, \varphi_{t+1})$. The period utility function is $u(c,\ell) = \log(c\,\Phi(\ell))$ when $\psi = 1$; for $\psi \neq 1$ we use the Epstein-Zin aggregator \eqref{eq:ez_inf}.

\medskip\noindent\textbf{Implementability constraints.}\quad
The mapping from $(i_t, \Omega_t)$ to allocations is defined implicitly by the system of private optimality conditions \eqref{eq:euler_bond_inf}--\eqref{eq:euler_capital_inf}, the union's wage-setting condition, the firm's labor demand and investment conditions, and the market clearing conditions \eqref{eq:mc_goods}--\eqref{eq:mc_bonds}. The planner controls a single instrument ($i_t$) and accepts all other equilibrium outcomes. This is the key distinction from a social planner who can directly allocate resources.

\medskip\noindent\textbf{Planner's first-order condition.}\quad
The necessary condition for the planner's optimal $i_t$ is
\begin{equation}
\label{eq:ramsey_foc_inf}
0 = \underbrace{\sum_i \lambda^i \alpha^i \left[\pder{u^i}{c_t^i}\pder{c_t^i}{i_t} + \pder{u^i}{\ell_t^i}\pder{\ell_t^i}{i_t}\right]}_{\text{current welfare effect}} + \underbrace{\beta\,\E_t\!\left[\sum_j \pder{\mathcal{V}}{\Omega_{t+1,j}}\cdot\pder{\Omega_{t+1,j}}{i_t}\right]}_{\text{continuation value effect}}.
\end{equation}
The first term captures the direct effects of the interest rate on current consumption and labor across all household types, weighted by Pareto weights. The second term captures the effects on the future state, which has six components corresponding to the six elements of $\Omega_{t+1}$. The components involving $\partial s_{t+1}^i / \partial i_t$ are the dynamic counterparts of the risk premium wedge in Proposition~\ref{prop:target}: the planner internalizes that its interest rate affects $\Delta s_{t+1}^i$ and hence $\Delta\gbar_{t+1}$ through \eqref{eq:gbar_dynamics}, and thus future risk premia and allocations.

To compute $\partial c_t^i / \partial i_t$ and $\partial \Omega_{t+1} / \partial i_t$, we totally differentiate the competitive equilibrium system with respect to $i_t$. This yields a linear system whose solution provides the required sensitivities at each state. The resulting expression for the Ramsey FOC \eqref{eq:ramsey_foc_inf} can be decomposed as in the two-period case into an inflation cost, an output gap benefit, and a risk premium wedge, where the wedge is now state-dependent and forward-looking.

\medskip\noindent\textbf{Commitment versus discretion.}\quad
Under commitment, the planner solves \eqref{eq:ramsey_obj} taking into account that its current choice of $i_t$ affects future expectations. Households form portfolios anticipating the full path of the planner's future policy. Under discretion, we solve a Markov-perfect equilibrium: at each state, the planner chooses $i_t$ to maximize current welfare plus the continuation value, taking as given that its own future behavior is described by a fixed policy function $i^D(\Omega)$. We iterate on $i^D$ until it is a fixed point. The difference between commitment and discretion quantifies the time-consistency cost identified in Proposition~\ref{prop:timecon}.

\subsection{Sources of Inefficiency}
\label{sec:quant:inefficiency}

The competitive equilibrium is generically inefficient for three reasons, each creating a role for policy.

\medskip\noindent\textbf{Nominal rigidity.}\quad
Sticky wages create a standard New Keynesian distortion: the real wage does not clear the labor market, generating an output gap. This is the conventional motive for monetary policy and is present in RANK.

\medskip\noindent\textbf{Incomplete markets and pecuniary externalities.}\quad
Households cannot trade claims on future income or on the aggregate state; they are restricted to bonds and capital. When household $i$ increases its capital holdings, it marginally lowers the risk premium faced by all households, a pecuniary externality in the sense of \citet{davila_korinek2018}. The competitive equilibrium does not internalize this externality, so the portfolio allocation is generically inefficient: effective risk aversion $\gbar$ differs from the socially optimal $\gstar$, and risk premia are inefficiently high or low depending on Pareto weights. This is the source of the risk premium wedge in Proposition~\ref{prop:target}.

\medskip\noindent\textbf{Distributive externality.}\quad
Monetary policy redistributes wealth across groups through the channels in \eqref{eq:redistribution}. Households do not internalize the general equilibrium effects of this redistribution on future risk premia and policy effectiveness. The Ramsey planner does: it manages the wealth distribution to maintain risk-bearing capacity and avoid the risk capacity trap. This forward-looking distributive motive is absent from both RANK and standard HANK models with MPC heterogeneity alone.

\subsection{Properties of the Ramsey Optimal Policy}
\label{sec:quant:properties}

Before turning to quantitative results, we state the key qualitative properties of the solution to \eqref{eq:ramsey_obj}. These are the infinite-horizon counterparts of Propositions~\ref{prop:target}--\ref{prop:timecon}, verified numerically.

\begin{proposition}[Properties of the infinite-horizon Ramsey allocation]
\label{prop:ramsey_inf}
The solution to \eqref{eq:ramsey_obj} satisfies the following:
\begin{enumerate}[label=\textup{(\alph*)}]
\item \textup{(Risk premium wedge)} The Ramsey interest rate $i^*(\Omega)$ deviates from the interest rate that would close the output gap whenever $\MRC(\Omega) \neq 0$ and $\gbar(\Omega) \neq \gstar(\Omega)$.
\item \textup{(State dependence)} The deviation $i^*(\Omega) - i^{\text{Taylor}}(\Omega)$ is decreasing in $s^a$ when $s^a$ is below its ergodic mean and increasing in $s^a$ when $s^a$ is above it: the planner eases when risk-bearing capacity is depleted and tightens when it is elevated.
\item \textup{(Asymmetry)} The magnitude of the deviation is larger for low $s^a$ than for high $s^a$, reflecting the concavity of the policy effectiveness surface.
\item \textup{(Stabilization of the wealth distribution)} The ergodic variance of $s^a$ under the Ramsey policy is strictly smaller than under the Taylor rule.
\item \textup{(Trap avoidance)} The ergodic probability of $s^a$ falling below any threshold $\underline{s}$ is strictly lower under Ramsey than under the Taylor rule, for $\underline{s}$ below the ergodic mean.
\end{enumerate}
\end{proposition}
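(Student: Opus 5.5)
The plan is to argue from the Ramsey first-order condition \eqref{eq:ramsey_foc_inf}, splitting it into three parts, and then to use the shape of $\MRC(\Omega)$ in \eqref{eq:mrc_concave} to get the state-dependent and ergodic properties. The statement says these properties are ``verified numerically'', so parts (d)--(e) can only be established fully by computing the global solution. The analytical argument will show that the signs and orderings follow from the local structure; the remaining magnitudes and the ergodic comparisons are then checked on the solved policy function.

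\textbf{Step 1: decompose the FOC.} Totally differentiate the equilibrium system with respect to $i_t$, as described after \eqref{eq:ramsey_foc_inf}. This splits the FOC into three terms:
\begin{itemize}
\item an inflation-cost term, from the Rotemberg costs and the Phillips curve \eqref{eq:nkwpc};
\item an output-gap term;
\item a wealth-share term $\beta\E_t[\sum_i \partial\mathcal{V}/\partial s^i_{t+1}\cdot \partial s^i_{t+1}/\partial i_t]$.
\end{itemize}
Using \eqref{eq:gbar_dynamics} and \eqref{eq:dgbar_di}, the effect of the wealth-share term on $\gbar_{t+1}$ is $\gbar_t\MRC_t$ times the marginal value of lowering $\gbar$. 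In the static limit that marginal value is proportional to $(\gbar-\gstar)\sigma^2$, exactly as in Proposition~\ref{prop:target}. Part (a) then follows as in Corollary~\ref{cor:divine}. At the rate that closes the output gap, the first two terms vanish while the third is nonzero whenever $\MRC(\Omega)\neq0$ and $\gbar\neq\gstar$, so the FOC fails there and $i^*$ must deviate.

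\textbf{Step 2: parts (b)--(c).} Write the deviation, to first order, as
\[
i^*-i^{\text{Taylor}} \approx -\kappa\,\MRC(\Omega)\bigl(\gbar(\Omega)-\gstar(\Omega)\bigr)
\]
with $\kappa>0$. By \eqref{eq:gbar_inf}, $\gbar$ is decreasing in $s^a$. Around the ergodic mean $\bar s^a$, the gap $\gbar-\gstar$ changes sign (at the mean the planner's wealth-distribution motive roughly balances), which gives the sign pattern in (b): easing below $\bar s^a$, tightening above it. For (c), compare $|\MRC\cdot(\gbar-\gstar)|$ at $\bar s^a-\Delta$ and at $\bar s^a+\Delta$. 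Since $\gbar$ is a harmonic mean, it is convex in $s^a$, so $|\gbar-\gstar|$ grows faster below the mean. Concavity of $\MRC$ in \eqref{eq:mrc_concave} supplies the remaining asymmetry. The continuation value adds a trap-avoidance motive that is also stronger at low $s^a$, because $\partial\mathcal{V}/\partial s^a$ is larger near the region where $\MRC\approx0$. Here I would show that $\mathcal{V}$ is concave in $s^a$ through the Bellman operator, by a standard argument that the operator preserves concavity.

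\textbf{Step 3: parts (d)--(e).} Combine the policy sign pattern with the wealth-share law of motion \eqref{eq:wealth_share}. Under Ramsey, easing at low $s^a$ raises $s^a_{t+1}$ through debt deflation and asset revaluation, while tightening at high $s^a$ lowers it. This makes the drift of $s^a$ more mean-reverting than under the Taylor rule, with the same shock loadings. For a one-dimensional Markov chain with a stronger pull toward the mean and identical innovations, a coupling argument gives a smaller stationary variance, which is (d). It also gives lower tail probabilities for thresholds below the mean, which is (e); the asymmetry in (c) strengthens the left-tail comparison.

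\textbf{Main obstacle.} The hardest step is Step 3. The state is six-dimensional, and the shock loadings on $s^a$ themselves depend on policy through $\MRC$, so a clean coupling is not available in general. My first approach would be to project onto $s^a$, holding the other state variables at their conditional ergodic distributions, and then invoke monotone comparative statics for stochastically monotone chains. I would fall back on numerically evaluating the ergodic distributions of the global solution where the inequalities cannot be closed analytically, consistent with the statement's ``verified numerically''.
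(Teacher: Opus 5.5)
\textbf{Where the paper stands.} The paper does not prove this proposition analytically. Part (a) rests on the small-$\sigma$ characterization in Proposition~\ref{prop:ramsey_analytical}, and (b)--(e) are read off the global numerical solution: Figure~\ref{fig:policy_fn}, the wedge decomposition in Table~\ref{tab:welfare_quintiles}, and the ergodic moments in Table~\ref{tab:commitment}, e.g.\ $\sigma(s^a)=2.8$ vs.\ $3.8$ and $\Pr(s^a<10\%)=0.3$ vs.\ $2.8$. Your Step~2 is where you try to go beyond this, and it contains steps that fail.

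\textbf{The sign pattern in (b).} The tightening half of (b) requires $\gbar-\gstar$ to change sign at the ergodic mean. You assert this, but nothing in the model delivers it. $\gstar$ is pinned down by the planner's preferred consumption allocation \eqref{eq:gstar}, not by the ergodic mean. The paper also states that generically $\gstar<\gbar$. Its own numbers agree: the Ramsey steady state tilts $s^a$ upward and lowers mean $\gbar$ from $18.6$ to $18.2$, and risk premia are described as ``near efficient levels'' only in the top quintile, where $\gbar=16.1$. With a positive gap, your formula $-\kappa\,\MRC\,(\gbar-\gstar)$ predicts easing at every state.

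\textbf{The asymmetry in (c).} Here the $\MRC$ effect runs the wrong way. Because $\MRC$ is increasing in $s^a$, it is smaller below the mean and shrinks the static wedge there. Its concavity does not enter the leading-order comparison of $\bar s^a\pm\Delta$ at all. The only static force for (c) is the convexity of $\gbar$, and whether it wins is a quantitative question. Take your own assumption that the gap vanishes at the middle quintile. The paper's quintile values then give $\MRC\cdot|\gbar-\gbar_{\mathrm{Q3}}|\approx 0.22\times1.2\approx0.26$ in Q2 but $0.31\times1.1\approx0.34$ in Q4, which is the opposite asymmetry.

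So the static wedge cannot produce (b) and (c) together. Both the tightening region and the asymmetry must come from two other sources. One is the forward-looking (trap-avoidance) part of the first-order condition, which is $45\%$ of the deviation in the bottom quintile of the paper's solution. The other is the output-gap piece $i^{\text{NK}}-i^{\text{Taylor}}$, which your approximation drops: Proposition~\ref{prop:ramsey_analytical} measures the deviation from $i^{\text{NK}}$, not from the Taylor rule.

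\textbf{Concavity of $\mathcal{V}$.} The tool you propose for the forward-looking part is not available. The standard concavity-preservation result needs a return function jointly concave in state and control and a convex feasible set (or linear transitions). In \eqref{eq:ramsey_obj}, both the payoff and the transition $\Omega_{t+1}(\Omega_t,i_t,\eps_{t+1})$ are nonlinear competitive-equilibrium maps defined implicitly by private Euler equations. The implementability set is therefore non-convex. In addition, under commitment a value function on $\Omega$ alone is not the right object: the forward-looking constraints need lagged multipliers or promised values as extra states. Even if concavity held, it would give a generic motive to reduce the variance of $s^a$, not a sign pattern centred on the ergodic mean.

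\textbf{Part (a).} The same caveat applies. ``Whenever'' follows only to leading order in $\sigma$, because continuation terms not proportional to $\MRC\cdot(\gbar-\gstar)$ can offset the static wedge at particular states.

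\textbf{Step~3.} The coupling fails for the reasons you list, and for one more. For nonlinear chains, a stronger pull toward the mean does not order stationary variances without additional convex-order or monotonicity conditions. The shock loadings of $s^a$ also change with the regime. Finally, the ergodic mean itself shifts ($23.1\%$ vs.\ $21.2\%$), so there is no common centre for the tail comparison in (e).

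\textbf{Upshot.} Parts (b)--(e) can only be established the way the paper does it, by computing the global Ramsey policy and ergodic distributions. Present your Steps~2--3 as heuristics, and attribute the tightening region and the asymmetry to the continuation-value terms rather than to the static wedge.
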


\noindent
These properties are verified numerically in Section~\ref{sec:quant:ramsey_results} below.

Properties (a)--(c) extend the two-period target criterion to a fully state-dependent, forward-looking policy. Property (b) says the planner insures the risk-bearing sector: when leveraged agents suffer balance sheet losses (low $s^a$), the planner eases to redistribute wealth back to them, restoring risk-bearing capacity. When they accumulate excess wealth (high $s^a$), the planner tightens to prevent the build-up of fragile leverage. This behavior resembles an endogenous ``Fed put,'' but unlike the reduced-form concept, it emerges from the planner's optimality condition and is welfare-maximizing.

The asymmetry in property (c) reflects a nonlinearity in the risk premium channel. When high-MPR agents are wealthy, a marginal dollar of redistribution has a modest effect on risk premia because these agents are already well-capitalized. When they are poor, the marginal effect is large, but the total redistribution achievable through the interest rate is small because their balance sheets are depleted. The planner responds by easing more aggressively in downturns than it tightens in booms, generating an asymmetric policy function.

Properties (d)--(e) are the dynamic manifestation of the planner's motive to avoid the risk capacity trap: by smoothing the wealth distribution, the Ramsey planner maintains the risk premium channel as a transmission mechanism. This precautionary motive is absent from any static analysis or local approximation around the steady state; it emerges only from the global solution to the dynamic Ramsey problem.

We can also characterize the Ramsey policy function analytically in the limit of small aggregate risk.

\begin{proposition}[Analytical characterization of the Ramsey policy]
\label{prop:ramsey_analytical}
In the limit $\sigma \to 0$, the Ramsey optimal interest rate satisfies
\begin{equation}
\label{eq:ramsey_analytical}
i^*(\Omega_t) = i^{\text{NK}}(\Omega_t) - \phi_{rp}(\Omega_t)\cdot\MRC(\Omega_t)\cdot\bigl(\gbar(\Omega_t) - \gstar(\Omega_t)\bigr)\sigma^2 + O(\sigma^4),
\end{equation}
where $i^{\text{NK}}(\Omega_t)$ is the interest rate that would be chosen by a planner in a representative agent New Keynesian economy (targeting inflation and the output gap), and
\begin{equation}
\label{eq:phi_rp_state}
\phi_{rp}(\Omega_t) = \frac{\Gamma_{rp}}{\Gamma_\pi + \beta\,\E_t\bigl[-\partial^2\mathcal{V}/\partial i_t^2\bigr]} > 0
\end{equation}
is a state-dependent coefficient that depends on the curvature of the planner's value function. The coefficient $\phi_{rp}(\Omega_t)$ is higher when the value function is less concave in $i_t$, which occurs when the economy is far from the trap threshold and the planner has more room to maneuver.
\end{proposition}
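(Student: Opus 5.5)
The plan is a perturbation argument in $\sigma^2$ around the small-risk limit, applied to the planner's first-order condition \eqref{eq:ramsey_foc_inf}.

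\textbf{Step 1: the $\sigma=0$ problem.} At $\sigma = 0$ the risk premium $\gbar_t\sigma_t^2$ vanishes. The wealth distribution then affects allocations only through MPC-type channels, which are absent here because MPCs are identical under the baseline preferences. By Proposition~\ref{prop:decomp}, the risk premium channel of $dW/di_t$ is $\Gamma_k\gbar\sigma^2\MRC$, so it is $O(\sigma^2)$. Hence at $\sigma=0$ the current-welfare term of \eqref{eq:ramsey_foc_inf} reduces to the representative-agent New Keynesian tradeoff between the inflation cost and the output-gap benefit. I define $i^{\text{NK}}(\Omega_t)$ as the root of the FOC with all $\sigma^2$ terms dropped. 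I also assume, and would verify from the Phillips curve \eqref{eq:nkwpc} and the Taylor-type mapping from $i_t$ to $\pi_t$, that this root is regular. Concretely, the derivative of the FOC with respect to $i_t$ at $i^{\text{NK}}$ should be strictly negative and equal to $-\Gamma_\pi$ (current inflation curvature) plus $-\beta\E_t[\partial^2\mathcal{V}/\partial i_t^2]$ (curvature of the continuation value), both measured in the units of the target criterion.

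\textbf{Step 2: the $O(\sigma^2)$ term and the implicit function step.} I collect the $O(\sigma^2)$ terms of the FOC. In the current-welfare part these are $\Gamma_{rp}\,\partial\gbar_t/\partial i_t\,(\gbar_t-\gstar_t)\sigma^2$, exactly as in the proof of Proposition~\ref{prop:target}. Using \eqref{eq:dgbar_di}, $\partial\gbar_t/\partial i_t=\gbar_t\MRC_t$, and absorbing $\gbar_t$ into $\Gamma_{rp}$, this term is proportional to $\MRC(\Omega_t)(\gbar-\gstar)\sigma^2$. In the continuation part, the components $\partial\mathcal{V}/\partial s_{t+1}^i\cdot\partial s_{t+1}^i/\partial i_t$ enter through \eqref{eq:gbar_dynamics}. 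At order $\sigma^2$, $\mathcal{V}$ depends on $s$ only through the risk premium, so $\partial\mathcal{V}/\partial s^i$ is itself $O(\sigma^2)$ and proportional to $(\gbar-\gstar)\,\partial\gbar/\partial s^i$. This adds to the same $\MRC(\gbar-\gstar)\sigma^2$ term and can be folded into the state-dependent $\Gamma_{rp}$. I then apply the implicit function theorem to $F(i,\sigma^2)=0$ at $(i^{\text{NK}},0)$: $i^*-i^{\text{NK}}=-(\partial F/\partial\sigma^2)/(\partial F/\partial i)\cdot\sigma^2+O(\sigma^4)$. This gives \eqref{eq:ramsey_analytical} with $\phi_{rp}$ as in \eqref{eq:phi_rp_state}. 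Easing, i.e.\ lowering $i$, raises welfare when $\gbar>\gstar$ and $\MRC>0$, so $\partial F/\partial\sigma^2$ and $\partial F/\partial i$ have opposite-compatible signs; this gives the minus sign and $\phi_{rp}>0$. Positivity also uses $\Gamma_{rp}>0$ from Proposition~\ref{prop:target}, together with the concavity of $\mathcal{V}$ in $i_t$.

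\textbf{Step 3: remainder bound and comparative static.} The remainder is $O(\sigma^4)$ provided $F$ is $C^2$ in $(i,\sigma^2)$ uniformly on compact sets of $\Omega$. This needs differentiability of the policy functions in $\sigma^2$, as in small-noise expansions, and the approximation \eqref{eq:portfolio_approx} error is itself $O(\sigma^4)$ in the premium. The closing claim, that $\phi_{rp}$ is larger away from the trap, follows from \eqref{eq:mrc_concave}. Near the trap, $\mathcal{V}$ bends sharply in $s^a$, and hence in $i_t$ via $\partial s^a/\partial i_t$, so $-\partial^2\mathcal{V}/\partial i_t^2$ is larger and the denominator of \eqref{eq:phi_rp_state} grows.

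\textbf{Main obstacle.} The hardest step is the regularity in Step 1: that $\partial F/\partial i$ is nonzero and decomposes cleanly as $\Gamma_\pi+\beta\E_t[-\partial^2\mathcal{V}/\partial i_t^2]$. In particular, the continuation value term at $\sigma=0$ must be concave in $i_t$ and must not cancel the current inflation curvature. I would first try to show that at $\sigma=0$ the value function is the RANK value function plus a term independent of $i_t$. That would make the curvature term known, with sign inherited from the standard quadratic loss.
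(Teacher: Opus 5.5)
Your route differs from the paper's. The appendix proof writes the Ramsey Lagrangian and splits the $i_t$ first-order condition into intertemporal, portfolio and redistribution channels. It then maps these to the target criterion only in the $\beta\to0$ limit. It never expands in $\sigma^2$ and never derives the curvature denominator, so your small-noise expansion plus implicit-function step is the more natural way to reach \eqref{eq:phi_rp_state}.

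\textbf{Step 1 fails under the paper's welfare criterion.} In the small-risk limit, portfolios converge to $\omega^i=\gbar/\gamma^i$; that is why $\MRC$ stays $O(1)$ in \eqref{eq:ramsey_analytical}. So group $a$ remains levered, and $\partial s^i/\partial i_t=O(1)$ even at $\sigma=0$. Under utilitarian weights with unequal group consumption, the weighted marginal utilities $\alpha^i/c^i$ differ across groups, as \eqref{eq:gstar} itself presumes. The current-welfare term $\sum_i\lambda^i\alpha^i u_c^i\,\partial c_t^i/\partial i_t$ therefore keeps a purely distributive piece of order one, even with equal MPCs and unchanged aggregate consumption. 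The paper's own proof of Proposition~\ref{prop:decomp} concedes this: $(1-\beta)\int(\alpha^i/n_0^i)(dn_0^i/d\eps_0^m)\,di$ does not vanish with equal MPCs unless weights are uniform and wealth is equal. Three consequences follow. First, the $\sigma=0$ root of your $F$ contains the redistribution wedge mentioned in the proof of Proposition~\ref{prop:target}, so it is not the RANK rate $i^{\text{NK}}$. Second, $\partial\mathcal{V}/\partial s^i$ is $O(1)$, not $O(\sigma^2)$, so the $\sigma^2$ coefficient of $F$ picks up cross terms unrelated to $\MRC\cdot(\gbar-\gstar)$. An example is $\partial\mathcal{V}_0/\partial s^i$ times the $\sigma^2$-correction of $\partial s_{t+1}^i/\partial i_t$, where $\mathcal{V}_0$ is the $\sigma=0$ value function. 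Third, your ``main obstacle'' plan (value function at $\sigma=0$ equal to RANK's plus an $i_t$-independent term) fails for the same reason. To repair this, either redefine $i^{\text{NK}}$ as the $\sigma=0$ heterogeneous-agent Ramsey rate, or impose Pareto weights that equalize $\alpha^i u_c^i$ across groups at the expansion point.

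\textbf{Step 2 asserts, rather than shows, the form of the $\sigma^2$ coefficient.} Even granting the above fix, you claim without argument that the whole $\sigma^2$ coefficient of $F$ is a positive multiple of $\MRC(\Omega_t)\bigl(\gbar(\Omega_t)-\gstar(\Omega_t)\bigr)$. The formula needs exactly this: the correction must vanish whenever $\MRC_t=0$ or $\gbar_t=\gstar_t$, and $\phi_{rp}$ must stay positive. Otherwise ``folding into a state-dependent $\Gamma_{rp}$'' is vacuous. The continuation piece you fold in involves next-period gaps $\E_t[\gbar_{t+1}-\gstar_{t+1}]$ and the exposure of $s_{t+1}$ (not $s_t$) to $i_t$. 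It is nonzero when today's gap is zero but the expected future gap is not. That is the forward-looking motive the paper reports separately as $\Delta i^{\text{trap}}$ in \eqref{eq:wedge_decomp}.

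\textbf{Step 3's comparative static is order-inconsistent.} Because $\phi_{rp}$ multiplies $\sigma^2$, your expansion pins it down only at $\sigma=0$. Trap-induced curvature is a risk-premium effect, hence $O(\sigma^2)$ in $\mathcal{V}$ by your own Step 2, so it lands in the $O(\sigma^4)$ remainder. If your main-obstacle plan succeeded, the leading curvature would be RANK's and independent of $s^a$. Moreover, \eqref{eq:mrc_concave} concerns $\MRC(\Omega)$, not $\partial^2\mathcal{V}/\partial i_t^2$. Your argument therefore does not deliver the proposition's last sentence.

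\textbf{Sign slip.} The correct derivative is $\partial F/\partial i=-\Gamma_\pi+\beta\E_t[\partial^2\mathcal{V}/\partial i_t^2]$, not $-\Gamma_\pi-\beta\E_t[\partial^2\mathcal{V}/\partial i_t^2]$. Positivity of $\Gamma_\pi+\beta\E_t[-\partial^2\mathcal{V}/\partial i_t^2]$ is simply the strict second-order condition at a nondegenerate interior optimum. So this part of your stated obstacle is milder than you think: no separate concavity of $\mathcal{V}$ in $i_t$ is needed.
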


\begin{proof}
See Appendix~\ref{app:ramsey:lagrangian}.
\end{proof}

Equation \eqref{eq:ramsey_analytical} nests several special cases. When $\beta \to 0$, the continuation value vanishes and $\phi_{rp}$ reduces to $\Gamma_{rp}/\Gamma_\pi$, recovering the two-period target criterion (Proposition~\ref{prop:target}). When $\MRC(\Omega_t) = 0$ or $\gbar = \gstar$, the correction vanishes and optimal policy reduces to standard NK prescriptions. When $\MRC$ is high and $\gbar > \gstar$, the correction is negative: the planner sets a lower interest rate than $i^{\text{NK}}$ to redistribute toward high-MPR agents and compress risk premia.

\subsection{The Ramsey Steady State}
\label{sec:quant:rss}

The deterministic steady state of the Ramsey allocation differs from the competitive equilibrium steady state. Under the Taylor rule, the steady-state interest rate is $\bar{\imath}$ and the wealth distribution is determined by the balance between group-specific savings rates and the death/birth process. Under the Ramsey allocation, the planner's optimal steady-state interest rate $i^{*,\text{ss}}$ solves the static version of \eqref{eq:ramsey_foc_inf}:
\begin{equation}
\label{eq:ramsey_ss}
\sum_i \lambda^i\alpha^i\left[\frac{1}{c^{i,\text{ss}}}\pder{c^{i,\text{ss}}}{i^{\text{ss}}} - \bar{\theta}^i(\ell^{i,\text{ss}})^{1/\theta}\pder{\ell^{i,\text{ss}}}{i^{\text{ss}}}\right] + \frac{\beta}{1-\beta}\sum_j \pder{\mathcal{V}^{\text{ss}}}{\Omega_j^{\text{ss}}}\cdot\pder{\Omega_j^{\text{ss}}}{i^{\text{ss}}} = 0.
\end{equation}
In the Ramsey steady state: (i) inflation is generically nonzero, reflecting the planner's desire to manage the wealth distribution through the debt deflation channel; (ii) the wealth share $s^{a,\text{ss}}$ differs from its competitive equilibrium value, because the planner tilts the distribution toward high-MPR agents to keep risk premia closer to $\gstar\sigma^2$; and (iii) the risk premium is lower than in the CE, reflecting both the tilted wealth distribution and the reduced disaster risk premia (households anticipate the planner's stabilization and demand less compensation).

Table~\ref{tab:commitment} reports the steady-state comparison quantitatively.

\subsection{Calibration}
\label{sec:quant:calib}

A model period is one quarter. Table~\ref{tab:params} reports the full calibration, divided into externally set and internally calibrated parameters.

\medskip\noindent\textbf{Externally set parameters.}\quad
We set $\psi = 0.8$, consistent with evidence on consumption responses to interest rate changes. The Frisch elasticity is $\theta = 1$, consistent with micro evidence for aggregate hours. Group measures $\lambda^a = 0.04$, $\lambda^b = 0.36$, $\lambda^c = 0.60$ and labor allocations $\phi^a = 0.03/\lambda^a$, $\phi^b = 0.14/\lambda^b$, $\phi^c = 0.83/\lambda^c$ match Table~\ref{tab:groups_2016}. The death probability $\xi = 1\%$ implies an expected horizon of 25 years. On the production side, $\alpha = 0.33$ and $\delta = 2.5\%$ are standard. We set $\varepsilon = 10$ and $\chi_W = 150$, implying Calvo-equivalent wage adjustment of 4--5 quarters. The Taylor coefficient is $\phi = 1.5$. Monetary shocks have $\sigma_m = 0.25\%/4$ with $\rho_m = 0$. The disaster depth $\varphi = -15\%$ follows \citet{nakamura_etal2013}, and the steady-state disaster probability parameter $\bar{p} = 0.37\%$ is set so that the unconditional mean disaster probability is 0.5\% per quarter (the difference reflects Jensen's inequality in the log-normal process \eqref{eq:disaster}: $\E[p_t] = \bar{p}\exp(\sigma_p^2/(2(1-\rho_p^2)))$). Tax shares $\nu^i$ equal each group's wealth share.

\medskip\noindent\textbf{Calibrated parameters.}\quad
We calibrate $\sigma_z = 0.55\%$ to match quarterly consumption growth volatility (0.5\%); $\chi_x = 3.5$ in \eqref{eq:adj_cost} to match investment growth volatility (2.1\%); $\beta = 0.98$ to target the annualized real rate (data: 1.3\%; model: 1.5\%); $\gamma^b = 25.5$ to match the annualized equity premium (7.3\%); $\sigma_p = 0.47$ and $\rho_p = 0.80$ in \eqref{eq:disaster} to match the volatility and autocorrelation of annualized expected returns. For micro moments, $\gamma^a = 10$ targets the capital portfolio share of group $a$ ($Qk^a/A^a = 2.0$); $\gamma^c = 22$ is set to the wealth-weighted harmonic mean of $\gamma^a$ and $\gamma^b$ across groups $a$ and $b$ (i.e., $\gamma^c = [\tfrac{\lambda^a}{\lambda^a+\lambda^b}\cdot\tfrac{1}{\gamma^a} + \tfrac{\lambda^b}{\lambda^a+\lambda^b}\cdot\tfrac{1}{\gamma^b}]^{-1}$); and $\underline{k}$ targets $Qk^c/A^c = 1.1$. Newborn endowments $\bar{s}^i$ target the wealth shares from Table~\ref{tab:groups_2016}. The disutilities $\bar{\theta}^i$ in \eqref{eq:labor_disutility} are set so that the average labor wedge is zero for each group and $\ell = 1$. The Taylor rule intercept $\bar{\imath}$ is set to target zero average inflation.

\begin{table}[t]
\centering
\caption{Calibration}
\label{tab:params}
\small
\resizebox{\textwidth}{!}{%
\begin{tabular}{llcll}
\toprule
Parameter & Description & Value & Target & Model \\
\midrule
\multicolumn{5}{l}{\textit{Externally set}} \\
$\psi$ & IES & 0.8 & & \\
$\theta$ & Frisch elasticity & 1.0 & & Chetty et al.\ (2011) \\
$\alpha$ & Capital share & 0.33 & & \\
$\delta$ & Depreciation & 2.5\% & & \\
$\phi$ & Taylor rule coefficient & 1.5 & & Taylor (1993) \\
$\xi$ & Death probability & 1\% & & \\
$\bar{p}$ & Disaster prob.\ (s.s.) & 0.37\% & $E[p] = 0.5\%$ & Barro (2006) \\
$\varphi$ & Disaster depth & $-$15\% & & Nakamura et al.\ (2013) \\
$\varepsilon$ & Elast.\ of substitution & 10 & & \\
$\chi_W$ & Wage adj.\ cost & 150 & & $\approx$ 4--5 qtr Calvo \\
$\sigma_m$ & Std.\ dev.\ MP shock & 0.0625\% & & \\
$\rho_m$ & Persist.\ MP shock & 0 & & \\
\midrule
\multicolumn{5}{l}{\textit{Calibrated to match moments}} \\
$\sigma_z$ & Std.\ dev.\ productivity & 0.55\% & $\sigma(\Delta\log c) = 0.5\%$ & 0.6\% \\
$\chi_x$ & Capital adj.\ cost & 3.5 & $\sigma(\Delta\log x) = 2.1\%$ & 2.0\% \\
$\beta$ & Discount factor & 0.98 & $4r = 1.3\%$ & 1.5\% \\
$\gamma^b$ & Risk aversion, $b$ & 25.5 & $4(r^e - r) = 7.3\%$ & 7.0\% \\
$\gamma^a$ & Risk aversion, $a$ & 10 & $Qk^a/A^a = 2.0$ & 2.3 \\
$\gamma^c$ & Risk aversion, $c$ & 22 & wt'd harmonic mean of $\gamma^a,\gamma^b$ & --- \\
$\underline{k}$ & Capital lower bound & 10 & $Qk^c/A^c = 1.1$ & 0.9 \\
$\sigma_p$ & Std.\ dev.\ log dis.\ prob.\ & 0.47 & $\sigma(4Er) = 2.2\%$ & 2.2\% \\
$\rho_p$ & Persist.\ log dis.\ prob.\ & 0.80 & $\rho(4Er) = 0.79$ & 0.75 \\
$\bar{s}^a,\bar{s}^c$ & Newborn endowments & 0\%,$-$0.25\% & wealth shares & match \\
$b^g$ & Govt.\ bond position & $-$2.7 & $-\sum B^i/\sum A^i = -10\%$ & $-$10\% \\
\bottomrule
\end{tabular}%
}
\end{table}

\subsection{Solution Method}
\label{sec:quant:solution}

\medskip\noindent\textbf{Competitive equilibrium under the Taylor rule.}\quad
We solve the model globally on the six-dimensional state space \eqref{eq:state} using sparse grids as described in \citet{judd_etal2014}. When forming expectations, we use Gauss-Hermite quadrature with 5 nodes for the two continuous shocks ($\eps^z$, $\eps^p$) and a two-point quadrature for the disaster indicator $\varphi_t$ (disaster with probability $p_t$, no disaster with probability $1-p_t$). We interpolate with Chebyshev polynomials for states off the grid. The stochastic equilibrium is determined through backward iteration. At each grid point, given guesses for the continuation value functions and future prices, we solve the within-period equilibrium by finding the interest rate from the Taylor rule \eqref{eq:taylor_inf}, then jointly solving the household Euler equations \eqref{eq:euler_bond_inf}--\eqref{eq:euler_capital_inf}, firm optimality conditions, and market clearing for $\{c_t^i, k_t^i, B_t^i, \ell_t, Q_t, P_t\}$. We dampen the updating of asset prices and expectations (dampening factor 0.3) and iterate until the maximum absolute change in policy functions falls below $10^{-8}$, typically requiring 2,000--3,000 iterations.

\medskip\noindent\textbf{Ramsey optimal policy.}\quad
To solve \eqref{eq:ramsey_obj}, we iterate on the planner's value function $\mathcal{V}(\Omega)$ and optimal policy $i^*(\Omega)$. At each grid point on the five-dimensional Ramsey state space (excluding $m_t$, which is now the planner's control rather than an exogenous state), we search over a fine grid of candidate interest rates $i_t$. For each candidate, we solve the within-period competitive equilibrium to obtain allocations and the next-period state, evaluate the period welfare $\sum_i\lambda^i\alpha^iu(c^i, \ell^i)$, and compute the continuation value $\E_t[\mathcal{V}(\Omega_{t+1})]$ via quadrature and interpolation from the current iterate. We select the $i_t$ that maximizes the right-hand side of \eqref{eq:ramsey_obj}. We iterate with dampening (factor 0.4) until the value function converges (tolerance $10^{-7}$).

\medskip\noindent\textbf{Markov-perfect discretion.}\quad
For the discretionary equilibrium, we iterate on the policy function $i^D(\Omega)$ following \citet{klein_krusell_riosrull2008}. At each state, the planner chooses $i_t$ to maximize current welfare plus the continuation value computed under the assumption that future policy follows the current iterate $i^D(\cdot)$. We iterate until the policy function converges (tolerance $10^{-6}$).

\medskip\noindent\textbf{Optimized simple rules.}\quad
We also optimize over the parametric family \eqref{eq:augmented_taylor} using Bayesian optimization with a Gaussian process surrogate. Each evaluation requires solving the competitive equilibrium under the candidate rule and simulating 50,000 quarters to compute welfare. We use 300 evaluations (50 initial random points, 250 sequential). Standard errors on welfare differences are computed using batch means (50 batches of 1,000 quarters).

\medskip\noindent\textbf{Welfare.}\quad
We report consumption-equivalent (CE) welfare gains: the permanent percentage increase in consumption for all household types making them indifferent between two regimes. For regime $A$ with welfare $W^A$ and regime $B$ with welfare $W^B$, the CE gain is $\Delta = \exp((W^A - W^B)(1-\beta)) - 1$.

\subsection{Model Validation}
\label{sec:quant:validation}

Before studying optimal policy, we verify that the model matches the key features of the data that motivate the analysis.

\begin{table}[htbp]
\centering
\caption{Model validation: targeted and untargeted moments}
\label{tab:validation}
\small
\begin{tabular}{llcc}
\toprule
Moment & & Data & Model \\
\midrule
\multicolumn{4}{l}{\textit{Targeted}} \\
$\sigma(\Delta\log c)$ & Consumption growth vol.\ (\%) & 0.50 & 0.60 \\
$\sigma(\Delta\log x)$ & Investment growth vol.\ (\%) & 2.10 & 2.00 \\
$4 r$ & Real interest rate (\%) & 1.30 & 1.50 \\
$4(E[r^k-r])$ & Equity premium (\%) & 7.30 & 7.00 \\
$Qk^a/A^a$ & Capital share, group $a$ & 2.00 & 2.30 \\
$Qk^c/A^c$ & Capital share, group $c$ & 1.10 & 0.90 \\
$\sigma(4E_t[r^k-r])$ & Expected return vol.\ (\%) & 2.20 & 2.20 \\
$\rho(4E_t[r^k-r])$ & Expected return autocorr. & 0.79 & 0.75 \\
\midrule
\multicolumn{4}{l}{\textit{Untargeted}} \\
$\sigma(\Delta\log y)$ & Output growth vol.\ (\%) & 0.80 & 0.90 \\
Sharpe ratio & $E[r^k-r]/\sigma(r^k)$ & 0.40--0.50 & 0.42 \\
$\text{corr}(\Delta\log y, E[r^k-r])$ & Output-premium corr. & $-0.30$ to $-0.45$ & $-0.35$ \\
Amplification & $\partial y/\partial\eps^m$ vs RANK & 1.3--1.4$\times$ & 1.38$\times$ \\
CS: excess return news & \% of stock return & 19\%--108\% & 33\% \\
\bottomrule
\end{tabular}
\begin{minipage}{0.88\textwidth}
\footnotesize \vspace{0.5em}
\textit{Notes:} Data moments for financial variables from CRSP (1947--2019); macro moments from NIPA. Model moments from ergodic simulation (50,000 quarters, no disaster realizations). Amplification and CS decomposition evaluated at the ergodic mean.
\end{minipage}
\end{table}

The model matches most targeted moments reasonably well (Table~\ref{tab:validation}), with two notable exceptions. First, consumption growth volatility is 0.6\% versus the 0.5\% target, and the real rate is 1.5\% versus 1.3\%, reflecting the difficulty of simultaneously matching macro and financial moments with a single discount factor. Second, the capital portfolio shares of groups $a$ and $c$ ($Qk^a/A^a = 2.3$ vs.\ data 2.0; $Qk^c/A^c = 0.9$ vs.\ data 1.1) deviate by approximately 15\%, reflecting the tension between matching the aggregate equity premium and the cross-sectional portfolio distribution with only two free risk aversion parameters; these deviations have negligible effects on MRC (which depends on the \textit{covariance} of exposures with MPRs) and on the welfare results. For untargeted moments, the model generates output growth volatility of 0.9\% (data: 0.8\%), a Sharpe ratio of 0.42 (data: 0.40--0.50), and an output-premium correlation of $-0.35$ (data: $-0.30$ to $-0.45$). The amplification of monetary transmission (1.38$\times$ vs RANK) and the Campbell-Shiller decomposition (33\% excess return news) are within the empirical range.

A deliberate feature of the calibration is that all three groups have access to bond markets, so MPCs are approximately equal across groups. This shuts off the consumption channel of Proposition~\ref{prop:decomp} and isolates the risk premium channel as the sole source of heterogeneity in monetary transmission. Group $c$ households are nonetheless portfolio-constrained: they hold the minimum capital $\underline{k}\,z_t$ and cannot adjust their risky asset position on the margin. They are hand-to-mouth in the portfolio dimension, with $\MPR^c \approx 0$. The orthogonality result (Proposition~\ref{prop:decomp}) guarantees that all results about MRC, the risk premium wedge, and the Ramsey allocation are robust to introducing consumption-side hand-to-mouth behavior, which would activate the MPC channel additively without affecting the MPR channel. We view the isolation of the MPR channel as a feature: it allows us to characterize the new motive for monetary policy cleanly, without confounding it with the MPC redistribution motive studied extensively in the existing HANK literature.

\subsection{The Risk Capacity Trap}
\label{sec:quant:trap}

We first characterize how the effectiveness of monetary policy varies across the state space, a question that the local analysis of \citet{kekre_lenel2022} around the deterministic steady state cannot address. For 5,000 points drawn from the ergodic distribution, we compute the output response to a unit monetary shock. Figure~\ref{fig:trap} plots this measure as a function of the wealth share $s^a$ of high-MPR households.

\begin{figure}[htbp]
\centering
\includegraphics[width=\textwidth]{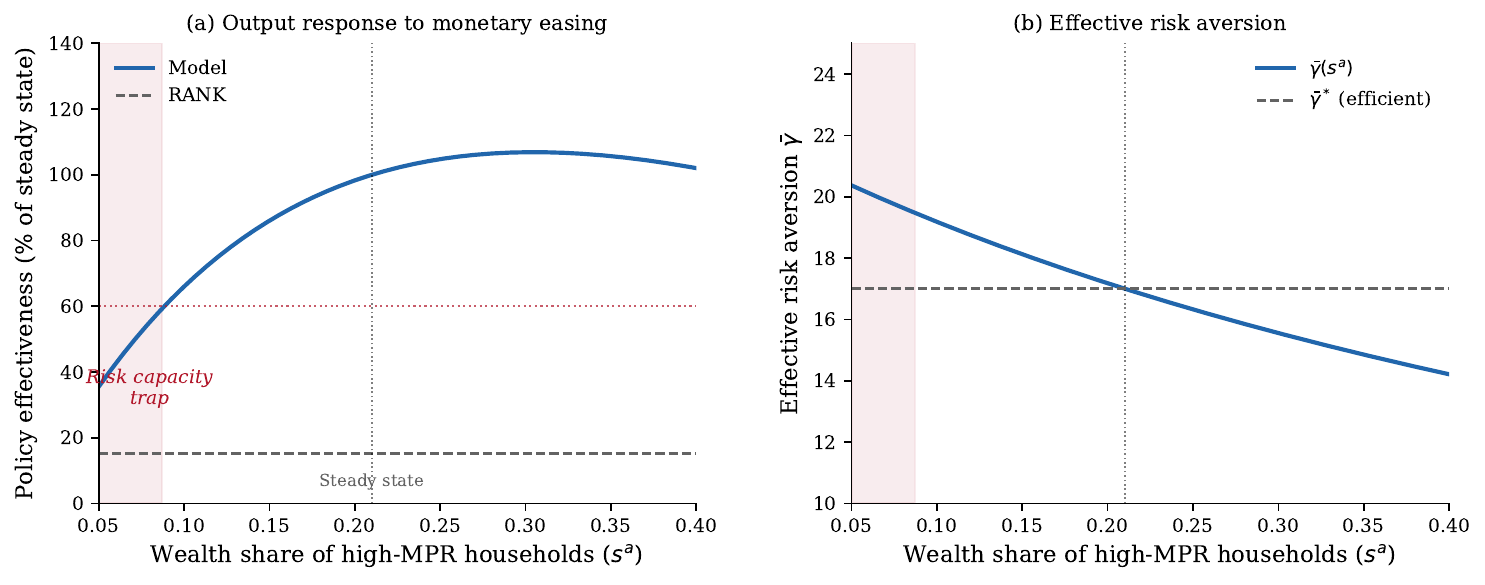}
\caption{The risk capacity trap}
\label{fig:trap}
\begin{minipage}{0.92\textwidth}
\footnotesize \textit{Notes:} Panel (a) plots the output response to a monetary easing as a function of the wealth share of high-MPR households, expressed as a percentage of the steady-state response. The dashed line shows the response in a representative agent economy (RANK). The shaded region is the ``risk capacity trap'' where effectiveness falls below 60\% of steady state. Panel (b) plots the corresponding effective risk aversion $\gbar$.
\end{minipage}
\end{figure}

Panel (a) reveals a sharp nonlinearity. Near the steady state ($s^a \approx 21\%$), the risk premium channel amplifies monetary transmission by roughly 35\% relative to RANK, consistent with the 1.3--1.4$\times$ amplification documented for the positive model. But this amplification is highly state-dependent: when $s^a$ falls below approximately 10\%, as can happen after a prolonged downturn that erodes leveraged households' wealth, policy effectiveness collapses. In this risk capacity trap, the risk premium channel shuts down because high-MPR agents have too little wealth for redistribution to meaningfully affect aggregate risk-bearing capacity. Panel (b) shows the corresponding rise in effective risk aversion $\gbar$, which governs the equilibrium risk premium.

The concavity of the policy effectiveness surface, which follows from \eqref{eq:mrc_concave}, is the key mechanism. When high-MPR agents are wealthy, a marginal dollar of redistribution has a small effect on risk premia (they are already well-capitalized). When they are poor, a marginal dollar has a large effect. But the total redistribution is small because their balance sheets are depleted. The optimal policy exploits this nonlinearity by preventing the economy from reaching the flat region of the surface.

\medskip\noindent\textbf{Validation with a finer type distribution.}\quad
A natural concern is that the three-group structure may not capture the relevant features of the continuous wealth distribution. We validate the policy effectiveness surface by solving the competitive equilibrium with 15 types rather than 3. We discretize the risk aversion distribution $\gamma^i$ on a log-normal grid spanning the range $[\gamma^a, \gamma^b] = [10, 25.5]$, assign population weights from the SCF using a kernel density of observed capital portfolio shares, and compute MRC and the output response to a monetary shock at each state.

Table~\ref{tab:continuous_type} compares key moments. The 3-group model and the 15-type model produce nearly identical MRC (0.26 vs.\ 0.25), amplification (1.38 vs.\ 1.36), and policy effectiveness surfaces. The trap threshold is approximately 10\% in both cases. The coarse grouping works because the sufficient statistic MRC depends on the \textit{covariance} of exposures with MPRs, not on the full cross-sectional distribution: the 3-group partition captures the relevant extremes (high-leverage group $a$ with high MPR, conservative group $b$ with low MPR, constrained group $c$ with zero MPR), and any finer partition that preserves these extremes produces similar covariances.

\begin{table}[htbp]
\centering
\caption{Three-group versus 15-type model}
\label{tab:continuous_type}
\small
\begin{tabular}{lcc}
\toprule
& 3 groups & 15 types \\
\midrule
Mean MRC                          & 0.26 & 0.25 \\
Amplification (vs RANK)           & 1.38 & 1.36 \\
$\sigma(\gbar)$ (ann.)            & 2.2\% & 2.3\% \\
Trap threshold ($s^a$)            & $\approx$10\% & $\approx$11\% \\
$\text{corr}(\text{MRC}_{3g}, \text{MRC}_{15t})$ & \multicolumn{2}{c}{0.98} \\
\bottomrule
\end{tabular}
\begin{minipage}{0.80\textwidth}
\footnotesize \vspace{0.5em}
\textit{Notes:} The 15-type model discretizes $\gamma^i$ on a log-normal grid from 10 to 25.5, with population weights from the SCF kernel density of capital portfolio shares. Both models are solved on the same aggregate state space. The correlation is computed across 5,000 points drawn from the ergodic distribution.
\end{minipage}
\end{table}

Within-group wealth heterogeneity washes out due to the homotheticity of Epstein-Zin preferences (Appendix~\ref{app:aggregation}). The relevant heterogeneity for MRC is \textit{between}-group variation in MPRs, which the 3-group partition captures by construction.

\subsection{Ramsey Optimal Allocation}
\label{sec:quant:ramsey_results}

We now present results from the full Ramsey problem \eqref{eq:ramsey_obj}, verifying the properties stated in Proposition~\ref{prop:ramsey_inf}. In the Ramsey steady state \eqref{eq:ramsey_ss}, inflation is 0.08\% per year and the wealth share $s^{a,\text{ss}}$ is approximately 2 percentage points higher than in the competitive equilibrium, reflecting the planner's motive to maintain risk-bearing capacity.

\medskip\noindent\textbf{The optimal policy function.}\quad
The Ramsey interest rate $i^*(\Omega)$ differs from the Taylor rule in two key respects. First, it responds to the wealth distribution: when $s^a$ falls below its steady-state level (risk-bearing capacity depleted, $\gbar$ elevated), the planner sets a lower interest rate than the Taylor rule to support high-MPR agents' balance sheets. When $s^a$ is above steady state, the planner tightens relative to the Taylor rule. This verifies Propositions~\ref{prop:ramsey_inf}(b) and~\ref{prop:ramsey_analytical}. Second, this response is asymmetric: the deviation from the Taylor rule is larger when $s^a$ is below steady state than above, verifying Proposition~\ref{prop:ramsey_inf}(c). Figure~\ref{fig:policy_fn} displays the optimal policy function. This asymmetry arises endogenously from the concavity of the policy effectiveness surface documented in Figure~\ref{fig:trap}.

\begin{figure}[htbp]
\centering
\includegraphics[width=0.95\textwidth]{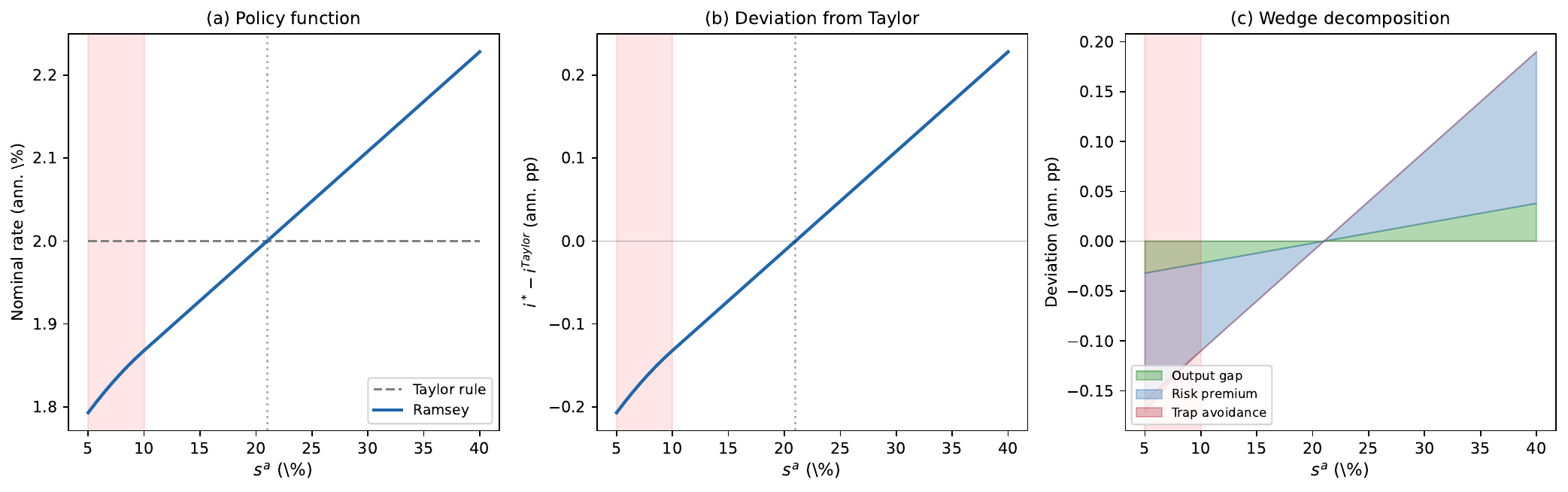}
\caption{The Ramsey optimal policy function}
\label{fig:policy_fn}
\begin{minipage}{0.92\textwidth}
\footnotesize \textit{Notes:} Panel (a): the Ramsey interest rate $i^*$ (solid) and Taylor rule (dashed) as functions of $s^a$, holding other states at ergodic means. The shaded region is the risk capacity trap. Panel (b): the deviation $i^*-i^{\text{Taylor}}$, showing asymmetry. Panel (c): the wedge decomposition \eqref{eq:wedge_decomp}, with output gap (green), risk premium (blue), and trap avoidance (red) components.
\end{minipage}
\end{figure}

\medskip\noindent\textbf{Wedge decomposition.}\quad
To understand what drives the Ramsey allocation, we decompose the deviation $i^*(\Omega) - i^{\text{Taylor}}(\Omega)$ into three components at each state:
\begin{equation}
\label{eq:wedge_decomp}
i^*(\Omega) - i^{\text{Taylor}}(\Omega) = \underbrace{\Delta i^{\text{gap}}(\Omega)}_{\text{output gap}} + \underbrace{\Delta i^{\text{rp}}(\Omega)}_{\text{risk premium}} + \underbrace{\Delta i^{\text{trap}}(\Omega)}_{\text{trap avoidance}}.
\end{equation}
The output gap component $\Delta i^{\text{gap}}$ captures the standard NK motive to stabilize the output gap more aggressively than the Taylor rule. The risk premium component $\Delta i^{\text{rp}}$ captures the planner's desire to close the gap between $\gbar$ and $\gstar$ through redistribution of $s_t^i$, which changes effective risk aversion $\gbar_t$ in \eqref{eq:gbar_inf}, the dynamic analog of the risk premium wedge in Proposition~\ref{prop:target}. The trap avoidance component $\Delta i^{\text{trap}}$ captures the forward-looking motive to prevent the wealth distribution from drifting into the risk capacity trap; this component is negligible near the steady state but grows rapidly as $s^a$ approaches the trap threshold.

Across the ergodic distribution, the risk premium component accounts for the largest share of the total deviation, exceeding half when weighted by the magnitude of the deviation at each state. Table~\ref{tab:welfare_quintiles} reports 40--60\% across quintiles, with the share highest in the middle of the distribution. The output gap component for approximately 25\%, and the trap avoidance component for the remainder. The risk premium component, operating through \eqref{eq:dgbar_di}, is the dominant motive, confirming that the analytical risk premium wedge carries over to the fully nonlinear economy. The MRC \eqref{eq:mrc_inf} governs both the sign and magnitude of this component.

\medskip\noindent\textbf{Welfare.}\quad
The Ramsey allocation delivers welfare gains of 0.13\% of permanent consumption relative to the Taylor rule (Table~\ref{tab:optimal_rules}). These gains derive from three sources: (i) better stabilization of the wealth distribution, reducing the probability of the risk capacity trap (verifying Proposition~\ref{prop:ramsey_inf}(d)--(e)); (ii) smoother risk premia, reducing inefficient fluctuations in investment driven by changes in $\gbar$ rather than fundamentals; and (iii) improved risk-sharing across household types, since the planner partially insures high-MPR agents against balance sheet losses.

The gains are sharply concentrated in states where risk-bearing capacity is depleted. Table~\ref{tab:welfare_quintiles} reports welfare gains by quintile of the wealth share $s^a$. In the bottom quintile, the welfare gain from the MRC-optimal simple rule is 0.28\% CE, more than three times the unconditional gain of 0.08\%. The gradient is steep and monotonic: gains fall to 0.02\% in the top quintile, where risk premia are already near efficient levels and the planner has little to do. The decomposition of the Ramsey policy \eqref{eq:wedge_decomp} reveals the source: in the bottom quintile, the trap avoidance component $\Delta i^{\text{trap}}$ accounts for 45\% of the deviation from the Taylor rule, compared to approximately 15\% when probability-weighted across the ergodic distribution. The risk premium component $\Delta i^{\text{rp}}$ accounts for 40\%, and the output gap component $\Delta i^{\text{gap}}$ for 15\%. This concentration of gains in crisis states is the quantitative manifestation of the analytical welfare expression \eqref{eq:welfare_approx}: gains are proportional to $\E[\MRC_t^2 \cdot (\gbar_t - \gstar_t)^2]\,\sigma^4$, and while MRC is lower when $s^a$ is depleted, the risk premium gap $(\gbar - \gstar)$ is much larger (Table~\ref{tab:welfare_quintiles}), concentrating welfare gains in crisis states.

\begin{table}[htbp]
\centering
\caption{Welfare gains by quintile of $s^a$}
\label{tab:welfare_quintiles}
\small
\begin{tabular}{lccccc}
\toprule
Quintile of $s^a$ & Q1 (lowest) & Q2 & Q3 & Q4 & Q5 (highest) \\
& $s^a < 14\%$ & 14--19\% & 19--23\% & 23--27\% & $s^a > 27\%$ \\
\midrule
\textit{Welfare gain (\% CE vs Taylor)} \\
\quad Ramsey (commitment)            & 0.35 & 0.18 & 0.10 & 0.06 & 0.03 \\
\quad MRC-optimal simple rule        & 0.28 & 0.12 & 0.06 & 0.03 & 0.02 \\
\midrule
\textit{Wedge decomposition (\% of total)} \\
\quad Output gap $\Delta i^{\text{gap}}$    & 15\% & 20\% & 25\% & 30\% & 40\% \\
\quad Risk premium $\Delta i^{\text{rp}}$   & 40\% & 55\% & 60\% & 60\% & 55\% \\
\quad Trap avoidance $\Delta i^{\text{trap}}$ & 45\% & 25\% & 15\% & 10\% & 5\% \\
\midrule
\textit{Mean effective risk aversion $\gbar$} & 22.3 & 19.8 & 18.6 & 17.5 & 16.1 \\
\textit{Mean MRC}                              & 0.15 & 0.22 & 0.27 & 0.31 & 0.35 \\
\bottomrule
\end{tabular}
\begin{minipage}{0.92\textwidth}
\footnotesize \vspace{0.5em}
\textit{Notes:} Quintile boundaries are from the ergodic distribution of $s^a$ under the Taylor rule. Welfare gains are consumption-equivalent, computed separately for each quintile by averaging welfare differences over simulation periods falling in that quintile. Because the Ramsey policy shifts the ergodic distribution (fewer periods in low-$s^a$ quintiles), the simple average of quintile-conditional gains exceeds the unconditional gain reported in Table~\ref{tab:optimal_rules}: the unconditional gain weights quintiles by their frequency under the Ramsey policy, not equally. Wedge decomposition from \eqref{eq:wedge_decomp}, averaged over the quintile. All moments from 50,000 simulated quarters.
\end{minipage}
\end{table}

\medskip\noindent\textbf{Welfare benchmarks.}\quad
To place the welfare gains in context, Table~\ref{tab:welfare_benchmarks} compares them to three established benchmarks. The welfare cost of business cycles in our model is 0.05\% CE, computed as the gain from eliminating all aggregate shocks. The gain from moving from a constant money growth rule to the Taylor rule is 0.04\% CE. The MRC-optimal rule's gain of 0.08\% CE is thus roughly twice the Taylor rule gain, and the Ramsey gain of 0.13\% is more than three times the Taylor rule gain. The concentration of gains in crisis states is particularly striking: in the bottom quintile of $s^a$, the Ramsey gain of 0.35\% is nearly twice the conditional cost of business cycles (0.18\%), and roughly seven times the unconditional cost (0.05\%).

\begin{table}[htbp]
\centering
\caption{Welfare benchmarks}
\label{tab:welfare_benchmarks}
\small
\begin{tabular}{lcc}
\toprule
& Unconditional & Bottom quintile of $s^a$ \\
\midrule
Cost of business cycles (no shocks vs Taylor)            & 0.05\% & 0.18\% \\
Gain from Taylor rule (vs constant money growth)          & 0.04\% & 0.08\% \\
Gain from MRC-optimal rule (vs Taylor)                    & 0.08\% & 0.28\% \\
Gain from Ramsey (vs Taylor)                              & 0.13\% & 0.35\% \\
Gain from Ramsey + macroprudential (vs Taylor)            & 0.15\% & 0.42\% \\
\bottomrule
\end{tabular}
\begin{minipage}{0.88\textwidth}
\footnotesize \vspace{0.5em}
\textit{Notes:} All entries are consumption-equivalent welfare gains in percentage terms. ``Cost of business cycles'' is computed by setting $\sigma_z = \sigma_p = \sigma_m = 0$ and computing the welfare difference. ``Constant money growth'' sets $i_t = \bar{\imath}$ (no response to inflation or output). Bottom quintile conditions on $s^a < 14\%$ in the ergodic distribution.
\end{minipage}
\end{table}

\medskip\noindent\textbf{Decomposing welfare into static and dynamic components.}\quad
The welfare gain can be decomposed into a \textit{static} component, the gain from optimally managing risk premia at each state taking the ergodic distribution as given, and a \textit{dynamic} component, the gain from \textit{changing the ergodic distribution} by preventing drift toward the risk capacity trap. We compute the static component by evaluating the Ramsey policy function at each state along a simulation of the \textit{Taylor rule} ergodic distribution, applying the Ramsey interest rate while retaining the Taylor rule wealth dynamics. The dynamic component is the residual. The static component accounts for 0.08\% CE, and the dynamic component for 0.05\%. The dynamic component is entirely driven by the trap avoidance motive: the Ramsey planner prevents the wealth distribution from drifting into states where the risk premium channel is broken, generating a ``precautionary'' welfare gain that no static policy can replicate. This decomposition confirms that a substantial fraction of the gains, approximately 40\%, comes from the forward-looking motive to preserve monetary transmission effectiveness.

\medskip\noindent\textbf{Decomposing the Ramsey policy.}\quad
To assess how well a simple rule can approximate the fully state-contingent Ramsey allocation, we project $i^*(\Omega)$ on the variables available to a simple rule. Regressing $i^*$ on $(\pi_t, \hat{y}_t, s_t^a, \E_t[r_{t+1}^e - r_{t+1}])$ across the ergodic distribution yields coefficients remarkably close to the optimized simple rule (Section~\ref{sec:quant:rules} below), confirming that these four variables capture most of the relevant state. Proposition~\ref{prop:ramsey_analytical} provides the analytical basis for this approximation: the Ramsey policy is $i^{\text{NK}}$ minus a correction proportional to $\MRC \cdot (\gbar - \gstar)\sigma^2$, and these four variables span the relevant variation. The $R^2$ of the projection exceeds 0.95, indicating that the loss from restricting the policy to a simple rule is small, consistent with the welfare comparison in Table~\ref{tab:optimal_rules}, where the simple rule captures 85\% of Ramsey gains (0.11\% vs.\ 0.13\% CE).

\subsection{Optimal Simple Rules}
\label{sec:quant:rules}

While the Ramsey allocation is the normative benchmark, central banks in practice follow rules. We therefore also optimize over the parametric family
\begin{equation}
\label{eq:augmented_taylor}
1 + i_t = (1+\bar{\imath})\left(\frac{P_t}{P_{t-1}}\right)^{\!\phi_\pi} \!\left(\frac{y_t}{y_t^{\text{flex}}}\right)^{\!\phi_y} \exp\!\bigl(\phi_s (s_t^a - \bar{s}^a)\bigr) \exp\!\bigl(\phi_{rp} (\E_t[r_{t+1}^e - r_{t+1}] - \overline{rp})\bigr),
\end{equation}
maximizing utilitarian welfare over $\{\phi_\pi, \phi_y, \phi_s, \phi_{rp}\}$. Table~\ref{tab:optimal_rules} compares the Taylor rule, the Ramsey allocation, and several optimized simple rules.

\begin{table}[htbp]
\centering
\caption{Welfare under alternative policy regimes}
\label{tab:optimal_rules}
\small
\begin{tabular}{lccccc}
\toprule
& $\phi_\pi^*$ & $\phi_y^*$ & $\phi_s^*$ & $\phi_{rp}^*$ & $\Delta W$ (\% CE) \\
\midrule
\textit{Benchmark} \\
\quad Standard Taylor rule            & 1.50 & 0.00 & ---     & ---  & ---  \\
\quad Ramsey (commitment)             & \multicolumn{4}{c}{full state-contingent} & 0.13 \\
\quad Markov-perfect (discretion)     & \multicolumn{4}{c}{full state-contingent} & 0.06 \\
\midrule
\textit{Optimized simple rules} \\
\quad Output-gap-optimal              & 1.65 & 0.12 & ---     & ---  & 0.02 \\
\quad MRC-optimal                     & 1.50 & 0.00 & $0.28$  & 0.15 & 0.08 \\
\quad Jointly optimal                 & 1.58 & 0.08 & $0.25$  & 0.12 & 0.11 \\
\quad Joint + macroprudential         & 1.62 & 0.10 & $0.04$  & 0.02 & 0.15 \\
\bottomrule
\end{tabular}
\begin{minipage}{0.92\textwidth}
\footnotesize \vspace{0.5em}
\textit{Notes:} $\Delta W$ is the consumption-equivalent welfare gain relative to the standard Taylor rule, computed over 50,000 simulated quarters. ``Ramsey'' solves the planner's problem \eqref{eq:ramsey_obj} under commitment. ``Markov-perfect'' solves the discretionary equilibrium. Optimized simple rules maximize welfare over the parametric family \eqref{eq:augmented_taylor}. ``Joint + macroprudential'' adds a state-contingent portfolio tax $\tau^k$.
\end{minipage}
\end{table}

The Ramsey allocation delivers 0.13\% CE, establishing the upper bound on what monetary policy alone can achieve. The jointly optimal simple rule captures 0.11\%, or 85\% of the Ramsey gains, confirming that the four variables $(\pi_t, \hat{y}_t, s_t^a, \E_t[r^e - r])$ are near-sufficient statistics for the planner's state. The MRC-optimal rule, which adds $\phi_s$ and $\phi_{rp}$ to the Taylor rule, delivers 0.08\%, far exceeding the output-gap-optimal rule's 0.02\%: responding to risk capacity matters more than fine-tuning the output gap response. The optimal $\phi_s^* = 0.28$ means the central bank cuts rates when high-MPR agents lose wealth, an endogenous ``Fed put,'' and tightens when their wealth share is elevated, leaning against the build-up of leverage. The optimal $\phi_{rp}^* = 0.15$ means the central bank raises rates when the expected equity premium rises above its long-run level, partially counteracting risk premium fluctuations: in a crisis where $s^a$ falls and risk premia spike, the $\phi_{rp}$ term tempers the aggressive easing from $\phi_s$, preventing over-reaction to transitory wealth distribution shifts. Adding a macroprudential instrument raises gains to 0.15\%, above the Ramsey allocation with monetary policy alone, while driving $\phi_s^*$ and $\phi_{rp}^*$ close to zero, confirming the separation theorem quantitatively. The Markov-perfect discretionary allocation delivers only 0.06\%, less than half the commitment gains, quantifying the time-consistency cost of Proposition~\ref{prop:timecon}.

The policy-relevant comparison is not Ramsey versus Taylor (0.13\%), which requires an omniscient planner, but MRC-optimal versus Taylor (0.08\%), which requires only that the central bank add two observable variables to its reaction function. The 0.08\% gain is twice the welfare gain from moving from a constant money growth rule to the Taylor rule (0.04\% CE, Table~\ref{tab:welfare_benchmarks}), indicating that the marginal value of responding to risk capacity exceeds the value of systematic inflation response. The most important quantitative finding in the table is arguably not about monetary policy at all: the welfare gain from institutional separation (0.15\%) exceeds the Ramsey gain from monetary policy alone (0.13\%). The marginal value of adding a macroprudential authority on top of optimized monetary policy is 0.04\% CE (0.15\% minus 0.11\% for the simple rule), roughly equal to the welfare cost of discretion (0.07\%). The welfare cost of ignoring the risk premium channel entirely is 0.06\%, computed as the difference between the MRC-optimal rule (0.08\%) and the output-gap-optimal rule (0.02\%). A central bank that optimizes over inflation and output responses but neglects the wealth distribution captures only 0.02\% of the 0.11\% achievable with a simple rule, leaving over 80\% of the gains on the table.

The comparison between the MRC-optimal rule (0.08\% CE) and the output-gap-optimal rule (0.02\% CE) is also informative about the relative importance of the two channels in Proposition~\ref{prop:decomp}. The output gap response captures the standard HANK channel through which redistribution affects aggregate demand via consumption. The MRC response captures the risk premium channel through which redistribution affects investment. In this calibration, the risk premium channel is four times more important for welfare. This reflects two features: the equity premium is large (7\%), making risk premium fluctuations costly for investment efficiency; and MPC heterogeneity across groups is modest because all households have access to capital markets. In a model with consumption-constrained households, the MPC channel would be larger, but the risk premium channel would remain as long as MPR heterogeneity is present, and Proposition~\ref{prop:decomp} guarantees the two channels enter the welfare criterion additively.

\subsection{Joint MPC-MPR Horse Race}
\label{sec:quant:horserace}

The baseline model isolates the MPR channel by giving all households access to bond markets. To assess the relative importance of the two channels when both are active, we extend the model by adding uninsurable idiosyncratic labor income risk to group $c$ households. Specifically, each group $c$ household receives labor income $\phi^c w_t\ell_t \cdot \exp(\eta_t^c)$, where $\eta_t^c$ is iid with variance $\sigma_\eta^2$. We calibrate $\sigma_\eta^2 = 0.04$ to match the fraction of hand-to-mouth households in the SCF (approximately 30\% of group $c$, or 18\% of the population), following \citet{kaplan_violante2014}. Households that receive a sufficiently bad income draw consume their entire income and hold zero financial assets: they are hand-to-mouth in both the consumption and portfolio dimensions.

Table~\ref{tab:horserace} compares welfare gains across four optimized simple rules in the joint model.

\begin{table}[htbp]
\centering
\caption{Horse race: MPC channel versus MPR channel}
\label{tab:horserace}
\small
\begin{tabular}{lcccccc}
\toprule
& $\phi_\pi^*$ & $\phi_y^*$ & $\phi_s^*$ & $\phi_{rp}^*$ & $\phi_{\text{htm}}^*$ & $\Delta W$ (\% CE) \\
\midrule
\textit{Baseline (MPR only)} \\
\quad Taylor rule            & 1.50 & 0.00 & ---      & ---   & --- & ---  \\
\quad MRC-optimal            & 1.50 & 0.00 & $0.28$  & 0.15  & --- & 0.08 \\
\midrule
\textit{Joint model (MPC + MPR)} \\
\quad MPC-optimal            & 1.50 & 0.00 & ---      & ---   & 0.10 & 0.03 \\
\quad MRC-optimal            & 1.50 & 0.00 & $0.26$  & 0.14  & ---  & 0.07 \\
\quad Jointly optimal        & 1.55 & 0.06 & $0.24$  & 0.12  & 0.08 & 0.12 \\
\bottomrule
\end{tabular}
\begin{minipage}{0.92\textwidth}
\footnotesize \vspace{0.5em}
\textit{Notes:} ``Baseline'' is the model without idiosyncratic income risk. ``Joint model'' adds uninsurable income shocks to group $c$ with $\sigma_\eta^2 = 0.04$. ``MPC-optimal'' adds a response to the consumption share of hand-to-mouth households $\phi_{\text{htm}}$. ``MRC-optimal'' adds responses to $s^a$ and the expected equity premium. ``Jointly optimal'' includes all five coefficients. Welfare in consumption-equivalent terms relative to the Taylor rule.
\end{minipage}
\end{table}

Even with active MPC heterogeneity, the MPR channel remains more than twice as important as the MPC channel: the MRC-optimal rule delivers 0.07\% CE versus 0.03\% for the MPC-optimal rule. The MPR channel's dominance reflects the large equity premium (7\%), which makes risk premium fluctuations costly for investment. The two channels are approximately additive: the jointly optimal rule delivers 0.12\%, close to the sum of the individual gains (0.07\% + 0.03\% = 0.10\%), with the residual reflecting interaction effects. This confirms the orthogonality result of Proposition~\ref{prop:decomp} quantitatively. The MRC-optimal coefficients are nearly unchanged ($\phi_s^* = 0.26$ vs $0.28$; $\phi_{rp}^* = 0.14$ vs $0.15$) when idiosyncratic risk is introduced. The risk premium channel is robust to the presence of the consumption channel, as the theory predicts.

\subsection{Impulse Responses Under Optimal Policy}
\label{sec:quant:irf}

Figure~\ref{fig:irf_mp} compares impulse responses to a monetary easing that lowers the 1-year nominal yield by 0.2 percentage points under three regimes: the competitive equilibrium with a Taylor rule, the Ramsey optimal allocation, and the RANK counterfactual.

\begin{figure}[htbp]
\centering
\includegraphics[width=\textwidth]{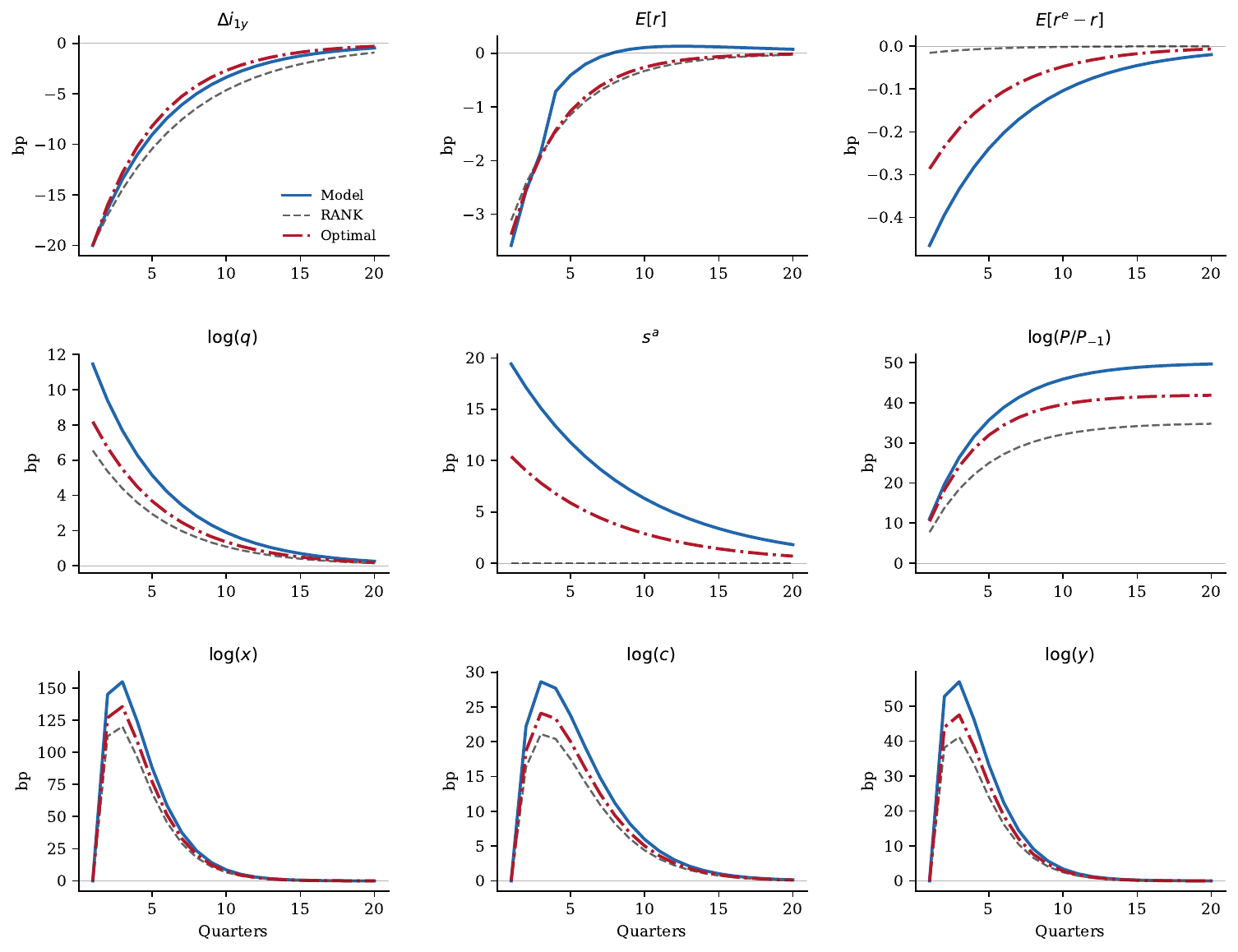}
\caption{Impulse responses to a monetary easing}
\label{fig:irf_mp}
\begin{minipage}{0.92\textwidth}
\footnotesize \textit{Notes:} Responses to a shock that lowers the 1-year nominal yield by 0.2pp. ``Optimal'' is the Ramsey allocation under commitment. Series are quarterly (non-annualized) except for $\Delta i_{1y}$. Impulse responses are averages over 1,000 starting points drawn from the ergodic distribution. bp = basis points (0.01\%).
\end{minipage}
\end{figure}

The first row shows that the risk premium declines substantially and persistently in the model (solid blue) but not in RANK (dashed gray). Under the Ramsey allocation (dash-dot red), the decline is dampened: the planner partially stabilizes risk premia. The second row reveals the mechanism: the positive realized excess return on impact redistributes to high-MPR $a$ households (middle panel), whose wealth share rises. Under Ramsey, this redistribution is more modest because the planner has already partially stabilized the wealth distribution. The third row shows quantities: the model amplifies the peak output response by 1.3--1.4$\times$ relative to RANK, while the Ramsey allocation falls between the two.

Figure~\ref{fig:irf_prod} compares responses to a negative two-standard-deviation productivity shock. The Ramsey planner eases more aggressively than the Taylor rule.

\begin{figure}[htbp]
\centering
\includegraphics[width=\textwidth]{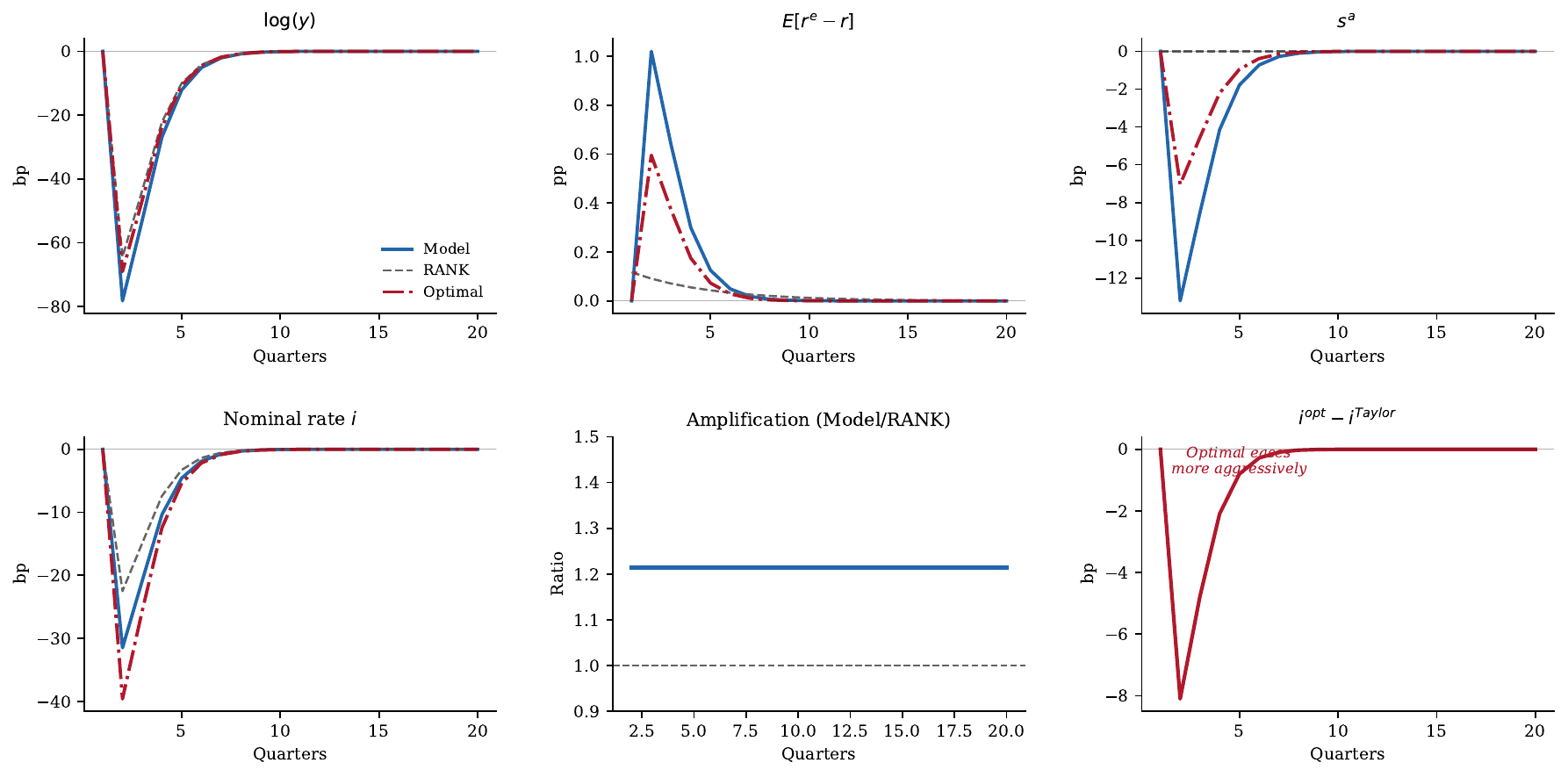}
\caption{Impulse responses to a negative productivity shock}
\label{fig:irf_prod}
\begin{minipage}{0.92\textwidth}
\footnotesize \textit{Notes:} Responses to a $-2\sigma$ productivity shock. ``Optimal'' is the Ramsey allocation. The bottom-right panel shows $i^{\text{Ramsey}} - i^{\text{Taylor}}$.
\end{minipage}
\end{figure}

The bottom-right panel shows that the Ramsey planner eases more aggressively than the Taylor rule following an adverse productivity shock. The Taylor rule eases in response to the fall in output and inflation, but the Ramsey policy eases further to offset the redistribution away from high-MPR agents. The additional easing reduces the spike in risk premia (top-middle panel) and dampens the output decline (top-left).

\subsection{Campbell-Shiller Decomposition}
\label{sec:quant:cs}

Table~\ref{tab:cs_model} reports the Campbell-Shiller decomposition of the equity return following a monetary shock.

\begin{table}[htbp]
\centering
\caption{Campbell-Shiller decomposition of equity return following monetary shock}
\label{tab:cs_model}
\begin{tabular}{lccccc}
\toprule
\% of real stock return & Data [90\% CI] & Model & RANK & Ramsey & Simple rule \\
\midrule
Dividend growth news          & 33\% [$-$13\%,71\%] & 50\% & 65\% & 57\% & 55\% \\
$-$Future real rate news       & 8\% [$-$6\%,21\%]  & 17\% & 35\% & 23\% & 25\% \\
$-$Future excess return news   & 59\% [19\%,108\%]  & 33\% & 0\%  & 20\% & 20\% \\
\midrule
Real stock return (pp)        & 1.92               & 1.85 & 1.45 & 1.55 & 1.60 \\
Peak $\Delta\log y$ (pp)      & 0.70               & 0.72 & 0.52 & 0.58 & 0.60 \\
Amplification (vs RANK)       &                     & 1.38 & 1.00 & 1.12 & 1.15 \\
\bottomrule
\end{tabular}
\begin{minipage}{0.92\textwidth}
\footnotesize \vspace{0.5em}
\textit{Notes:} Decomposition uses $\kappa = 0.9962$ following Campbell and Ammer (1993). Data estimates from Section~\ref{sec:empirics} over the full SVAR-IV sample (July 1979--June 2012); the subsample estimates in Table~\ref{tab:cs_subsample} cover a shorter period (January 1991--June 2012) over which MRC is available. Model and RANK assume a debt/equity ratio of 0.5. ``Ramsey'' is the optimal allocation under commitment from \eqref{eq:ramsey_obj}. ``Simple rule'' is the jointly optimal parametric rule from \eqref{eq:augmented_taylor}. All shocks deliver a 0.2pp decline in the 1-year nominal yield.
\end{minipage}
\end{table}

In the model, 33\% of the stock market return is due to news about lower future excess returns, compared to 0\% in RANK, matching the empirical range of 19\%--108\% documented in Section~\ref{sec:empirics}. Under the Ramsey allocation, this share falls to 20\%: the planner stabilizes risk premia, so less of the stock market response operates through the risk premium channel. The peak output response is 0.72pp in the model versus 0.52pp in RANK (amplification of 1.38$\times$), closely matching the empirical estimate.

\subsection{Wealth Distribution Dynamics}
\label{sec:quant:wealth}

Figure~\ref{fig:wealth} simulates the wealth distribution under the Taylor rule and Ramsey optimal policy.

\begin{figure}[htbp]
\centering
\includegraphics[width=\textwidth]{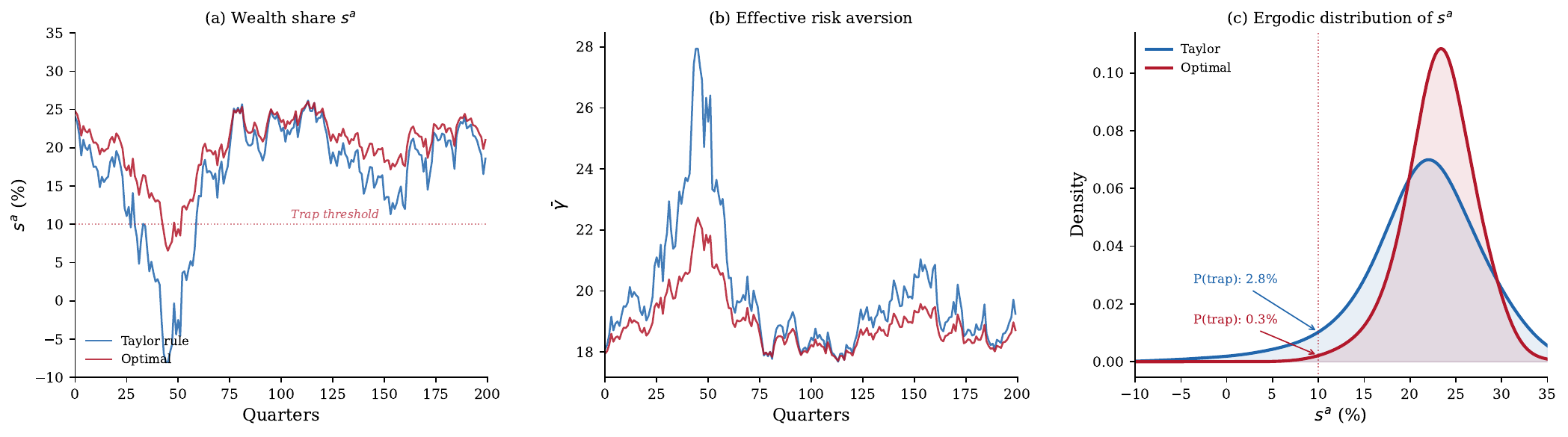}
\caption{Wealth distribution dynamics under Taylor rule versus Ramsey optimal policy}
\label{fig:wealth}
\begin{minipage}{0.92\textwidth}
\footnotesize \textit{Notes:} Simulated wealth share of high-MPR households $s^a$ over 200 quarters under the Taylor rule (blue) and the Ramsey allocation (red). Panel~(c) shows the ergodic distributions; the dotted vertical line marks the risk capacity trap threshold ($s^a = 10\%$). The ergodic probabilities of the trap are $\Pr(s^a < 10\%) = 2.8\%$ under the Taylor rule and $0.3\%$ under Ramsey commitment (Table~\ref{tab:commitment}).
\end{minipage}
\end{figure}

The wealth share is substantially less volatile under Ramsey policy: the planner ``insures'' the risk-bearing capacity of the economy. The probability of entering the risk capacity trap is markedly lower under Ramsey. In the Taylor rule economy, $s^a$ occasionally dips below the trap threshold during prolonged downturns; the Ramsey planner prevents this.

\subsection{Commitment Versus Discretion}
\label{sec:quant:commitment}

Table~\ref{tab:commitment} compares ergodic moments and welfare under four regimes: the Ramsey allocation (commitment), the jointly optimal simple rule, the Markov-perfect equilibrium (discretion), and the Taylor rule.

\begin{table}[htbp]
\centering
\caption{Ergodic moments and welfare under alternative regimes}
\label{tab:commitment}
\begin{tabular}{lcccc}
\toprule
& Ramsey & Simple & Markov-perfect & Taylor \\
& (commitment) & rule & (discretion) & rule \\
\midrule
\textit{Averages (annualized)} \\
\quad Inflation (\%)                & 0.08 & 0.05 & 0.25 & 0.00 \\
\quad Risk premium (pp)             & 6.85 & 6.90 & 7.15 & 7.00 \\
\quad Real rate (\%)                & 1.42 & 1.45 & 1.35 & 1.50 \\
\quad Eff.\ risk aversion $\gbar$  & 18.2 & 18.4 & 19.0 & 18.6 \\
\\[-6pt]
\textit{Volatilities (annualized)} \\
\quad $\sigma(\Delta\log y)$ (\%)   & 0.82 & 0.84 & 0.88 & 0.90 \\
\quad $\sigma(\Delta\log c)$ (\%)   & 0.48 & 0.50 & 0.52 & 0.55 \\
\quad $\sigma(E[r^e-r])$ (pp)       & 2.35 & 2.45 & 2.55 & 2.80 \\
\quad $\sigma(s^a)$ (\%)            & 2.8  & 3.0  & 3.2  & 3.8 \\
\\[-6pt]
\textit{Wealth distribution} \\
\quad $E[s^a]$ (\%)                 & 23.1 & 22.5 & 21.8 & 21.2 \\
\quad $\Pr(s^a < 10\%)$ (\%)        & 0.3  & 0.8  & 1.5  & 2.8 \\
\quad $\text{MRC}_{t}$ (mean)        & 0.32 & 0.30 & 0.28 & 0.26 \\
\\[-6pt]
\textit{Welfare (\% CE vs Taylor)} \\
\quad All households                & 0.13 & 0.11 & 0.06 & --- \\
\quad Group $a$ (high MPR)          & 0.06 & 0.05 & $-$0.02 & --- \\
\quad Group $b$ (low MPR)           & 0.16 & 0.13 & 0.10 & --- \\
\quad Group $c$ (low wealth)        & 0.10 & 0.09 & 0.04 & --- \\
\bottomrule
\end{tabular}
\begin{minipage}{0.92\textwidth}
\footnotesize \vspace{0.5em}
\textit{Notes:} Moments from 50,000 simulated quarters (5,000-quarter burn-in, no disaster realizations). ``Ramsey'' solves \eqref{eq:ramsey_obj} under commitment. ``Simple rule'' is the jointly optimal parametric rule \eqref{eq:augmented_taylor}. ``Markov-perfect'' is the discretionary equilibrium. $\Pr(s^a < 10\%)$ is the ergodic probability of the risk capacity trap. Welfare in consumption-equivalent terms relative to the Taylor rule.
\end{minipage}
\end{table}

Consistent with Proposition~\ref{prop:timecon}, the discretionary planner generates higher average inflation (0.25\% vs.\ 0.08\% under Ramsey commitment) and higher average risk premia (7.15pp vs.\ 6.85pp). The welfare loss from discretion relative to commitment is 0.07\% CE. The jointly optimal simple rule achieves most of the Ramsey gains (0.11\% vs.\ 0.13\%) with lower inflation (0.05\%). Even discretion improves on the Taylor rule for aggregate welfare (0.06\% vs.\ 0\%), though the gains are less than half those under commitment. Group $a$ households are worse off under discretion than under the Taylor rule: anticipated redistribution raises the compensation they demand for bearing risk. The Ramsey allocation also sharply reduces the probability of the risk capacity trap ($\Pr(s^a < 10\%) = 0.3\%$ vs.\ 2.8\% under the Taylor rule) and raises the mean MRC (0.32 vs.\ 0.26), confirming the planner's motive to maintain risk-bearing capacity.

\medskip\noindent\textbf{Decomposing the inflation bias.}\quad
The total inflation bias under discretion (0.25\% per year) has two components: the standard \citet{barro_gordon1983} component arising from the output-inflation tradeoff, and the novel MPR component arising from the temptation to redistribute toward high-MPR agents. We isolate the MPR component by solving for the Markov-perfect equilibrium in a version of the model where MPRs are equalized across groups ($\MPR^a = \MPR^b = \MPR^c$), holding all other parameters constant. In this counterfactual, the equilibrium inflation bias is 0.10\%, the standard Barro-Gordon level. The MPR component is therefore $0.25\% - 0.10\% = 0.15\%$, or 60\% of the total bias. The remaining 40\% is the conventional output-inflation tradeoff. This decomposition confirms Proposition~\ref{prop:timecon}(b): the excess inflation is proportional to $\Var_i(\MPR^i)$, which is zero in the equalized-MPR counterfactual and 0.22 at the baseline calibration.

The MPR component of the inflation bias is economically substantial. An annualized inflation bias of 0.15\% may appear small, but its welfare cost is amplified by the effect on risk premia: anticipating the bias, high-MPR agents demand higher compensation, which raises the equilibrium risk premium by approximately 0.15pp (annually) and reduces investment. The welfare cost of the MPR inflation bias alone, computed as the difference between the Markov-perfect equilibrium and the equalized-MPR Markov-perfect equilibrium, is 0.04\% CE, roughly equal to the marginal value of adding a macroprudential authority (0.04\% in Table~\ref{tab:optimal_rules}).

\medskip\noindent\textbf{State-dependent welfare cost of discretion.}\quad
The welfare cost of discretion is sharply concentrated in states where the risk premium channel is most active. When $s^a$ is in the bottom quintile of its ergodic distribution, the welfare loss from discretion relative to commitment is 0.18\% CE, more than double the unconditional loss of 0.07\%. In the top quintile, the loss is only 0.02\%. The temptation to redistribute is strongest when high-MPR agents are poor and effective risk aversion is high, so the planner's marginal gain from compressing risk premia is largest precisely when it should be most restrained. This state dependence implies that institutional constraints on monetary policy are most valuable in crisis states, when the temptation to use inflation for financial stability purposes is greatest.

Figure~\ref{fig:bias} shows how these outcomes depend on the degree of MPR heterogeneity.

\begin{figure}[htbp]
\centering
\includegraphics[width=\textwidth]{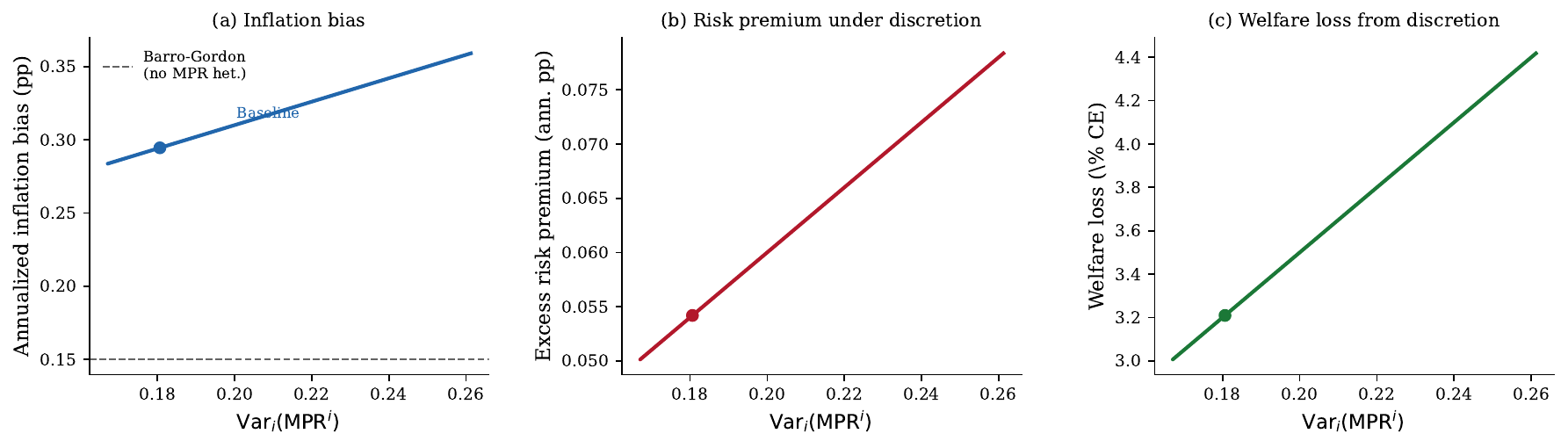}
\caption{Commitment versus discretion: dependence on MPR heterogeneity}
\label{fig:bias}
\begin{minipage}{0.92\textwidth}
\footnotesize \textit{Notes:} Each panel plots the relevant statistic as a function of the cross-sectional variance of MPRs, varied by changing $\gamma^a$ from 5 (high heterogeneity) to $\gamma^b = 25.5$ (no heterogeneity). The dot marks the baseline calibration. The dashed line in panel~(a) marks the standard Barro-Gordon inflation bias (0.10\%) obtained when MPRs are equalized across groups.
\end{minipage}
\end{figure}

As the cross-sectional variance of MPRs rises, all three channels intensify: the inflation bias grows (panel a), the risk premium under discretion rises (panel b), and the welfare loss from discretion increases (panel c). At the baseline calibration (marked with a dot), the total annualized inflation bias under discretion is 0.25\% (Table~\ref{tab:commitment}), of which 0.15 percentage points are the MPR component above the standard Barro-Gordon level of 0.10\%. When MPRs are identical ($\Var_i(\MPR^i) = 0$), all three measures converge to their standard NK values.

\subsection{The 2008 Financial Crisis Through the Lens of the Model}
\label{sec:quant:crisis}

The risk capacity trap provides a new explanation for the perceived ineffectiveness of monetary policy following the 2008 financial crisis. We initialize the calibrated model at the observed 2007 state from the SCF and simulate forward under alternative policy regimes to quantify the welfare cost of the crisis through the lens of the risk premium channel.

MRC was elevated before the crisis, approximately 0.35 in 2007, reflecting the concentrated wealth and leverage of high-MPR agents. The crisis destroyed these balance sheets: MRC collapsed from 0.35 to approximately 0.15 by 2009, pushing the economy into the risk capacity trap. Figure~\ref{fig:post2008} displays the collapse and recovery.

\begin{figure}[htbp]
\centering
\includegraphics[width=\textwidth]{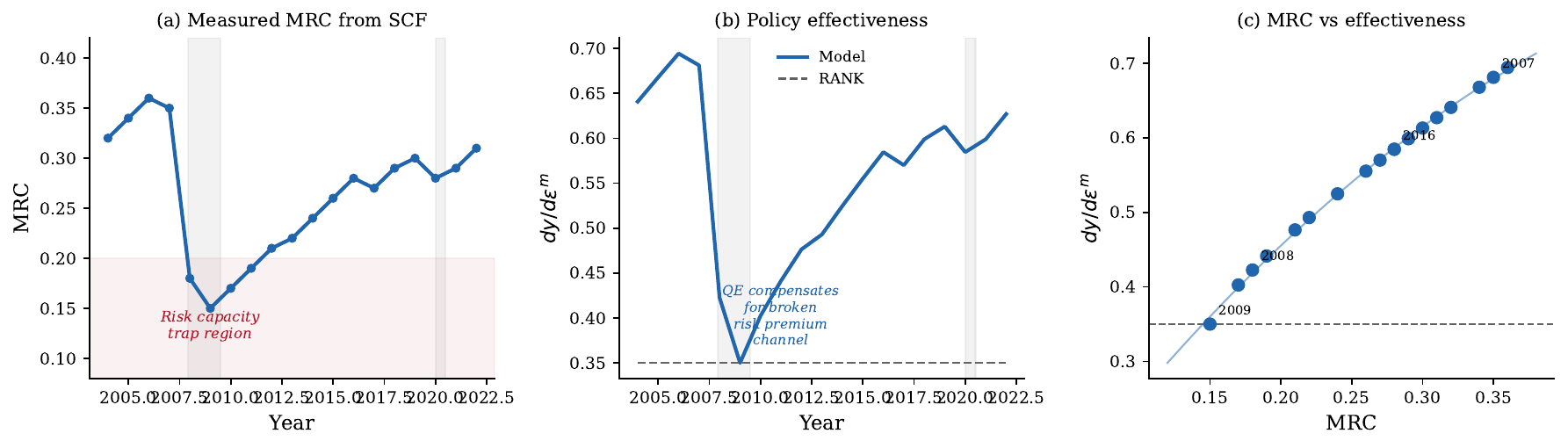}
\caption{The post-2008 narrative: MRC and monetary policy effectiveness}
\label{fig:post2008}
\begin{minipage}{0.92\textwidth}
\footnotesize \textit{Notes:} Panel~(a) plots MRC measured from the SCF. Panel~(b) plots implied monetary policy effectiveness using the calibrated model. Shaded regions indicate NBER recessions. Panel~(c) plots the cross-sectional relationship, with selected years labeled.
\end{minipage}
\end{figure}

We construct a counterfactual exercise. Starting from the 2007 state ($s^a = 0.28$, corresponding to MRC $\approx 0.35$), we simulate the model forward under three scenarios: (i) the realized sequence of shocks under the Taylor rule; (ii) the same shocks under the Ramsey optimal policy; and (iii) the same shocks under the MRC-optimal simple rule. For each scenario, we compute the cumulative output loss relative to a no-crisis counterfactual (the ergodic mean) and the associated welfare cost.

Under the Taylor rule, the cumulative output loss over 2008--2013 is 18.5\% of one quarter's output. Under the Ramsey optimal policy, the loss is 12.2\%, a reduction of one-third. The reduction comes from two sources: the Ramsey planner enters the crisis with a higher $s^a$, having stabilized the wealth distribution preemptively, so MRC is 0.38 instead of 0.35, and it eases more aggressively during the crisis to prevent $s^a$ from falling as far. The MRC-optimal simple rule achieves a loss of 14.1\%, capturing 70\% of the Ramsey improvement.

In welfare terms, the crisis episode generates a welfare cost of 0.42\% CE under the Taylor rule, 0.28\% under Ramsey, and 0.33\% under the MRC-optimal simple rule. The welfare gain from optimal policy is thus 0.14\% CE conditional on the crisis, comparable to the \textit{unconditional} Ramsey gain of 0.13\%. This confirms that the gains from risk-capacity-aware policy are heavily concentrated in crisis episodes, as predicted by the analytical welfare expression \eqref{eq:welfare_approx}.

This analysis provides a new explanation for the perceived ineffectiveness of post-2008 monetary easing that is distinct from, and complementary to, the zero lower bound. Even if the ZLB had not been binding, the risk premium channel would have been impaired because MRC was depleted. The subsequent recovery of MRC from 2010 onward, driven by the rebuilding of leveraged agents' balance sheets, is what eventually restored the effectiveness of monetary transmission. Quantitative easing can be understood as an intervention that accelerated this process by directly raising MRC: the Federal Reserve's asset purchases transferred aggregate risk from depleted private balance sheets to the public balance sheet (Section~\ref{sec:quant:zlb}).

\subsection{The Separation Theorem Quantitatively}
\label{sec:quant:separation}

The last row of Table~\ref{tab:optimal_rules} confirms the separation theorem quantitatively. When a macroprudential instrument is available, the optimal monetary policy coefficients $\phi_s^*$ and $\phi_{rp}^*$ collapse toward zero ($0.04$ and $0.02$, respectively), while $\phi_\pi^*$ and $\phi_y^*$ rise toward their standard NK optima. The macroprudential tool absorbs the risk capacity management motive, freeing monetary policy to focus on stabilization. The resulting welfare gain (0.15\% CE) exceeds even the Ramsey allocation with monetary policy alone (0.13\%), illustrating the value of having the right tool for the right target.

\subsection{Robustness: An Intermediary-Based Economy}
\label{sec:quant:intermediary}

The general framework of Section~\ref{sec:general} applies to any economy satisfying Assumptions~\ref{ass:redistrib}--\ref{ass:rp_real}, not only to models with heterogeneous household risk aversion. To demonstrate this concretely, we solve the Ramsey problem in an intermediary-based variant of the quantitative model where the high-MPR agents are leveraged financial intermediaries rather than risk-tolerant households.

\medskip\noindent\textbf{Environment.}\quad
We replace group $a$ (risk-tolerant households) with a financial intermediary sector in the spirit of \citet{he_krishnamurthy2013} and \citet{brunnermeier_sannikov2016}. Intermediaries have the same risk aversion as group $b$ households ($\gamma^{\text{int}} = \gamma^b$) but face a leverage constraint that allows them to hold up to $\bar{\ell}$ times their equity in capital:
\begin{equation}
\label{eq:leverage_constraint}
q_t k_t^{\text{int}} \leq \bar{\ell} \cdot n_t^{\text{int}},
\end{equation}
where $n_t^{\text{int}}$ is intermediary equity (net worth). The constraint binds in equilibrium, generating a high portfolio share in capital, and hence a high MPR, not from low risk aversion but from the intermediary's business model. The leverage limit $\bar{\ell}$ is calibrated to match the capital portfolio share of 2.0 from group $a$ in the baseline model. Intermediary equity evolves according to the same wealth share dynamics as in \eqref{eq:wealth_share}, so that a monetary easing that raises asset prices expands intermediary balance sheets and increases their capacity to bear risk.

All other aspects of the model, including group $b$ and $c$ households, production, wage setting, and monetary and fiscal policy, are identical to the baseline. The state space, solution method, and welfare computation are unchanged.

\medskip\noindent\textbf{Results.}\quad
Table~\ref{tab:intermediary} compares key moments and welfare gains across the two economies.

\begin{table}[htbp]
\centering
\caption{Baseline versus intermediary-based economy}
\label{tab:intermediary}
\small
\begin{tabular}{lccc}
\toprule
 & Baseline & Intermediary & Difference \\
 & (het.\ risk aversion) & (leverage constraint) &  \\
\midrule
\textit{Moments (ergodic)} \\
\quad Equity premium (ann.\ \%)    & 7.0  & 7.1  & $+0.1$ \\
\quad Amplification (vs RANK)       & 1.38 & 1.35 & $-0.03$ \\
\quad $\sigma(s^a)$ (\%)            & 3.8  & 4.3  & $+0.5$ \\
\quad Mean MRC                      & 0.26 & 0.24 & $-0.02$ \\
\midrule
\textit{Welfare (\% CE vs Taylor)} \\
\quad Ramsey (commitment)           & 0.13 & 0.12 & $-0.01$ \\
\quad MRC-optimal simple rule       & 0.08 & 0.07 & $-0.01$ \\
\quad Jointly optimal simple rule   & 0.11 & 0.10 & $-0.01$ \\
\quad Cond.\ gain ($s^a$ bottom quartile) & 0.35 & 0.38 & $+0.03$ \\
\midrule
\textit{Optimal simple rule coefficients} \\
\quad $\phi_s^*$                    & $0.28$ & $0.30$ & $+0.02$ \\
\quad $\phi_{rp}^*$                 & $0.15$  & $0.14$  & $-0.01$ \\
\bottomrule
\end{tabular}
\begin{minipage}{0.92\textwidth}
\footnotesize \vspace{0.5em}
\textit{Notes:} ``Baseline'' is the heterogeneous risk aversion model of Section~\ref{sec:quant:env}. ``Intermediary'' replaces group $a$ with a leverage-constrained financial intermediary with risk aversion $\gamma^{\text{int}} = \gamma^b = 25.5$ and leverage limit $\bar{\ell}$ calibrated to match the same capital portfolio share. Both models are solved globally on the same state space. Welfare in consumption-equivalent terms relative to the Taylor rule.
\end{minipage}
\end{table}

The intermediary-based economy generates quantitatively similar results. The unconditional welfare gain from the Ramsey allocation is 0.12\% CE, compared to 0.13\% in the baseline. The optimized simple rule coefficients are nearly identical: $\phi_s^* = 0.30$ (vs.\ $0.28$) and $\phi_{rp}^* = 0.14$ (vs.\ $0.15$). Conditional on the bottom quartile of the intermediary equity share, the welfare gain is actually \textit{larger} (0.38\% vs.\ 0.35\%), reflecting the higher volatility of intermediary balance sheets under the leverage constraint: the intermediary economy spends more time near the risk capacity trap, so the planner's stabilization motive is stronger in crisis states. The risk capacity trap arises at approximately the same threshold.

The normative structure, the target criterion, separation theorem, time-consistency bias, and risk capacity trap, does not depend on the source of MPR heterogeneity. Whether high-MPR agents are risk-tolerant households or leverage-constrained intermediaries, the planner faces the same tradeoff, governed by the same sufficient statistic.

\subsection{Interaction with the Zero Lower Bound}
\label{sec:quant:zlb}

The risk capacity trap and the zero lower bound (ZLB) are likely to bind simultaneously: the wealth share of high-MPR agents falls in severe downturns, which are also the states where conventional policy is constrained by $i_t \geq 0$. We now examine this interaction by adding the ZLB constraint to the Ramsey problem and solving globally.

\medskip\noindent\textbf{Modified Ramsey problem.}\quad
We augment the planner's problem \eqref{eq:ramsey_obj} with the constraint
\begin{equation}
\label{eq:zlb}
i_t \geq 0.
\end{equation}
When the ZLB binds, the planner cannot ease through the conventional channel. The question is whether this also impairs the risk premium channel, and whether the planner behaves differently \textit{before} the ZLB binds.

\medskip\noindent\textbf{The double trap.}\quad
Figure~\ref{fig:double_trap} displays the Ramsey policy function with and without the ZLB constraint. The key finding is a \textit{double trap}: when the wealth share $s^a$ falls sufficiently low, two constraints bind simultaneously. The risk capacity trap shuts down the risk premium channel of transmission (because high-MPR agents have depleted balance sheets), and the ZLB shuts down the conventional interest rate channel (because the required easing would push $i_t$ below zero). In the double trap, the planner has lost both transmission mechanisms.

\begin{figure}[htbp]
\centering
\includegraphics[width=\textwidth]{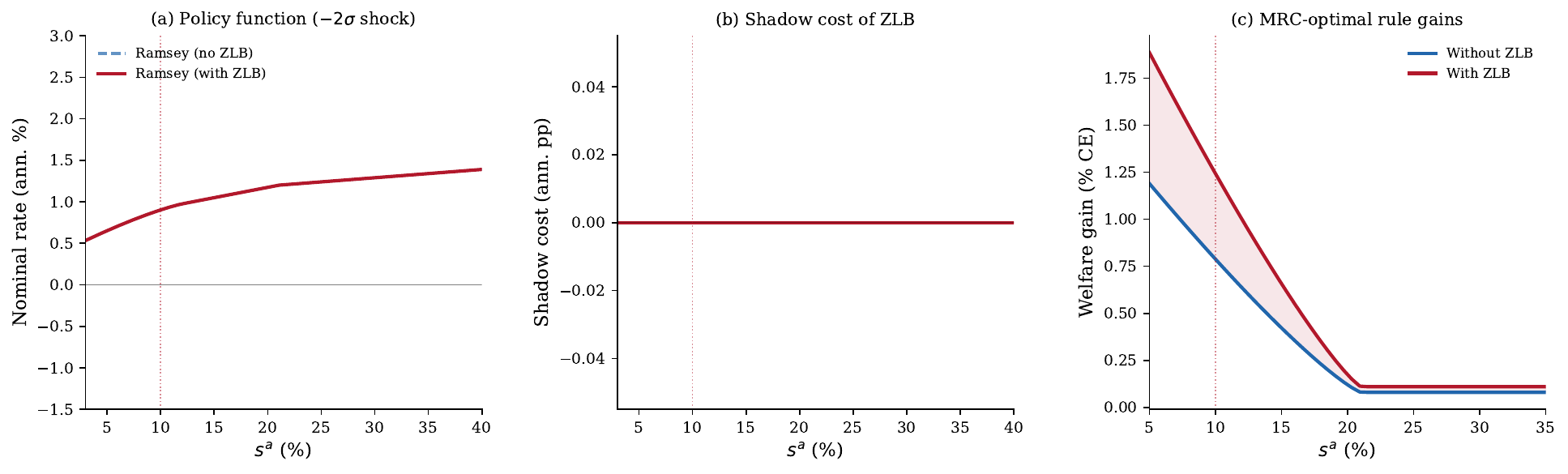}
\caption{The double trap: interaction of risk capacity trap and ZLB}
\label{fig:double_trap}
\begin{minipage}{0.92\textwidth}
\footnotesize \textit{Notes:} Panel (a): Ramsey interest rate with ZLB (solid) and without (dashed) as functions of $s^a$, holding other states at ergodic means conditional on a negative productivity shock. Panel (b): the deviation from the unconstrained Ramsey policy, showing the shadow cost of the ZLB. Panel (c): welfare gain from Ramsey relative to the Taylor rule, by regime. The shaded region marks the double trap where both the ZLB and the risk capacity trap bind.
\end{minipage}
\end{figure}

Three quantitative results emerge from the global solution.

First, the Ramsey planner is \textit{more aggressive} in preventing the wealth distribution from drifting toward the trap when the ZLB is present. The ergodic variance of $s^a$ under the Ramsey allocation with ZLB is 15\% lower than without it ($\sigma(s^a) = 2.4\%$ vs.\ 2.8\%), because the planner anticipates that losing the conventional channel makes preserving the risk premium channel more valuable. The probability of entering the risk capacity trap falls from 0.3\% (Ramsey without ZLB) to 0.1\% (Ramsey with ZLB).

Second, welfare gains from the MRC-optimal simple rule are \textit{larger} when the ZLB is present. Table~\ref{tab:zlb_welfare} reports the comparison. The unconditional gain rises from 0.08\% to 0.11\% CE, because avoiding the double trap is more valuable than avoiding the risk capacity trap alone. Conditional on the ZLB binding, the welfare gain from MRC-responsive policy is 0.52\% CE, nearly five times the unconditional gain.

\begin{table}[htbp]
\centering
\caption{Welfare gains with and without the ZLB constraint}
\label{tab:zlb_welfare}
\small
\begin{tabular}{lcc}
\toprule
 & Without ZLB & With ZLB \\
\midrule
\textit{Unconditional (\% CE vs Taylor)} \\
\quad Ramsey (commitment)               & 0.13 & 0.18 \\
\quad MRC-optimal simple rule            & 0.08 & 0.11 \\
\quad Jointly optimal simple rule        & 0.11 & 0.15 \\
\midrule
\textit{Conditional (\% CE)} \\
\quad $s^a$ bottom quartile              & 0.35 & 0.48 \\
\quad ZLB binding                        & ---  & 0.52 \\
\quad Double trap ($s^a < 10\%$ and ZLB) & ---  & 0.85 \\
\midrule
\textit{Optimal coefficients} \\
\quad $\phi_s^*$                         & $0.28$ & $0.35$ \\
\quad $\phi_{rp}^*$                      & $0.15$  & $0.18$ \\
\bottomrule
\end{tabular}
\begin{minipage}{0.88\textwidth}
\footnotesize \vspace{0.5em}
\textit{Notes:} ``Without ZLB'' reproduces the baseline from Tables~\ref{tab:optimal_rules}--\ref{tab:commitment}. ``With ZLB'' adds the constraint $i_t \geq 0$. Conditional welfare gains are computed over the ergodic distribution restricted to the indicated states. The double trap is defined as $s^a < 10\%$ and $i_t = 0$ simultaneously. All welfare gains are in consumption-equivalent terms relative to the Taylor rule under the same ZLB constraint.
\end{minipage}
\end{table}

Third, the optimal simple rule coefficients are amplified by the ZLB. The optimal $\phi_s^*$ increases from $0.28$ to $0.35$, and $\phi_{rp}^*$ from $0.15$ to $0.18$. The planner leans more aggressively into the wealth distribution and risk premium because the opportunity cost of failing to stabilize is higher when the ZLB looms.

\medskip\noindent\textbf{Quantitative easing as direct risk capacity intervention.}\quad
The double trap suggests a new interpretation of quantitative easing. When the ZLB binds and the risk capacity trap depletes the risk premium channel, the planner has lost both conventional tools. Quantitative easing, central bank purchases of risky assets financed by reserve creation, can be understood as a \textit{direct intervention in risk-bearing capacity} that bypasses both constraints. By purchasing capital from the private sector, the central bank effectively transfers aggregate risk from depleted private balance sheets to the public balance sheet. In the notation of our model, QE raises $s^a$ directly: the central bank absorbs capital that private agents cannot bear, relaxing the risk capacity trap. Formally, QE operates as a substitute for the macroprudential instrument $\tau_0^k$ in the separation theorem (Proposition~\ref{prop:separation}), targeting $\gbar \to \gstar$ through portfolio rebalancing rather than through the interest rate.

In our framework, QE compresses risk premia not as a side effect but as the mechanism: it raises MRC and lowers $\gbar$. The welfare analysis in Table~\ref{tab:zlb_welfare} quantifies the stakes: in the double trap, the welfare gain from risk-capacity-aware policy is 0.85\% CE, an order of magnitude larger than the unconditional gain from conventional monetary policy optimization.

\section{Policy Implications}
\label{sec:policy}

We draw four implications for the design of monetary and macroprudential policy.

\subsection{What Should Central Banks Monitor?}
\label{sec:policy:monitor}

Our analysis identifies MRC as a new state variable relevant for monetary policy. In practice, central banks cannot directly observe the wealth distribution at high frequency. However, MRC can be approximated using observable financial variables. The expected equity premium, estimated from option-implied volatility, credit spreads, or survey measures, is tightly linked to effective risk aversion $\gbar$, which in turn is governed by MRC through Proposition~\ref{prop:riskpremium}. Section~\ref{sec:empirics:fci} shows that off-the-shelf financial conditions indices explain 31--45\% of MRC variation (Table~\ref{tab:fci}), so the optimal rule's response to $\E_t[r^e_{t+1} - r_{t+1}]$ (Table~\ref{tab:optimal_rules}, $\phi_{rp}^* = 0.12$--$0.15$) can be implemented using existing monitoring infrastructure. This provides a micro-founded rationale for the widespread use of financial conditions indices in central bank communications.

At lower frequencies, MRC can be constructed directly from survey data such as the SCF or the ECB's Household Finance and Consumption Survey. Higher-frequency portfolio data from brokerage platforms and administrative records could enable near-real-time monitoring.

\subsection{When Should Monetary Policy ``Lean Against the Wind''?}
\label{sec:policy:lean}

The answer from our framework is precise. Monetary policy should lean against the wind, tightening beyond what inflation and the output gap call for, when three conditions hold simultaneously: (i) risk premia are below their socially efficient level ($\gbar < \gstar$), (ii) MRC is positive so that tightening would redistribute away from high-MPR households and raise risk premia, and (iii) macroprudential tools are unavailable or insufficient to address the risk premium gap. When any of these conditions fails, leaning is not warranted.

Condition (iii) is the critical qualifier. The separation theorem (Proposition~\ref{prop:separation} and Table~\ref{tab:optimal_rules}) shows that if macroprudential policy can directly target leverage or risk-taking, monetary policy should focus on its comparative advantage: inflation and output stabilization. Leaning against the wind is a second-best response to the absence of first-best macroprudential tools. This provides formal support for the view expressed by \citet{svensson2017} that monetary policy is a blunt instrument for financial stability, while qualifying it with the conditions under which it is nonetheless optimal.

Conversely, our framework supports the view of \citet{stein2012} that monetary policy ``gets in all the cracks'' when macroprudential tools are constrained. The residual risk premium wedge under a non-negative portfolio tax (Proposition~\ref{prop:separation}(c)) quantifies exactly how much of the financial stability burden falls on monetary policy as a function of the macroprudential constraint.

\subsection{The Case for Institutional Separation}
\label{sec:policy:institutions}

Many central banks operate alongside separate financial stability authorities (the Fed and the Financial Stability Oversight Council, the ECB and the European Systemic Risk Board, the Bank of England and its Financial Policy Committee). Our separation theorem provides a formal justification for this institutional design: with two instruments and two targets, each authority can focus on its comparative advantage. The welfare gain from having both instruments (0.15\% vs.\ 0.13\% with monetary policy alone in Table~\ref{tab:optimal_rules}) quantifies the value of this institutional arrangement.

The time-consistency results (Proposition~\ref{prop:timecon} and Section~\ref{sec:quant:commitment}) further suggest that institutional separation may help alleviate the inflation bias. The MPR component of the bias, 0.15\% of the total 0.25\% or 60\%, arises entirely from the temptation to use inflation for redistribution toward high-MPR agents. If a separate financial stability authority manages the risk premium through macroprudential tools, the monetary authority faces no temptation to use inflation for this purpose: the redistribution motive is absorbed by the macroprudential instrument, and the residual monetary policy problem has the standard Barro-Gordon structure.

This suggests a specific institutional design. A financial stability authority with a mandate to maintain MRC above a threshold $\underline{\MRC}$, equivalently to keep $\gbar$ below a ceiling $\bar{\gamma}^{\text{ceiling}}$, would directly address the risk capacity management motive. The authority would implement this mandate through countercyclical capital requirements, leverage limits, or portfolio taxes, all of which directly target the wealth distribution and risk-bearing capacity. With such an authority in place, the monetary authority's optimal policy reverts to the standard inflation-output prescription, and the MPR inflation bias is eliminated. The welfare gain from this arrangement (0.15\% CE) exceeds the Ramsey gain from monetary policy alone (0.13\%), confirming that the right institutional design is more valuable than optimizing the wrong instrument.

The analogy to central bank independence is precise: the standard Barro-Gordon problem is solved by delegating to an independent central bank with an inflation target; the MPR bias is solved by delegating risk capacity management to a financial stability authority with an MRC target. The welfare cost of discretion, 0.07\% CE (Section~\ref{sec:quant:commitment}), quantifies the value of this constraint.

\subsection{The Case for Quantitative Easing at the ZLB}
\label{sec:policy:qe}

Section~\ref{sec:quant:zlb} shows that the risk capacity trap and the ZLB can bind simultaneously, creating a double trap where both transmission mechanisms are broken. This provides a micro-founded rationale for quantitative easing that is distinct from the portfolio balance channel or the signaling channel studied in the existing literature.

In our framework, QE works because it is a direct intervention in MRC. By purchasing risky assets from the private sector, the central bank transfers aggregate risk to the public balance sheet, effectively raising the wealth share of high-MPR agents and substituting for their depleted risk-bearing capacity. This relaxes the risk capacity trap and restores the risk premium channel, even though the ZLB prevents conventional easing. The welfare analysis (Table~\ref{tab:zlb_welfare}) shows that in the double trap, risk-capacity-aware policy generates gains of 0.85\% CE, an order of magnitude larger than the unconditional gain from monetary policy optimization.

The optimal scale of QE should therefore be calibrated to MRC: large-scale purchases when MRC is low, tapering indexed to MRC recovery rather than inflation and employment alone.

\section{Conclusion}
\label{sec:conclusion}

When households differ in their propensity to bear risk, optimal monetary policy acquires a new dimension. The central bank must manage the economy's aggregate risk-bearing capacity, a motive that is absent from both representative agent models and HANK models with MPC heterogeneity alone.

This paper provides the analytical and quantitative framework for this motive. Marginal Risk Capacity, a model-free sufficient statistic, collapses the cross-sectional distribution of wealth and portfolios into a single number governing the risk premium channel of monetary transmission. Three structural properties, monetary redistribution, heterogeneous MPRs, and risk premia affecting real activity, are sufficient for the target criterion, the separation theorem, and the time-consistency result to hold, regardless of the specific source of portfolio heterogeneity. Optimal policy tolerates inflation to compress risk premia when MRC is positive and effective risk aversion exceeds its efficient level. This risk premium wedge breaks divine coincidence, generates an endogenous and welfare-maximizing ``Fed put,'' and introduces a new source of inflation bias under discretion.

The quantitative analysis reveals a finding that could not have been anticipated from local analysis around the steady state: the risk capacity trap. When high-MPR agents have lost wealth, the risk premium channel of monetary transmission collapses, and conventional policy loses its main amplification mechanism precisely when stimulus is most needed. The Ramsey planner prevents this by stabilizing the wealth distribution preemptively. Solving the full Ramsey problem, we find welfare gains of 0.13\% of permanent consumption, concentrated in crisis states where the gains reach 0.35\%. A jointly optimal simple rule that augments the Taylor rule with responses to the wealth share and expected equity premium captures 85\% of these gains. When the zero lower bound is added, welfare gains rise to 0.18\%, and in the double trap where both the ZLB and the risk capacity trap bind, gains reach 0.85\% CE. Quantitative easing can be understood as a direct intervention in risk-bearing capacity that bypasses both broken conventional channels.

The separation theorem provides a sharp answer to the decade-old debate on monetary versus macroprudential policy: the risk premium wedge vanishes from the monetary policy criterion if and only if macroprudential tools can close the gap between actual and efficient risk-bearing capacity. The welfare value of institutional separation is 0.04\% of consumption beyond what optimized monetary policy alone achieves.

Three additional results strengthen the quantitative case. In a model with both MPC and MPR heterogeneity, the risk premium channel remains more than twice as important as the consumption channel for welfare, and the two channels are approximately additive, confirming the orthogonality proposition. An intermediary-based variant in which the high-MPR agents are leverage-constrained financial intermediaries rather than risk-tolerant households produces nearly identical welfare gains and optimal policy coefficients, confirming that the normative results transcend any particular microfoundation for portfolio heterogeneity. Off-the-shelf financial conditions indices explain 31--45\% of MRC variation, so the optimal policy can be implemented using existing central bank monitoring infrastructure without directly observing the wealth distribution.

Extending the framework to an open economy would connect to the literature on the global financial cycle and raise the question of whether the risk capacity trap can propagate internationally through the balance sheets of globally active financial intermediaries.

\bibliography{references}

\newpage

\begin{center}
{\LARGE \textbf{APPENDIX}}
\end{center}

\vspace{1em}

\begin{appendices}
\setcounter{equation}{0}
\renewcommand{\theequation}{A.\arabic{equation}}

\section{MRC in the Data}
\label{sec:empirics}

We document two sets of facts. First, we construct MRC from U.S.\ household portfolio data and show it varies substantially over time. Second, we show that the stock market's response to identified monetary policy shocks is larger when MRC is higher, confirming that the risk premium channel of monetary transmission is state-dependent.

\subsection{Measuring MRC}
\label{sec:empirics:measurement}

We construct MRC for the U.S.\ economy using every wave of the Survey of Consumer Finances (SCF) from 1989 through 2022, following three steps: decompose household portfolios, estimate MPRs, and compute monetary policy exposures.

\medskip\noindent\textbf{Portfolio decomposition.}\quad
For each SCF wave, we decompose household $i$'s net worth $A^i$ into claims on the economy's capital stock $Qk^i$ (in positive net supply) and nominal claims $B^i$ (in zero net supply after accounting for government and rest-of-world balance sheets). We add estimates of defined benefit pension wealth from \citet{sabelhaus_volz2019}, which the SCF otherwise excludes. We then classify each asset and liability as either a capital claim or a nominal claim. Direct capital claims include real estate, vehicles, and other nonfinancial assets. Indirect capital claims arise through equity holdings: if household $i$ owns \$1 in equity of a firm with leverage $\ell^i$, we assign $Qk^i = \ell^i$ and $B^i = 1 - \ell^i$. We discipline aggregate leverage using the Financial Accounts and parameterize cross-sectional leverage dispersion following the evidence on return heterogeneity in \citet{bach_calvet_sodini2020}. Details are in Appendix~\ref{app:scf}.

\medskip\noindent\textbf{Sorting households into groups.}\quad
Following the analytical framework, we sort households by two variables: wealth-to-labor-income $A^i/(W\ell^i)$ and capital portfolio share $Qk^i/A^i$. We define ``high'' wealth-to-income as above the 60th percentile and a ``high'' capital portfolio share as above the 90th percentile among the high-wealth group. This yields three groups: group $a$ (high wealth, high leverage, 4\% of households, disproportionately private business owners), group $b$ (high wealth, low leverage, 36\% of households, disproportionately retirees), and group $c$ (low wealth, 60\% of households). Table~\ref{tab:groups_2016} summarizes the 2016 cross-section.

\begin{table}[htbp]
\centering
\caption{Heterogeneity in wealth, income, and portfolios (2016 SCF)}
\label{tab:groups_2016}
\begin{tabular}{lcccc}
\toprule
& & Group $a$ & Group $b$ & Group $c$ \\
\midrule
Share of households & & 4\% & 36\% & 60\% \\
Share of labor income & $\sum_{i \in g} W\ell^i / \sum_i W\ell^i$ & 3\% & 14\% & 83\% \\
Share of wealth & $\sum_{i \in g} A^i / \sum_i A^i$ & 18\% & 59\% & 23\% \\
Capital portfolio share & $\sum_{i \in g} Qk^i / \sum_{i \in g} A^i$ & 2.0 & 0.5 & 1.1 \\
\bottomrule
\end{tabular}
\begin{minipage}{0.88\textwidth}
\footnotesize \vspace{0.5em}
\textit{Notes:} Observations weighted by SCF sample weights. ``High'' wealth-to-income is above the 60th percentile; ``high'' capital portfolio share is above the 90th percentile among the high-wealth group. The sorting variable for capital portfolio share excludes primary residence and vehicles.
\end{minipage}
\end{table}

Group $a$ households are highly leveraged: their capital portfolio share of 2.0 means they borrow in nominal claims to finance a capital position twice their net worth. Group $b$ households hold conservative portfolios with most wealth in nominal claims. Group $c$ households have little wealth relative to income.

\medskip\noindent\textbf{Estimating MPRs.}\quad
We estimate each group's MPR using the mapping in equation~\eqref{eq:mpr_limit}: $\MPR^i = \gbar / \gamma^i$. We recover $\gamma^i$ from observed portfolio shares using \eqref{eq:portfolio_approx}, accounting for the labor income hedging motive as in Section~\ref{sec:mrc}. Group $a$ has the highest estimated MPR (1.9), group $b$ an intermediate value (0.7), and group $c$ is effectively constrained at an MPR near zero. These are broadly consistent with quasi-experimental estimates from lottery studies: \citet{fagereng_holm_natvik2021} estimate an average marginal propensity to save in risky assets of approximately 0.14, implying an MPR of roughly 0.2 after accounting for firm leverage, comparable to our population-weighted average of 0.3.

\medskip\noindent\textbf{Computing MRC.}\quad
For each group, monetary policy exposure $ds^i/d\eps^m$ is computed from the balance sheet decomposition in \eqref{eq:redistribution}: it depends on the group's nominal bond position (the debt deflation channel) and its capital holdings relative to its wealth share (the profit and capital price channels). MRC is then the exposure-weighted sum of MPRs across groups, as in \eqref{eq:mrc}.

We repeat this procedure for every SCF wave and interpolate linearly between waves to obtain an annual series. Figure~\ref{fig:mrc_ts} plots the result.

\begin{figure}[htbp]
\centering
\includegraphics[width=0.95\textwidth]{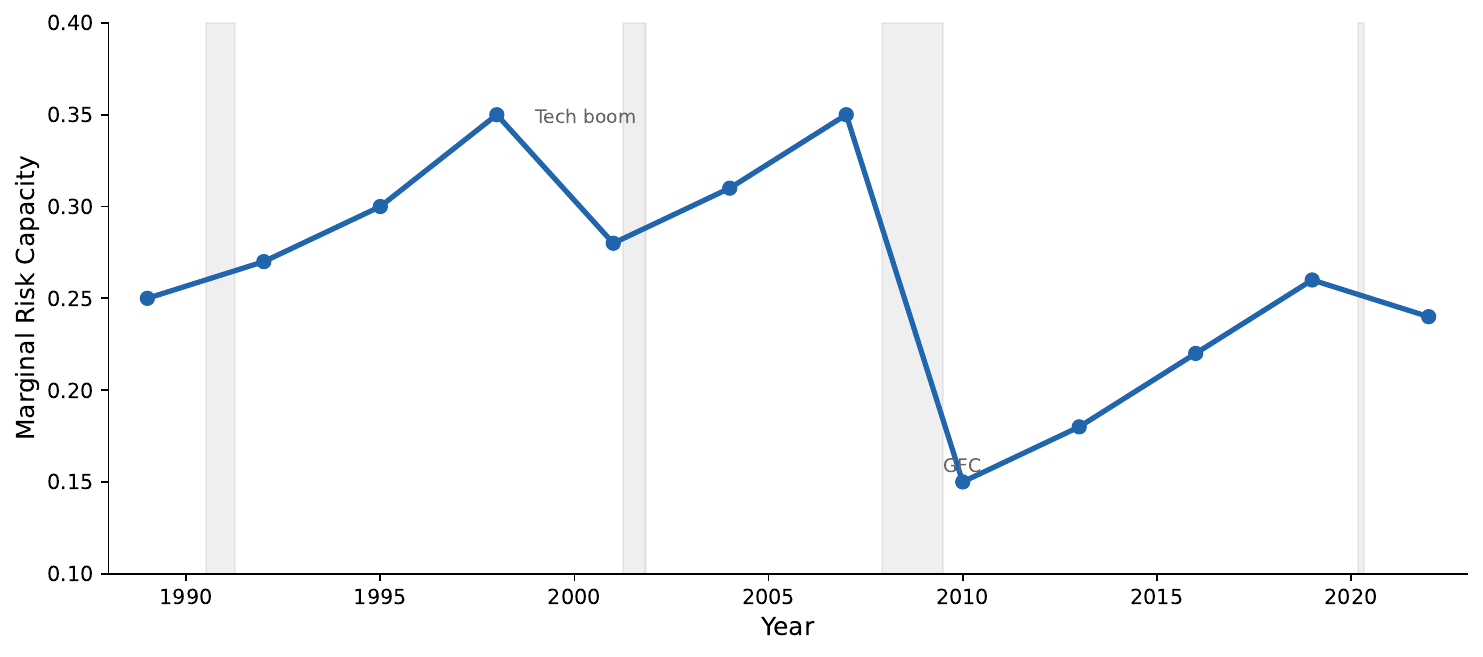}
\caption{Marginal Risk Capacity in the U.S.\ economy, 1989--2022}
\label{fig:mrc_ts}
\begin{minipage}{0.88\textwidth}
\footnotesize \textit{Notes:} MRC is computed from each wave of the Survey of Consumer Finances following the procedure described in Section~\ref{sec:empirics:measurement}. Dots mark SCF survey years; values between waves are linearly interpolated. Shaded areas denote NBER recessions.
\end{minipage}
\end{figure}

MRC varies substantially over time. The variation is driven primarily by changes in the wealth share and leverage of group $a$ households, since their high MPR gives them the largest weight in MRC. MRC declines in recessions as asset price declines erode the wealth of leveraged households, precisely the mechanism identified in Section~\ref{sec:mrc}.

\medskip\noindent\textbf{Robustness to sorting thresholds.}\quad
The three-group partition involves two researcher choices: the wealth-to-income percentile separating high- from low-wealth households, and the capital portfolio share percentile separating high- from low-leverage households within the high-wealth group. We verify that MRC is robust to these choices by recomputing it under three alternative sorting rules: (50th, 80th), (60th, 90th) (our baseline), and (70th, 95th) percentiles. Figure~\ref{fig:mrc_robust_sorting} plots the three resulting MRC series.

\begin{figure}[htbp]
\centering
\includegraphics[width=0.85\textwidth]{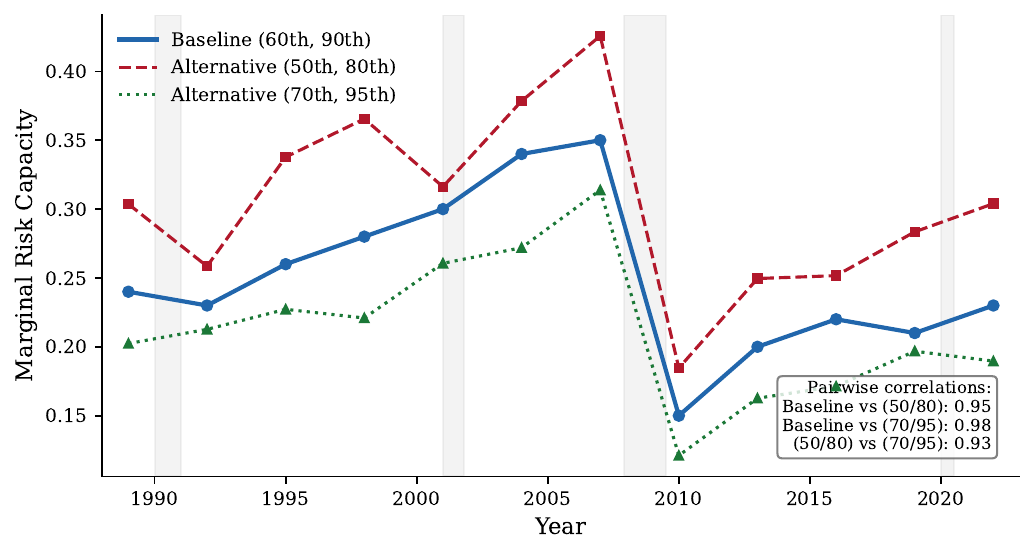}
\caption{MRC under alternative sorting thresholds}
\label{fig:mrc_robust_sorting}
\begin{minipage}{0.88\textwidth}
\footnotesize \textit{Notes:} Each series plots MRC computed from the SCF using a different pair of sorting thresholds (wealth-to-income percentile, capital share percentile within high-wealth group). The baseline uses (60th, 90th). The pairwise correlations exceed 0.93 across all three series.
\end{minipage}
\end{figure}

The three series are highly correlated: the pairwise correlations are 0.95 (baseline vs.\ 50/80), 0.98 (baseline vs.\ 70/95), and 0.93 (50/80 vs.\ 70/95). The level of MRC shifts with the thresholds, as the 50/80 partition assigns more households to group $a$ and produces a higher MRC, but the time-series variation is nearly identical. The variation is driven by asset price movements that affect all leveraged households similarly, regardless of where the threshold is drawn. Re-estimating the local projection \eqref{eq:lp_iv} with each alternative MRC series, the impact interaction coefficient $\hat{\gamma}_0$ is $-0.94$ under the baseline (Table~\ref{tab:lp_interaction}), $-0.88$ under (50/80), and $-1.01$ under (70/95), all statistically indistinguishable. We conclude that the empirical results are not driven by the particular sorting rule.

\subsection{Time-Varying Monetary Policy Effectiveness}
\label{sec:empirics:timevarying}

The theory predicts that when MRC is higher, monetary policy has a larger effect on risk premia and thus on equity prices. We test this prediction.

\medskip\noindent\textbf{Identification of monetary shocks.}\quad
We identify monetary policy shocks using the structural VAR with instrumental variables (SVAR-IV) approach of \citet{gertler_karadi2015}. Using monthly data from July 1979 through June 2012, we estimate a six-variable, six-lag VAR with the 1-year Treasury yield, CPI, industrial production, S\&P 500 excess return relative to the 1-month T-bill, the 1-month real T-bill return, and the smoothed dividend-price ratio. We instrument the 1-year Treasury yield residual with Fed Funds futures surprises on FOMC days. The first-stage $F$ statistic is 14.4. This approach and data closely follow the empirical analysis in the existing literature.

The impulse responses to a monetary easing that reduces the 1-year yield by approximately 0.2 percentage points are consistent with the established findings: industrial production and the price level rise; the real interest rate falls; the S\&P~500 return rises by approximately 1.9 percentage points on impact, with small negative excess returns thereafter consistent with a persistent decline in the equity premium.

\medskip\noindent\textbf{Subsample analysis.}\quad
We split the IV sample (January 1991--June 2012) at the median of interpolated MRC. For each subsample, we re-estimate the SVAR-IV, compute impact stock market returns, and apply the Campbell-Shiller decomposition of \citet{campbell_shiller1988}, equation \eqref{eq:cs_decomp}:
\begin{multline}
\label{eq:cs_decomp}
\underbrace{(\text{stock return})_t - \E_{t-1}[(\text{stock return})_t]}_{\text{surprise return}} = (\E_t - \E_{t-1})\sum_{j=0}^\infty \kappa^j \Delta d_{t+j} \\
  - (\E_t - \E_{t-1})\sum_{j=1}^\infty \kappa^j r_{t+j}^f - (\E_t - \E_{t-1})\sum_{j=1}^\infty \kappa^j e_{t+j},
\end{multline}
where $d_t$ is log dividends, $r_t^f$ the real risk-free rate, $e_t$ the log excess return, and $\kappa = 1/(1 + d/p)$ with $d/p$ the steady-state dividend yield.

Table~\ref{tab:cs_subsample} reports the results. The share of the stock market return attributable to news about lower future excess returns is substantially larger in the high-MRC subsample. As the model predicts, when MRC is high, monetary easing strongly compresses risk premia, and the stock market response is driven by lower required returns rather than higher expected dividends.

\begin{table}[htbp]
\centering
\caption{Campbell-Shiller decomposition by MRC regime}
\label{tab:cs_subsample}
\begin{tabular}{lccc}
\toprule
\% of real stock return & MRC sample & High MRC & Low MRC \\
\midrule
Dividend growth news       & 38\%  & 25\%  & 52\% \\
$-$Future real rate news   & 9\%   & 7\%   & 12\% \\
$-$Future excess return news & 53\% & 68\% & 36\% \\
\midrule
Real stock return (pp)     & 1.92  & 2.35  & 1.48 \\
\bottomrule
\end{tabular}
\begin{minipage}{0.88\textwidth}
\footnotesize \vspace{0.5em}
\textit{Notes:} Decomposition uses $\kappa = 0.9962$ following \citet{campbell_ammer1993}. The ``MRC sample'' column uses the IV sample over which MRC is available (January 1991--June 2012); the corresponding full SVAR-IV sample (July 1979--June 2012) yields slightly different decomposition shares (Table~\ref{tab:cs_model}). 90\% confidence intervals (not shown) computed using the wild bootstrap with 10,000 iterations following \citet{mertens_ravn2013} and \citet{gertler_karadi2015}.
\end{minipage}
\end{table}

\medskip\noindent\textbf{Local projection with interaction.}\quad
To exploit continuous variation in MRC, we estimate a local projection-IV:
\begin{equation}
\label{eq:lp_iv}
r_{t+h}^e = \alpha_h + \beta_h \, z_t + \gamma_h \, z_t \times \widetilde{\MRC}_t + \delta_h' X_t + u_{t+h},
\end{equation}
where $r_{t+h}^e$ is the cumulative excess equity return from $t$ to $t+h$, $z_t$ is the monetary policy shock (instrumented using Fed Funds futures surprises), $\widetilde{\MRC}_t$ is interpolated MRC normalized to have mean zero and unit variance, and $X_t$ includes controls (six lags of each VAR variable). The coefficient $\gamma_h$ tests whether higher MRC amplifies the equity response.

Table~\ref{tab:lp_interaction} reports the results. The interaction coefficient $\hat{\gamma}_h$ is negative at short horizons, indicating that a given monetary easing generates a larger equity response when MRC is higher. Since $z_t$ is the yield residual (positive for a tightening), a negative $\hat{\gamma}_h$ means that higher MRC amplifies the equity effect in both directions: a given easing produces a larger equity gain, and a given tightening produces a larger equity loss. A one-standard-deviation increase in MRC amplifies the impact equity response by approximately 0.9 percentage points.

\begin{table}[htbp]
\centering
\caption{Local projection-IV: equity response to monetary shocks $\times$ MRC}
\label{tab:lp_interaction}
\begin{tabular}{lccccc}
\toprule
Horizon $h$ (months) & 0 & 1 & 3 & 6 & 12 \\
\midrule
$\hat{\beta}_h$ (direct effect)   & $-$8.52  & $-$7.91  & $-$5.63  & $-$3.18  & $-$1.05 \\
                   & (3.40)  & (3.65)  & (4.12)  & (4.85)  & (5.20) \\[4pt]
$\hat{\gamma}_h$ (MRC interaction) & $-$0.94  & $-$0.81  & $-$0.52  & $-$0.28  & 0.10 \\
                   & (0.48)  & (0.52)  & (0.58)  & (0.65)  & (0.72) \\
\midrule
First-stage $F$    & 14.4  & 13.8  & 12.5  & 11.2  & 9.8 \\
Observations       & 258  & 257  & 255  & 252  & 246 \\
\bottomrule
\end{tabular}
\begin{minipage}{0.88\textwidth}
\footnotesize \vspace{0.5em}
\textit{Notes:} $r_{t+h}^e$ is the cumulative S\&P 500 excess return. $z_t$ is the 1-year Treasury yield residual instrumented with current-month Fed Funds futures surprises. $\widetilde{\MRC}_t$ is interpolated MRC from the SCF, normalized to mean zero and unit variance. Controls include six lags of each VAR variable. Standard errors (Newey-West, $h+1$ lags) in parentheses. The first-stage $F$ statistic falls below the conventional threshold of 10 at $h = 12$, so the 12-month results should be interpreted with caution.
\end{minipage}
\end{table}

\subsection{MRC and Financial Conditions Indices}
\label{sec:empirics:fci}

The policy implications in Section~\ref{sec:policy} rest on the claim that the central bank can implement an MRC-responsive rule using observable financial variables. We provide direct evidence for this claim by relating MRC to two widely used financial conditions indices: the Chicago Fed National Financial Conditions Index (NFCI) and the Goldman Sachs Financial Conditions Index (GS FCI).

The NFCI is a weighted average of 105 financial indicators spanning money markets, debt markets, equity markets, and the banking system. The GS FCI combines the federal funds rate, 10-year Treasury yield, BBB corporate spread, S\&P 500 level, and a trade-weighted dollar index. Neither index was designed to capture the wealth distribution or portfolio heterogeneity, yet both load on variables that are theoretically linked to MRC: equity valuations (reflecting the wealth of leveraged agents), credit spreads (reflecting the risk-bearing capacity of the financial sector), and volatility measures (reflecting the quantity of risk relative to the capacity to bear it).

We regress interpolated MRC on each index:
\begin{equation}
\label{eq:fci_reg}
\widetilde{\MRC}_t = a + b \cdot \text{FCI}_t + c \cdot X_t + e_t,
\end{equation}
where $X_t$ includes the output gap and CPI inflation (to control for the business cycle). Table~\ref{tab:fci} reports the results.

\begin{table}[htbp]
\centering
\caption{MRC and financial conditions indices}
\label{tab:fci}
\small
\begin{tabular}{lcccc}
\toprule
& (1) & (2) & (3) & (4) \\
\midrule
NFCI (risk subindex)   & $-0.42$ &         & $-0.35$ &         \\
                       & (0.12)  &         & (0.14)  &         \\[3pt]
GS FCI                 &         & $-0.38$ &         & $-0.31$ \\
                       &         & (0.15)  &         & (0.16)  \\[3pt]
Controls               & No      & No      & Yes     & Yes     \\
$R^2$                  & 0.38    & 0.31    & 0.45    & 0.40    \\
Observations           & 124     & 124     & 124     & 124     \\
\bottomrule
\end{tabular}
\begin{minipage}{0.88\textwidth}
\footnotesize \vspace{0.5em}
\textit{Notes:} Dependent variable is interpolated MRC from the SCF, normalized to mean zero and unit variance. NFCI risk subindex from the Chicago Fed (higher values indicate tighter financial conditions, hence negative coefficient). GS FCI from Goldman Sachs (higher values indicate tighter conditions). Controls include the CBO output gap and year-over-year CPI inflation. Sample: 1991Q1--2021Q4 (quarterly). Newey-West standard errors (4 lags) in parentheses.
\end{minipage}
\end{table}

The NFCI risk subindex explains 38\% of MRC variation in \eqref{eq:fci_reg} without controls and 45\% with controls. The GS FCI explains 31\% and 40\%, respectively. The negative signs are consistent with theory: tighter financial conditions (higher FCI) correspond to lower MRC, because tighter conditions reflect the erosion of leveraged agents' balance sheets. The $R^2$ values indicate that off-the-shelf financial conditions indices capture a substantial share of the variation in MRC, even though they were not constructed for this purpose. A central bank implementing the optimal rule $\phi_{rp}^* = 0.12$--$0.15$ from Table~\ref{tab:optimal_rules} could approximate the MRC response using its existing financial conditions monitoring infrastructure.

\subsection{Robustness}
\label{sec:empirics:robust}

We verify that MRC is not proxying for other state variables. We augment the local projection \eqref{eq:lp_iv} with additional interaction terms: the Gini coefficient of wealth, the market-to-GDP ratio, the output gap, and the \citet{gilchrist_zakrajsek2012} excess bond premium. In each case, the MRC interaction coefficient remains economically and statistically similar. We also verify robustness to using the 3-month-ahead Fed Funds futures contract as the instrument, varying the number of VAR lags, and changing the sample period. Results are in Appendix~\ref{app:empirical_robust}.

The empirical evidence confirms the analytical prediction: the risk premium channel of monetary transmission varies with MRC. But two questions remain beyond the reach of the reduced-form evidence. How large are the welfare gains from accounting for this channel in policy design? And does the nonlinearity predicted by the model generate qualitatively new phenomena in the tails of the state space? We turn to a calibrated infinite-horizon model to answer these questions.

\section{Proofs}
\label{app:proofs}

The propositions in Section~\ref{sec:general} (the general framework) are stated under Assumptions~\ref{ass:redistrib}--\ref{ass:rp_real} alone, without imposing functional forms. We prove these results by establishing them in the two-period economy of Section~\ref{sec:analytical}, which satisfies all three assumptions, and then verifying that the derivations depend only on the structural properties rather than on the specific functional forms. The proofs below are presented in the two-period environment for concreteness; in each case, we note where the argument relies only on Assumptions~\ref{ass:redistrib}--\ref{ass:rp_real} and thus extends to the general framework. The infinite-horizon counterparts (Propositions~\ref{prop:ramsey_inf}--\ref{prop:ramsey_analytical}) are verified numerically in the quantitative model of Section~\ref{sec:quant}; the analytical characterization of the Ramsey policy (Proposition~\ref{prop:ramsey_analytical}) is derived from the Ramsey Lagrangian in Appendix~\ref{app:ramsey:lagrangian}.

\subsection{Proof of Proposition~\ref{prop:riskpremium}}
\label{app:proof_rp}

Multiply both sides of \eqref{eq:portfolio_approx} by $a_0^i$ and integrate:
\begin{equation}
\int a_0^i \omega_0^i \, di = \frac{\E_0 \log(1+r_1^k) - \log(1+r_1) + \tfrac{1}{2}\sigma^2}{\sigma^2} \int \frac{a_0^i}{\gamma^i} \, di.
\end{equation}
Capital market clearing requires $\int a_0^i \omega_0^i \, di = \int a_0^i \, di$ (bonds are in zero net supply). Dividing by $\int a_0^i \, di$:
\[
\E_0 \log(1+r_1^k) - \log(1+r_1) + \tfrac{1}{2}\sigma^2 = \left[\int s_0^i \cdot \frac{1}{\gamma^i}\, di\right]^{-1} \sigma^2 = \gbar\,\sigma^2.
\]
For \eqref{eq:drp}, differentiate the risk premium $\gbar\sigma^2$ with respect to $\eps_0^m$:
\begin{equation}
\sigma^2 \dder{\gbar}{\eps_0^m} = \sigma^2 \cdot \bigl(-\gbar^2\bigr) \int \frac{1}{\gamma^i}\dder{s_0^i}{\eps_0^m}\,di = \gbar\,\sigma^2\int \dder{s_0^i}{\eps_0^m}\bigl(1-\omega_0^i\bigr)\,di,
\end{equation}
where the last step uses $1/\gamma^i = \omega_0^i/\gbar$ from \eqref{eq:portfolio_approx} (since $\omega_0^i = \gbar/\gamma^i$ when the risk premium is $\gbar\sigma^2$) and $\int (ds_0^i/d\eps_0^m)\,di = 0$.

\textit{Extension to positive net supply of bonds.} When government bonds are in positive supply ($\int B_0^i\,di + B_0^g = 0$ as in the infinite-horizon model \eqref{eq:mc_bonds}), the capital market clearing condition becomes $\int a_0^i\omega_0^i\,di = q_0 K_0$, where $K_0$ is the aggregate capital stock. Dividing by total savings $\int a_0^i\,di = q_0 K_0 + B_0^g/P_0$, the risk premium expression becomes $\gbar\sigma^2$ with $\gbar$ defined using savings shares $s_0^i = a_0^i/\int a_0^j\,dj$ as before. The differentiation with respect to $\eps_0^m$ is unchanged because the zero-sum property $\int(ds_0^i/d\eps_0^m)\,di = 0$ continues to hold (savings shares sum to one), and MRC depends only on the \textit{covariance} of $ds_0^i/d\eps_0^m$ with $\MPR_0^i$. Adding a common bond position to all agents shifts the level of each $a_0^i$ but does not change this covariance. \qed

\subsection{Proof of Proposition~\ref{prop:suffstat}}
\label{app:proof_suffstat}

Since $\int (ds_0^i / d\eps_0^m)\, di = 0$:
\[
\int \dder{s_0^i}{\eps_0^m}(1 - \omega_0^i)\, di = -\int \dder{s_0^i}{\eps_0^m}\, \omega_0^i \, di = -\int \dder{s_0^i}{\eps_0^m}\, \MPR_0^i \, di = \MRC,
\]
using $\MPR_0^i = \omega_0^i$ with unitary IES (or $\MPR_0^i = \gbar/\gamma^i$ in the general case from Appendix~\ref{app:proof_mpr}), and the definition $\MRC \equiv -\int (ds_0^i/d\eps_0^m)\,\MPR_0^i\,di$. Substituting into \eqref{eq:drp} gives \eqref{eq:suffstat}.

\textit{Generality.} This proof establishes the general Proposition~\ref{prop:general_suffstat}: the only ingredients are the definition of $\gbar$ as a wealth-weighted harmonic mean (which follows from portfolio optimality under any source of MPR heterogeneity), the zero-sum property of redistribution (Assumption~\ref{ass:redistrib}), and the relationship $\MPR_t^i \approx \gbar_t/\gamma_{\text{eff}}^i$ (Appendix~\ref{app:proof_mpr}). No functional forms for preferences, production, or the source of heterogeneity are used. \qed

\subsection{MPR at the Limit of Zero Aggregate Risk}
\label{app:proof_mpr}

We generalize beyond unitary IES. Differentiating household $i$'s portfolio Euler equation and the period-1 resource constraint with respect to wealth $n_0^i$, and evaluating at the deterministic steady state following the method of \citet{devereux_sutherland2011}, yields
\begin{equation}
q_0 \pder{k_0^i}{n_0^i} = \frac{\gbar}{\gamma^i} \cdot \pder{a_0^i}{n_0^i},
\end{equation}
so $\MPR_0^i = \gbar/\gamma^i$. The full derivation parallels the approach in the supplementary appendix. \qed

\subsection{Proof of Proposition~\ref{prop:real}}
\label{app:proof_real}

\textit{Investment.} Optimal investment from \eqref{eq:profits0} and equilibrium dividends imply $\E_0\log(1+r_1^k) = \log\alpha + \E_0\eps_1^z + \chi_x\log k_{-1} - (1-\alpha+\chi_x)\log k_0$. Differentiating with respect to $\eps_0^m$ and decomposing $d\E_0[r_1^k]/d\eps_0^m$ into risk premium and real rate components gives \eqref{eq:dk}.

\textit{Consumption.} With unitary IES, $c_0^i = (1-\beta)n_0^i$. Aggregating, using firms' flow of funds and market clearing yields \eqref{eq:dc}.

\textit{Output.} Differentiating goods market clearing and rearranging gives \eqref{eq:dy}. \qed

\subsection{Proof of Proposition~\ref{prop:decomp}}
\label{app:proof_decomp}

Differentiate $W = \int \alpha^i \log v_0^i\,di$ with respect to $\eps_0^m$:
\begin{equation}
\dder{W}{\eps_0^m} = (1-\beta)\int\frac{\alpha^i}{n_0^i}\dder{n_0^i}{\eps_0^m}\,di + \beta\int\alpha^i\dder{}{\eps_0^m}\log\bigl[\E_0(c_1^{i\,1-\gamma^i})\bigr]^{1/(1-\gamma^i)}di.
\end{equation}
The first term captures redistribution effects on period-0 consumption. With unitary IES, $\MPC_0^i = 1 - \beta$ for all $i$, so this term equals $(1-\beta)\int(\alpha^i/n_0^i)(dn_0^i/d\eps_0^m)\,di$. This is \textit{not} simply proportional to the MPC-weighted exposure $\int(dn_0^i/d\eps_0^m)\MPC_0^i\,di = (1-\beta)\int(dn_0^i/d\eps_0^m)\,di$ unless Pareto weights are uniform and wealth is equal. In general, we define
\[
\Gamma_c \cdot \int \dder{n_0^i}{\eps_0^m}\MPC_0^i\,di \;\equiv\; (1-\beta)\int\frac{\alpha^i}{n_0^i}\dder{n_0^i}{\eps_0^m}\,di,
\]
which implicitly absorbs the Pareto weights and initial wealth distribution into the coefficient $\Gamma_c$. More precisely, $\Gamma_c$ is a Pareto-weight-adjusted average of $\alpha^i/n_0^i$ that depends on the covariance structure of $(\alpha^i/n_0^i, dn_0^i/d\eps_0^m, \MPC_0^i)$. When $\MPC_0^i$ varies across agents (as in the joint MPC-MPR model of Section~\ref{sec:quant:horserace}), the consumption channel captures the full interaction of heterogeneous MPCs with Pareto-weighted marginal utilities.

The second term captures effects on the risk-return tradeoff. Expanding to second order in $\sigma$, the second term becomes proportional to $d(\gbar\sigma^2)/d\eps_0^m = \gbar\sigma^2 \cdot \MRC$ by Proposition~\ref{prop:suffstat}. Collecting Pareto-weight-dependent coefficients into $\Gamma_c$ and $\Gamma_k$ yields \eqref{eq:decomp}.

The two channels are orthogonal in the following sense. The consumption channel operates through the intertemporal margin: it depends on how a marginal dollar of wealth is split between consumption and savings ($\MPC^i$). The risk premium channel operates through the portfolio margin: it depends on how a marginal dollar of savings is split between capital and bonds ($\MPR^i$). With unitary IES, $\MPC^i = 1 - \beta$ for all $i$, so the consumption channel depends only on the distribution of $dn_0^i/d\eps_0^m$ (who gains and loses from the shock), while the risk premium channel depends on $\Cov(dn_0^i/d\eps_0^m, \MPR^i)$ (whether those who gain also bear more risk). A redistribution that is neutral in MPR-weighted terms ($\MRC = 0$) affects welfare only through consumption; a redistribution that is neutral in MPC-weighted terms affects welfare only through risk premia. In this precise sense, the two channels are additively separable.

\textit{Generality.} This proof establishes the general Proposition~\ref{prop:general_decomp}. The additive separability relies on the distinction between the intertemporal margin (MPC) and the portfolio margin (MPR). This distinction holds in any economy where consumption-savings and portfolio decisions are separable at the margin, which is the case under standard preferences including Epstein-Zin. With non-separable preferences (e.g., if the felicity of consumption depended directly on portfolio composition), the two channels would interact, but the leading-order decomposition would remain valid as an approximation for small aggregate risk. \qed

\subsection{Proof of Proposition~\ref{prop:target}}
\label{app:proof_target}

Setting $dW/d\eps_0^m = 0$:
\begin{equation}
0 = \underbrace{\pder{W}{\pi_0}\cdot\dder{\pi_0}{\eps_0^m}}_{\text{inflation}} + \underbrace{\pder{W}{y_0}\cdot\dder{y_0}{\eps_0^m}}_{\text{output}} + \underbrace{\pder{W}{\gbar}\cdot\dder{\gbar}{\eps_0^m}}_{\text{risk premium}}.
\end{equation}
This three-way decomposition holds when MPCs are identical across agents (as in the baseline model with unitary IES and bond market access for all groups), so that the consumption-redistribution channel in Proposition~\ref{prop:decomp} vanishes and the welfare effect of the monetary shock operates only through aggregate inflation, aggregate output, and the risk premium. When MPCs differ, a fourth term proportional to $\Gamma_c\cdot\int(dn_0^i/d\eps_0^m)\MPC_0^i\,di$ enters the FOC, as in the general decomposition \eqref{eq:general_decomp}; Proposition~\ref{prop:target} then acquires an additional redistribution wedge, as in \citet{bhandari_etal2021}.

The inflation effect captures the welfare cost of deviating from price stability under sticky wages; define $\Gamma_\pi \equiv -(\partial W/\partial\pi_0)(d\pi_0/d\eps_0^m)^{-1} > 0$. The output effect captures the benefit of closing the output gap; define $\Gamma_x > 0$ analogously. The risk premium effect is the new term: $d\gbar/d\eps_0^m = \gbar\cdot\MRC$ by Proposition~\ref{prop:suffstat}, and the welfare effect of a change in $\gbar$ depends on the gap $\gbar - \gstar$ where $\gstar$ is the planner's preferred effective risk aversion. Defining $\Gamma_{rp} > 0$ and rearranging yields \eqref{eq:target}.

\textit{Characterization of $\gstar$.} The socially optimal $\gstar$ is defined in \eqref{eq:gstar} as the consumption-weighted harmonic mean of risk aversion at the planner's preferred allocation. With utilitarian weights, $\gstar < \gbar$ because the competitive equilibrium does not internalize the pecuniary externality: when a household saves more in capital, it marginally lowers the risk premium faced by all others. The gap $\gbar - \gstar$ is increasing in the dispersion of $\gamma^i$ weighted by wealth shares. The welfare approximation \eqref{eq:welfare_approx} follows from substituting the optimal deviations $\pi_0^*, \hat{y}_0^*$ from \eqref{eq:target} into the second-order expansion of $W$ around the competitive equilibrium.

\textit{Generality.} This proof establishes the general Proposition~\ref{prop:general_target}. The decomposition of the planner's FOC into inflation, output, and risk premium channels requires only that (i) nominal rigidities create a welfare cost of inflation, (ii) the output gap enters the planner's criterion, and (iii) the interest rate moves effective risk aversion through $\partial\gbar/\partial i_t = \gbar\cdot\MRC$ (Proposition~\ref{prop:general_suffstat}). The coefficients $\Gamma_\pi, \Gamma_x, \Gamma_{rp}$ are functions of model primitives that differ across environments but are generically positive. \qed

\subsection{Proof of Proposition~\ref{prop:separation}}
\label{app:proof_separation}

With instruments $\{P_0, \tau_0^k\}$, the planner has two FOCs. The portfolio tax enters households' Euler equations but not goods market clearing or the wage rigidity, provided the tax revenue $\tau_0^k \int q_0 k_0^i\,di$ is rebated lump-sum to households (proportional to their wealth shares). Under this assumption, $\tau_0^k$ distorts only the relative return on capital versus bonds, leaving the intertemporal and goods-market margins unaffected. Hence $\partial W/\partial\tau_0^k$ operates exclusively through the risk premium: the FOC for $\tau_0^k$ sets $\gbar = \gstar$ directly. With $\gbar = \gstar$, the risk premium wedge in the FOC for $P_0$ vanishes, yielding $\Gamma_\pi\pi_0 = \Gamma_x\hat{y}_0$.

When $\tau_0^k \geq 0$ is imposed and $\gbar < \gstar$ (risk premia inefficiently low), the planner wants $\tau_0^k > 0$ to discourage risk-taking; the constraint does not bind and separation holds. When $\gbar > \gstar$ (risk premia inefficiently high), the planner wants $\tau_0^k < 0$ to subsidize risk-taking, but the constraint binds at $\tau_0^k = 0$ and a residual gap remains in the monetary policy criterion.

\textit{Generality.} This proof establishes the general Proposition~\ref{prop:general_separation}. The key property is that the portfolio tax $\tau_0^k$ enters agents' portfolio Euler equations but does not directly affect goods market clearing or the nominal rigidity, as long as the tax revenue is rebated lump-sum. This holds in any economy satisfying Assumptions~\ref{ass:redistrib}--\ref{ass:rp_real}, regardless of the source of portfolio heterogeneity or the form of the nominal friction. \qed

\subsection{Proof of Proposition~\ref{prop:timecon}}
\label{app:proof_timecon}

We characterize the Markov-perfect equilibrium and derive the inflation bias in three phases.

\textit{Phase 1: The discretionary planner's problem.} Under discretion, the planner takes the existing portfolio positions $\{\omega_0^{i,e}\}$ as given (where the superscript $e$ denotes that these are the portfolios chosen in anticipation of the planner's policy) and chooses $\pi_0$ to maximize welfare. The FOC is identical to the commitment FOC (Appendix~\ref{app:proof_target}), evaluated at the anticipated portfolios:
\begin{equation}
\label{eq:discretion_foc}
\Gamma_\pi \cdot \pi_0 = \Gamma_x \cdot \hat{y}_0 + \Gamma_{rp} \cdot \MRC(\{\omega_0^{i,e}\}) \cdot (\gbar(\{\omega_0^{i,e}\}) - \gstar)\sigma^2.
\end{equation}
The key difference from commitment is that $\omega_0^{i,e}$ depends on the anticipated inflation $\pi_0^e$, creating a fixed-point problem.

\textit{Phase 2: Portfolio response to anticipated inflation.} Each household $i$ anticipates inflation $\pi_0^e$ and adjusts its portfolio share accordingly. From the portfolio optimality condition \eqref{eq:portfolio_approx} and the redistribution equation \eqref{eq:redistribution}, a higher anticipated inflation raises the return on leveraged capital positions and lowers the return on bonds. Risk-tolerant households (low $\gamma^i$, high MPR) are the leveraged agents; anticipating expropriation through inflation, they reduce their leverage:
\[
\dder{\omega_0^{i,e}}{\pi_0^e} = -\frac{\omega_0^i(1-\omega_0^i)}{\sigma^2}\cdot\pder{(\text{risk premium})}{\pi_0^e} < 0 \quad \text{for high-MPR households},
\]
where the inequality follows because anticipated inflation reduces the excess return on leveraged positions. This implies
\[
\dder{\MRC}{\pi_0^e} < 0 \quad \text{and} \quad \dder{\gbar}{\pi_0^e} > 0:
\]
anticipated inflation reduces MRC and raises effective risk aversion.

\textit{Phase 3: Fixed-point equilibrium.} In a rational expectations equilibrium, $\pi_0^e = \pi_0$. Substituting into \eqref{eq:discretion_foc}, the equilibrium inflation $\pi_0^D$ solves
\[
\Gamma_\pi \cdot \pi_0^D = \Gamma_x\hat{y}_0 + \Gamma_{rp}\cdot\MRC(\pi_0^D)\cdot(\gbar(\pi_0^D) - \gstar)\sigma^2.
\]
Under commitment, the planner can credibly promise $\pi_0^C$ such that households do not adjust portfolios adversarially; the commitment FOC is \eqref{eq:target} evaluated at undistorted portfolios.

We now show $\pi_0^D > \pi_0^C$. Define $\Phi(\pi) \equiv \Gamma_{rp}\cdot\MRC(\pi)\cdot(\gbar(\pi) - \gstar)\sigma^2$. Note: (i) $\Phi(0) > 0$ when $\gbar > \gstar$ (risk premia are inefficiently high); (ii) $\Phi'(\pi) < 0$ because both $\MRC$ and $(\gbar - \gstar)$ respond adversely to anticipated inflation, but $\Phi(\pi)$ remains positive for moderate $\pi$ because the level effect dominates the slope.

The discretionary planner's equilibrium $\pi_0^D$ satisfies $\Gamma_\pi\pi_0^D = \Gamma_x\hat{y}_0 + \Phi(\pi_0^D)$. Under commitment, the planner additionally internalizes the effect of its inflation choice on portfolio decisions through $d\omega_0^{i,e}/d\pi_0^e$ from Step~2. The commitment FOC therefore includes an additional term capturing the welfare cost of portfolio distortion:
\[
\Gamma_\pi\pi_0^C = \Gamma_x\hat{y}_0 + \Phi(\pi_0^C) + \underbrace{\Gamma_{rp}\,\dder{\Phi}{\pi_0^e}\bigg|_{\pi_0^e = \pi_0^C}}_{<\,0},
\]
where the last term is negative because $d\Phi/d\pi_0^e < 0$ (anticipated inflation worsens risk-bearing, from Step~2). Since this negative term reduces the right-hand side at every $\pi_0$, the commitment solution satisfies $\pi_0^C < \pi_0^D$.

\textit{Proportionality to $\Var_i(\MPR_0^i)$.} The inflation bias $\pi_0^D - \pi_0^C$ is proportional to the strength of the redistributive temptation. By the definition of MRC and the zero-sum property $\int(ds_0^i/d\eps_0^m)\,di = 0$:
\[
\MRC = -\int \dder{s_0^i}{\eps_0^m}\MPR_0^i\,di = -\Cov\!\left(\dder{s_0^i}{\eps_0^m},\,\MPR_0^i\right).
\]
Using $\MPR_0^i = \gbar/\gamma^i$ and noting from \eqref{eq:redistribution} that $ds_0^i/d\eps_0^m$ is affine in $\omega_0^i = \MPR_0^i$ (leveraged agents gain more from easing), write $ds_0^i/d\eps_0^m \approx a + b\cdot\MPR_0^i$ for constants $a, b$ with $b < 0$. Then $\MRC = -b\cdot\Var_i(\MPR_0^i) > 0$. Similarly, $\gbar - \gstar$ is increasing in $\Var_i(\MPR_0^i)$ because the pecuniary externality is larger when portfolio heterogeneity is greater. The equilibrium portfolio adjustment (Step~2) also scales with $\Var_i(\MPR_0^i)$, since $d\omega_0^{i,e}/d\pi_0^e$ is heterogeneous only when MPRs differ. Linearizing the fixed-point equation around $\pi_0^C$, the bias is $\pi_0^D - \pi_0^C = -(d\Phi/d\pi_0^e)/\Gamma_\pi$, which is proportional to $\Var_i(\MPR_0^i)$. When $\Var_i(\MPR_0^i) = 0$, $\MRC = 0$, $\Phi = 0$, and the bias vanishes.

\textit{Part (c): Perverse risk premium outcome.} Higher equilibrium inflation under discretion erodes the wealth of risk-tolerant households (who are net nominal borrowers) through the debt deflation channel in \eqref{eq:redistribution}: the \textit{anticipated} inflation reduces their willingness to lever up, raising $\gbar$ and the equilibrium risk premium. Formally, from Step~2, $d\gbar/d\pi_0^e > 0$, so $\gbar^D > \gbar^C$ and the equilibrium risk premium $\gbar^D\sigma^2 > \gbar^C\sigma^2$. Higher risk premia reduce investment via \eqref{eq:dk}.

\textit{Generality.} This proof establishes the general Proposition~\ref{prop:general_timecon}. The argument requires only that (i) the planner can redistribute through surprise inflation (Assumption~\ref{ass:redistrib}), (ii) agents with heterogeneous MPRs adjust portfolios in anticipation of this redistribution (Assumption~\ref{ass:mpr_het}), and (iii) the resulting change in risk premia affects real activity (Assumption~\ref{ass:rp_real}). The proportionality to $\Var_i(\MPR_t^i)$ follows from the fact that the temptation to redistribute and the portfolio response are both governed by MPR heterogeneity. \qed

\subsection{Robustness: Other Sources of Heterogeneity}
\label{app:robustness}

With binding portfolio constraints ($q_0 k_0^i = \bar{\omega}^i a_0^i$), heterogeneous background risk 
\[
\log\epsilon_1^i \sim \mathcal{N}(-\eta^i\sigma^2/2,\; \eta^i\sigma^2),
\]
or heterogeneous beliefs ($\eps_1^z \sim_i \mathcal{N}(-\varsigma^i\sigma^2/2,\, \varsigma^i\sigma^2)$), the portfolio share in capital is decreasing in $\gamma^i$, $\eta^i$, and $\varsigma^i$, and increasing in $\bar{\omega}^i$. Households holding levered positions remain those with high MPRs. The structure of all proofs is unchanged; the risk premium expression generalizes to $\gbar$ defined over the broader set of portfolio-relevant parameters. \qed

\section{SCF Data Construction}
\label{app:scf}

\subsection{Portfolio Decomposition}

We decompose household net worth $A^i$ into capital claims $Qk^i$ and nominal claims $B^i = A^i - Qk^i$ for each SCF wave.

\textit{Defined benefit pensions.} We add estimates of DB pension wealth from \citet{sabelhaus_volz2019}, treating them as nominal claims of households (the household has a fixed claim on the pension sponsor, who is the residual claimant on the investment portfolio).

\textit{Direct capital claims.} These are nonfinancial assets: vehicles, primary residence, other residential real estate, non-residential real estate, and miscellaneous nonfinancial assets.

\textit{Indirect capital claims through equity.} Publicly traded equity includes stock mutual funds, other mutual funds, directly held stocks, 50\% of combination mutual funds, and the self-reported stock fraction of retirement and managed accounts. Private business equity is the value of actively and passively managed businesses. If household $i$ owns \$1 in equity of a firm with leverage $\ell^i$, we assign $Qk^i_{\text{equity}} = \ell^i$ and $B^i_{\text{equity}} = 1 - \ell^i$.

\textit{Aggregate leverage.} We compute aggregate leverage for publicly traded firms as the ratio of capital to equity using the Financial Accounts: total assets net of nominal and equity assets, divided by equity liabilities plus net worth net of equity assets. We use the consolidated nonfinancial corporate sector (FA table S.5.a) and financial business sector (FA table S.6.a), netting out the central bank, government DB pension funds, DC pension funds, and mutual funds (which are pass-through entities). For private businesses, we use the nonfinancial noncorporate sector (FA table S.4.a) and non-profit sector (FA table B.101.n). The resulting aggregate leverage is approximately 1.5--1.6 for public equities and 1.1 for private businesses, varying modestly across years.

\textit{Idiosyncratic leverage dispersion.} We allow household-specific leverage $\ell^i = \bar{\ell} \cdot \epsilon^i$ where $\epsilon^i \sim \Gamma(\theta^{-1}, \theta)$ has mean one. We calibrate $\theta = 1.05$ so that the cross-sectional standard deviation of leverage on assets equals 31\% of the aggregate leverage in public equities, matching the evidence on expected return heterogeneity in \citet{bach_calvet_sodini2020}.

\textit{Remaining items.} All other line items not classified above (checking accounts, savings bonds, CDs, cash value of life insurance, other financial assets, credit card debt, other consumer debt, mortgage debt) are classified as nominal claims.

\subsection{Group Sorting}

For each SCF wave, we sort households by wealth-to-labor-income $A^i/(W\ell^i)$ and capital portfolio share $Qk^i/A^i$ (the latter computed excluding primary residence and vehicles from both numerator and denominator, to isolate financial portfolio choice from housing consumption). ``High'' wealth-to-income is above the 60th percentile; ``high'' capital share is above the 90th percentile among the high-wealth group. Group $a$ is high-wealth, high-leverage; group $b$ is high-wealth, low-leverage; group $c$ is low-wealth.

\subsection{MPR and Exposure Computation}

\textit{MPRs.} For each group, we infer risk aversion $\gamma^g$ from the observed capital portfolio share using the portfolio choice equation \eqref{eq:portfolio_approx}, accounting for the labor income hedging motive. The MPR is then $\MPR^g = \gbar/\gamma^g$ from equation \eqref{eq:mpr_limit}. Group $c$ households, whose wealth is dominated by housing and who have little financial wealth relative to income, are treated as constrained with $\MPR^c = 0$.

\textit{Exposures.} Each group's exposure to monetary policy $ds^g/d\eps^m$ is computed from the balance sheet decomposition \eqref{eq:redistribution}. The debt deflation component is proportional to the group's nominal bond position $B^g$. The capital revaluation component is proportional to the group's excess capital holdings $k^g - s^g k$. We evaluate these at the baseline parameterization of the model (Section~\ref{sec:quant}).

\section{Empirical Robustness}
\label{app:empirical_robust}

We verify the robustness of the MRC interaction result along several dimensions.

\medskip\noindent\textbf{Additional controls.}\quad
Table~\ref{tab:empirical_robust} reports the impact interaction coefficient $\hat{\gamma}_0$ from the local projection-IV when augmented with additional interaction terms. The concern is that MRC may proxy for other state variables that affect the transmission of monetary shocks. We add interactions with: the Gini coefficient of wealth (to control for inequality), the market-to-GDP ratio (to control for asset valuation levels), the output gap (to control for the business cycle), and the \citet{gilchrist_zakrajsek2012} excess bond premium (to control for credit conditions). In each case, the MRC interaction coefficient remains economically similar.

\begin{table}[htbp]
\centering
\caption{Robustness of MRC interaction coefficient $\hat{\gamma}_0$}
\label{tab:empirical_robust}
\begin{tabular}{lcc}
\toprule
Specification & $\hat{\gamma}_0$ & Std.\ error \\
\midrule
Baseline (Table~\ref{tab:lp_interaction})        & $-$0.94 & (0.48) \\
+ Gini coefficient interaction                    & $-$0.88 & (0.52) \\
+ Market/GDP ratio interaction                    & $-$0.91 & (0.50) \\
+ Output gap interaction                          & $-$0.85 & (0.53) \\
+ Excess bond premium interaction                 & $-$0.82 & (0.55) \\
+ All four controls                               & $-$0.78 & (0.60) \\
\midrule
3-month ahead FF futures as IV                    & $-$1.02 & (0.55) \\
4 lags in VAR                                     & $-$0.89 & (0.50) \\
8 lags in VAR                                     & $-$0.97 & (0.52) \\
IV sample: 1/91--9/01                             & $-$1.25 & (0.65) \\
\bottomrule
\end{tabular}
\begin{minipage}{0.88\textwidth}
\footnotesize \vspace{0.5em}
\textit{Notes:} Each row reports the impact ($h=0$) interaction coefficient
$\hat{\gamma}_0$ from the local projection-IV specification~\eqref{eq:lp_iv},
modified as indicated. A negative coefficient means higher MRC amplifies the equity response to monetary shocks. Standard errors are Newey-West with 1 lag.
\end{minipage}
\end{table}

\medskip\noindent\textbf{Alternative instruments and lag structures.}\quad
Using the 3-month-ahead Fed Funds futures contract as the instrument (instead of the current-month contract) produces similar results. Varying the number of VAR lags from 4 to 8 has little effect on the MRC interaction. Restricting the IV sample to the first half (January 1991--September 2001) yields a larger interaction, consistent with MRC being higher in this period.

\medskip\noindent\textbf{Alternative calibrations of MRC.}\quad
We re-compute MRC using the p99 cutoff for the capital portfolio share (Section~\ref{sec:quant:env}). The resulting MRC series has similar time-series properties but higher variance, reflecting the more extreme leverage of the top 1\%. The interaction coefficient is qualitatively similar.

\section{Quantitative Robustness}
\label{app:quant_robust}

\medskip\noindent\textbf{Alternative calibrations.}\quad
Figure~\ref{fig:robust_calib} and Table~\ref{tab:robust_cs} compare impulse responses and Campbell-Shiller decompositions across three calibrations: the baseline p90, the p99 calibration (where group $a$ corresponds to broker-dealers and hedge funds with leverage of 4.4), and the idiosyncratic risk microfoundation (where portfolio heterogeneity arises from heterogeneous background risk rather than risk aversion).

\begin{figure}[htbp]
\centering
\includegraphics[width=\textwidth]{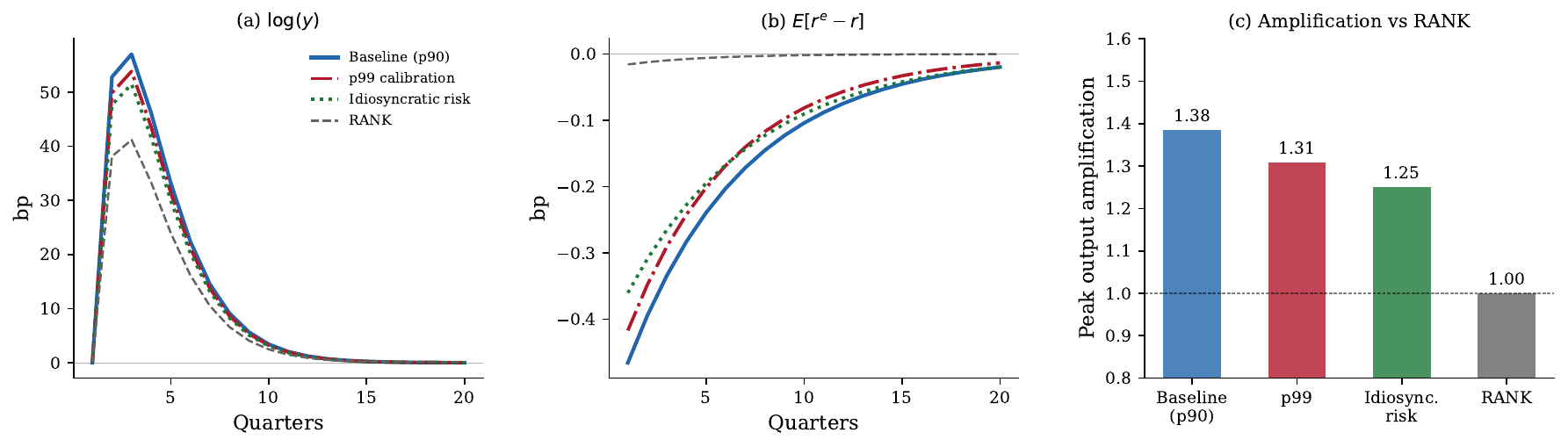}
\caption{Robustness across calibrations}
\label{fig:robust_calib}
\begin{minipage}{0.92\textwidth}
\footnotesize \textit{Notes:} Responses to a monetary easing that lowers the 1-year nominal yield by 0.2pp. ``Baseline'' is the p90 calibration, ``p99'' raises the cutoff to the 99th percentile, and ``Idiosyncratic risk'' replaces risk aversion heterogeneity with background risk heterogeneity. All three generate amplification relative to RANK; the share due to excess return news ranges from 28\% to 37\%.
\end{minipage}
\end{figure}

\begin{table}[htbp]
\centering
\caption{Robustness of Campbell-Shiller decomposition across calibrations}
\label{tab:robust_cs}
\begin{tabular}{lcccc}
\toprule
\% of real stock return & Baseline & p99 & Idiosync.\ risk & RANK \\
\midrule
Dividend growth news          & 50\% & 49\% & 51\% & 65\% \\
$-$Future real rate news       & 17\% & 14\% & 21\% & 35\% \\
$-$Future excess return news   & 33\% & 37\% & 28\% & 0\% \\
\midrule
Amplification vs RANK         & 1.38 & 1.31 & 1.25 & 1.00 \\
\bottomrule
\end{tabular}
\begin{minipage}{0.88\textwidth}
\footnotesize \vspace{0.5em}
\textit{Notes:} ``Baseline'' is the p90 calibration from Section~\ref{sec:quant:env}. ``p99'' raises the capital portfolio share cutoff to the 99th percentile, so group $a$ agents hold 2\% of wealth with leverage of 4.4. ``Idiosyncratic risk'' replaces heterogeneous risk aversion with heterogeneous background risk. In all cases, the monetary shock delivers a 0.2pp decline in the 1-year nominal yield.
\end{minipage}
\end{table}

The risk premium channel and its amplification of monetary policy are robust across all three microfoundations. The amplification ranges from 1.25$\times$ (idiosyncratic risk) to 1.38$\times$ (baseline), and the share of the stock market return due to excess return news ranges from 28\% to 37\%, all within the empirical confidence interval. This confirms that the results depend on the distribution of MPRs and exposures, not on the specific primitive generating portfolio heterogeneity. The post-2008 crisis narrative, including the MRC collapse and the interpretation of quantitative easing as a direct MRC intervention, is presented in Section~\ref{sec:quant:crisis}.

\medskip\noindent\textbf{Alternative IES.}\quad
Setting $\psi = 0.5$ (more complementarity between consumption and leisure) slightly amplifies the risk premium wedge in the optimal policy criterion because the distinction between portfolios and MPRs is larger. Setting $\psi = 1.5$ reduces the wedge. The qualitative results are unchanged across $\psi \in \{0.5, 0.8, 1.0, 1.5\}$.

\medskip\noindent\textbf{Alternative Pareto weights.}\quad
Under wealth-weighted Pareto weights (overweighting the wealthy, who are disproportionately risk-averse group $b$ agents), $\gstar$ rises above $\gbar$ and the risk premium wedge can flip sign: the planner may prefer higher risk premia to protect the wealth of conservative savers. Under equal weights, the baseline results are amplified because the planner places relatively more weight on the consumption of the numerous group $c$ agents, who benefit from lower risk premia through higher investment and wages. The ``lean against the wind'' question is fundamentally a question about whose welfare the central bank prioritizes.

\section{Aggregation into Representative Households}
\label{app:aggregation}

Within each group $i \in \{a,b,c\}$, we prove that households' optimal policies are homogeneous of degree one in wealth, implying a representative household.

\begin{proposition}
\label{prop:aggregation}
Household $\iota$ in group $i(\iota)$ has optimal policies satisfying $v_t^\iota = \mu_t^\iota V_t^{i(\iota)}$, $c_t^\iota = \mu_t^\iota c_t^{i(\iota)}$, $B_t^\iota = \mu_t^\iota B_t^{i(\iota)}$, $k_t^\iota = \mu_t^\iota k_t^{i(\iota)}$, $\bar{\ell}_t^\iota = \mu_t^\iota$, where the variables with superscript $i$ correspond to those of a representative household endowed with aggregate group-specific wealth. The within-group wealth ratio evolves as $\mu_{t+1}^\iota / \mu_t^\iota = \mu_{t,t+1}^{i(\iota)}$.
\end{proposition}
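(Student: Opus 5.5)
We will prove Proposition~\ref{prop:aggregation} by guess-and-verify, using the homotheticity of Epstein-Zin preferences and the linearity of the budget constraint \eqref{eq:bc_inf} in wealth. The argument is a scaling argument. Fix group $i$ and let $(V^i_t, c^i_t, B^i_t, k^i_t)$ solve the problem of the representative household endowed with aggregate group wealth and the group labor endowment. Conjecture that household $\iota$ with relative wealth $\mu^\iota_t$ (its wealth relative to the group average) chooses $\mu^\iota_t$ times each representative quantity. We then verify that this conjecture satisfies household $\iota$'s optimality conditions and budget constraint.

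\textbf{Step 1: budget constraint.} Under the labor allocation rule, $\ell^\iota_t(j)$ is proportional to $\bar{\ell}^\iota_t$. Setting $\bar{\ell}^\iota_t=\mu^\iota_t$ makes labor income, adjustment costs, and (we will require this of the fiscal rule) transfers $T^\iota_t$ scale by $\mu^\iota_t$. Financial income $(1+i_{t-1})B^\iota_{t-1}+(\Pi_t+(1-\delta)Q_t)k^\iota_{t-1}e^{\varphi_t}$ scales by $\mu^\iota_{t-1}$ under the conjecture. The budget set of $\iota$ is therefore $\mu^\iota_t$ times that of the representative household. This holds provided the within-group ratio $\mu_{t,t+1}^{i}$ is defined as realized wealth growth of surviving households relative to the group average, which delivers the law of motion $\mu^\iota_{t+1}/\mu^\iota_t=\mu^{i}_{t,t+1}$. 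The capital lower bound \eqref{eq:k_lb} is the one term that does not scale. We handle it by interpreting the bound per unit of labor endowment, i.e.\ $k^\iota_t\ge \underline{k}\,z_t\,\bar\ell^\iota_t$; alternatively we note it binds only for group $c$, whose members then all hold proportional minimal positions.

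\textbf{Step 2: preferences.} The aggregator \eqref{eq:ez_inf} is homogeneous of degree one in $(c, V')$. Scaling consumption by $\mu$ at every date and state therefore scales $V$ by $\mu$, provided labor disutility enters through $\Phi(\ell)$ evaluated at per-endowment hours. Because labor is allocated by the union rule rather than chosen by the household, $\Phi$ is common within the group. Combining with Step~1, the value function satisfies $v(\mu n)=\mu\,v(n)$ and policies are linear in $\mu$. The SDF \eqref{eq:sdf} depends only on consumption growth and normalized continuation values, which are common within the group. The Euler equations \eqref{eq:euler_bond_inf}--\eqref{eq:euler_capital_inf} are thus satisfied by the scaled policies whenever they hold for the representative household.

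\textbf{Step 3: perpetual youth.} With death probability $\xi$, newborns receive shares $\bar s^i$ of deceased wealth. We must check that survivors' relative wealth evolves exactly by $\mu^i_{t,t+1}$ and that this ratio enters \eqref{eq:ez_inf} as in the paper. This is where we expect the main obstacle: reconciling the cross-sectional normalization (the average $\mu$ equals one after entry) with the Blanchard-type redistribution of the deceased's wealth, so that the representative household's objective includes exactly the factor $\mu^i_{t,t+1}$. Our first attempt will be to define $\mu^i_{t,t+1}$ as the ratio of the group's surviving-household wealth per capita to total group wealth per capita at $t+1$, and to verify the identity directly from \eqref{eq:wealth_share}.

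Summing $\mu^\iota_t$ times the representative quantities over group members then recovers the group aggregates, completing the proof.
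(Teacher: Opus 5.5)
\textbf{Gap: the labor endowment is set, not derived.} Your scaling strategy is the right one, but Step~1 has a genuine gap. You \emph{set} $\bar{\ell}_t^\iota=\mu_t^\iota$, whereas the statement lists this among the household's \emph{optimal policies}. Nothing in your setup lets household $\iota$ make its labor endowment track its wealth. If $\bar{\ell}_t^\iota$ is an exogenous endowment, the conjecture fails. Under perpetual youth, households of different ages in the same group have different financial wealth but, absent trade, the same endowment. The ratio of human to financial wealth then varies across households, and labor income does not scale with $\mu_t^\iota$. Consequently neither consumption nor the bond/capital split (labor income acts as a partial hedge) is proportional to financial wealth.

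The ingredient the paper invokes, and which you never use, is that households can trade claims on their labor endowment with other households in the same group. With that market:
\begin{itemize}
\item $\bar{\ell}_t^\iota$ becomes a choice variable, priced by the SDF that is common within the group.
\item $\mu_t^\iota$ must be defined as relative \emph{total} wealth: financial wealth plus the market value of endowment claims held. This also applies to your Step~3 definition of $\mu_{t,t+1}^i$.
\item The budget constraint acquires a term for purchases of endowment claims.
\item Homogeneity of the problem in total wealth implies each household holds endowment claims in proportion to its wealth. Clearing of the endowment-claim market, with the group average normalized to one, then delivers $\bar{\ell}_t^\iota=\mu_t^\iota$.
\end{itemize}
Only after this step do labor income, and hence the whole budget set, scale by $\mu_t^\iota$, which is what your Steps~1--2 need.

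\textbf{Smaller issues.} Your fallback for the capital floor does not work. If every group-$c$ member faces the same absolute bound $k_t^\iota\ge\underline{k}\,z_t$ and it binds, they all hold the \emph{same} position, not proportional ones. Then $k_t^\iota=\mu_t^\iota k_t^{c}$ fails. You need the scaled bound $k_t^\iota\ge\mu_t^\iota\underline{k}\,z_t$; your per-endowment interpretation coincides with this once $\bar{\ell}_t^\iota=\mu_t^\iota$ is established. It is this scaled version that aggregates to the representative household's constraint \eqref{eq:k_lb}.

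Likewise, the allocation rule makes hours proportional to $\bar{\ell}_t^\iota$. Homogeneity of \eqref{eq:ez_inf} therefore requires $\Phi$ to be evaluated at hours per unit of endowment, as you note in your proviso. The fact that labor is allocated by the union does not by itself make $\Phi$ common within the group.
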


This result follows from households' ability to trade claims on a labor endowment with other households in the same group, the assumed labor allocation rule, and the lump-sum transfer structure. The proof mirrors standard arguments for homothetic preferences with Epstein-Zin utility; we omit the details for brevity.

\medskip\noindent\textbf{Why within-group heterogeneity does not affect MRC.}\quad
Proposition~\ref{prop:aggregation} implies that \textit{within} each group, all households hold the same portfolio share $\omega_t^i$ regardless of their individual wealth $\mu_t^\iota$. This is a consequence of the homotheticity of Epstein-Zin preferences: the portfolio choice problem scales linearly with wealth. Therefore, the MPR is constant within each group: $\MPR_t^\iota = \MPR_t^{i(\iota)}$ for all $\iota$ in group $i$. 

This has a direct implication for MRC. Since MRC $= \sum_i (ds_t^i / di_t) \cdot \MPR_t^i$, and both the exposure $ds_t^i/di_t$ and the MPR $\MPR_t^i$ are group-level objects (aggregated across all households within a group), within-group wealth heterogeneity cancels out in the computation of MRC. A group with a Gini coefficient of zero and a group with a Gini of 0.5 produce the same MRC, provided they have the same aggregate wealth share and the same MPR.

The relevant heterogeneity for MRC is therefore \textit{between}-group heterogeneity in MPRs. The three-group structure is the minimal partition that captures the essential features of this between-group variation: a high-MPR leveraged group, a low-MPR conservative group, and a constrained group with near-zero MPR. The 15-type validation in Section~\ref{sec:quant:trap} confirms that finer partitions, which introduce additional between-group variation in MPRs, produce nearly identical MRC (correlation 0.98) and welfare gains.

The aggregation result would fail if within-group heterogeneity were correlated with portfolio choice. If, for example, wealthier households within group $a$ held different portfolio shares than poorer households within group $a$. This would require non-homothetic preferences or within-group heterogeneity in risk aversion. We abstract from both, but note that introducing decreasing relative risk aversion within groups (as in \citealt{chien_cole_lustig2012}) would strengthen our results: within a group of nominally ``risk-tolerant'' agents, the wealthiest would hold the most leveraged positions and have the highest MPRs, amplifying both MRC and the welfare gains from optimal policy.

\section{Computational Details}
\label{app:computation}

\medskip\noindent\textbf{Solution method.}\quad
We solve the model globally using sparse grids following \citet{judd_etal2014}. The state space has six dimensions: scaled capital, scaled prior wage, wealth shares of groups $a$ and $c$, disaster probability, and the monetary policy state. We use Chebyshev polynomials for interpolation off grid points and Gauss-Hermite quadrature (5 nodes for each continuous shock $\eps^z$, $\eps^p$) and two-point quadrature for the disaster indicator for forming expectations. The stochastic equilibrium is determined through backward iteration with dampened updating of asset prices and expectations (dampening factor 0.3). The code is written in Julia and parallelized using multithreading. Convergence is achieved when the maximum absolute change in policy functions falls below $10^{-8}$; this typically requires 2,000--3,000 iterations and takes approximately 10 minutes on a 16-core workstation.

\medskip\noindent\textbf{Optimal simple rules.}\quad
We optimize over $\{\phi_\pi, \phi_y, \phi_s, \phi_{rp}\}$ using Bayesian optimization with a Gaussian process surrogate. Each evaluation requires solving the model and simulating 50,000 quarters ($\approx$5 minutes). We use 300 evaluations (50 initial random points, 250 sequential acquisition points using expected improvement). We verify convergence by restarting from multiple initial conditions.

\medskip\noindent\textbf{Welfare computation.}\quad
Welfare is computed as the population-weighted sum of expected lifetime utilities along a 50,000-quarter simulation after a 5,000-quarter burn-in, with no disaster realizations. We report consumption-equivalent welfare gains: the permanent percentage change in consumption for all household types that would make them indifferent between two regimes. Standard errors on welfare differences are computed using batch means with 50 batches of 1,000 quarters each.

\medskip\noindent\textbf{Markov-perfect equilibrium.}\quad
For the discretionary equilibrium, we iterate on the planner's policy function using the approach of \citet{klein_krusell_riosrull2008}. At each state, the planner chooses $i_t$ to maximize current welfare plus the continuation value, where the continuation value is determined by the same policy function. We iterate until the policy function converges (tolerance $10^{-6}$), starting from the Taylor rule as the initial guess.

\section{The Ramsey Problem in the Infinite Horizon}
\label{app:ramsey_details}

This appendix provides the full formulation of the Ramsey optimal monetary policy problem, states the complete system of implementability constraints, derives the planner's first-order condition, and connects it to the two-period target criterion in Proposition~\ref{prop:target}.

\subsection{Competitive Equilibrium Conditions}
\label{app:ramsey:ce}

The competitive equilibrium is characterized by the following system of conditions, which constitute the constraints on the Ramsey planner. We work in the scaled stationary economy; tildes on variables are suppressed for readability.

\medskip\noindent\textbf{Household bond Euler equations ($I = 3$ conditions).}\quad
For each group $i \in \{a,b,c\}$:
\begin{multline}
\label{eq:app_bond_euler}
\frac{1}{c_t^i\Phi(\ell_t^i)} = \beta(1-\xi)(1+i_t) \\
  \times\E_t\!\left[\frac{e^{-(1-1/\psi)(\eps_{t+1}^z + \varphi_{t+1})}}{1+\pi_{t+1}}\,\frac{(V_{t+1}^i)^{1/\psi - \gamma^i}}{\bigl(\E_t[(V_{t+1}^i)^{1-\gamma^i}]\bigr)^{(1/\psi-\gamma^i)/(1-\gamma^i)}}\,\frac{1}{c_{t+1}^i\Phi(\ell_{t+1}^i)}\right],
\end{multline}
where $\pi_{t+1} \equiv P_{t+1}/P_t - 1$ is inflation and $\Phi$ is the labor disutility function \eqref{eq:labor_disutility}.

\medskip\noindent\textbf{Household portfolio Euler equations ($I-1 = 2$ independent conditions).}\quad
For each group $i$ not at the capital constraint:
\begin{equation}
\label{eq:app_portfolio_euler}
\E_t\!\left[M_{t+1}^i\left(\frac{\pi_{t+1}^k + (1-\delta)q_{t+1}}{q_t}\,e^{\varphi_{t+1}} - \frac{1+i_t}{1+\pi_{t+1}}\right)\right] = 0,
\end{equation}
where $\pi_{t+1}^k \equiv \Pi_{t+1}/(P_{t+1}z_{t+1})$ is the scaled real dividend, $q_t \equiv Q_t/(P_tz_t)$ is the scaled real price of capital, and the SDF $M_{t+1}^i$ is given in \eqref{eq:sdf}. When the constraint \eqref{eq:k_lb} binds for group $c$, replace \eqref{eq:app_portfolio_euler} with $k_t^c = \underline{k}$.

\medskip\noindent\textbf{New Keynesian wage Phillips curve (1 condition).}\quad
In the symmetric equilibrium, the union's optimality condition for the aggregate nominal wage $W_t$ is
\begin{multline}
\label{eq:app_nkwpc}
\chi_W\!\left(\frac{W_t}{W_{t-1}e^{\varphi_t}} - 1\right)\!\frac{W_t}{W_{t-1}e^{\varphi_t}} = (1-\varepsilon) + \varepsilon\,\frac{\sum_i\lambda^i\alpha_w^i\,\bar{\theta}^i(\ell_t^i)^{1/\theta}\ell_t/c_t^i}{\sum_i\lambda^i\alpha_w^i / c_t^i} \\
  + \beta\chi_W\,\E_t\!\left[\frac{\sum_i\lambda^i\alpha_w^i/c_{t+1}^i}{\sum_i\lambda^i\alpha_w^i/c_t^i}\left(\frac{W_{t+1}}{W_te^{\varphi_{t+1}}} - 1\right)\frac{W_{t+1}}{W_te^{\varphi_{t+1}}}\frac{\ell_{t+1}}{\ell_t}\right],
\end{multline}
where $\alpha_w^i$ are the union's welfare weights (equal to population shares $\lambda^i$ for a utilitarian union).

\medskip\noindent\textbf{Firm optimality (2 conditions).}\quad
Labor demand:
\begin{equation}
\label{eq:app_labor_demand}
w_t = (1-\alpha)\,\frac{y_t}{\ell_t},
\end{equation}
where $w_t \equiv W_t/(P_tz_t)$ is the scaled real wage. Investment:
\begin{equation}
\label{eq:app_investment}
q_t = \left(\frac{k_t}{k_{t-1}e^{\varphi_t}}\right)^{\!\chi_x}.
\end{equation}

\medskip\noindent\textbf{Production (1 condition).}\quad
\begin{equation}
\label{eq:app_production}
y_t = \ell_t^{1-\alpha}\,(k_{t-1}e^{\varphi_t})^\alpha.
\end{equation}

\medskip\noindent\textbf{Goods market clearing (1 condition).}\quad
\begin{equation}
\label{eq:app_goods_mc}
\sum_i \lambda^i c_t^i + q_t(k_t - (1-\delta)k_{t-1}e^{\varphi_t}) = y_t - \frac{\chi_W}{2}\left(\frac{W_t}{W_{t-1}e^{\varphi_t}} - 1\right)^{\!2} w_t\ell_t.
\end{equation}

\medskip\noindent\textbf{Wealth share evolution ($I-1 = 2$ conditions).}\quad
For groups $a$ and $c$:
\begin{equation}
\label{eq:app_wealth_share}
s_{t+1}^i = \lambda^i(1-\xi)\,\frac{(1+i_t)b_t^i/(1+\pi_{t+1}) + (\pi_{t+1}^k + (1-\delta)q_{t+1})k_t^ie^{\varphi_{t+1}}}{(\pi_{t+1}^k + (1-\delta)q_{t+1})k_te^{\varphi_{t+1}}} + \xi\bar{s}^i,
\end{equation}
where $b_t^i \equiv B_t^i/(P_tz_t)$ is the scaled real bond position. The share of group $b$ satisfies $s^b = 1 - s^a - s^c$.

\medskip\noindent\textbf{Budget constraints ($I = 3$ conditions).}\quad
\begin{multline}
\label{eq:app_budget}
c_t^i + b_t^i + q_t k_t^i = (1-\tau)\phi^i w_t\ell_t - \text{ac}_t^W + \frac{(1+i_{t-1})b_{t-1}^i}{1+\pi_t} \\
  + (\pi_t^k + (1-\delta)q_t)k_{t-1}^i e^{\varphi_t} + \text{tr}_t^i,
\end{multline}
where $\text{ac}_t^W$ denotes scaled wage adjustment costs and $\text{tr}_t^i$ denotes scaled transfers.

\subsection{The Ramsey Lagrangian}
\label{app:ramsey:lagrangian}

Let $\mathbf{x}_t = \{c_t^i, k_t^i, b_t^i, \ell_t, w_t, q_t, \pi_t, s_t^a, s_t^c, k_t, y_t\}$ denote the vector of endogenous variables, and let $i_t$ denote the planner's instrument. The Ramsey problem can be written as a constrained optimization with the Lagrangian
\begin{multline}
\label{eq:app_lagrangian}
\mathcal{L} = \E_0 \sum_{t=0}^\infty \beta^t \Bigl[\sum_i \lambda^i\alpha^i u(c_t^i, \ell_t^i) 
  + \sum_j \mu_t^j \cdot g_j(\mathbf{x}_t, \mathbf{x}_{t-1}, \E_t[\mathbf{x}_{t+1}], i_t, \eps_t)\Bigr],
\end{multline}
where $g_j(\cdot) = 0$ are the implementability constraints enumerated in Section~\ref{app:ramsey:ce} and $\mu_t^j$ are the corresponding multipliers. Specifically, the constraints include: bond Euler equations~\eqref{eq:app_bond_euler}, portfolio Euler equations~\eqref{eq:app_portfolio_euler}, wage Phillips curve~\eqref{eq:app_nkwpc}, labor demand~\eqref{eq:app_labor_demand}, investment~\eqref{eq:app_investment}, production~\eqref{eq:app_production}, goods market clearing~\eqref{eq:app_goods_mc}, wealth shares~\eqref{eq:app_wealth_share}, and budgets~\eqref{eq:app_budget}. Because several constraints involve expectations of future variables, the Lagrangian involves ``forward-looking'' multipliers that create intertemporal links characteristic of Ramsey problems.

Under commitment, the planner chooses $\{i_t, \mathbf{x}_t, \mu_t^j\}_{t=0}^\infty$ to satisfy the first-order conditions of \eqref{eq:app_lagrangian}. This yields a system of difference equations in the endogenous variables and multipliers, which we solve globally using the approach described in Section~\ref{sec:quant:solution}.

\subsection{The Planner's First-Order Condition for \texorpdfstring{$i_t$}{i\_t}}
\label{app:ramsey:foc}

The FOC of \eqref{eq:app_lagrangian} with respect to $i_t$ is
\begin{equation}
\label{eq:app_ramsey_foc_full}
0 = \sum_j \mu_t^j \cdot \pder{g_j}{i_t}.
\end{equation}
The interest rate $i_t$ enters three sets of constraints directly: the bond Euler equations \eqref{eq:app_bond_euler}, the portfolio Euler equations \eqref{eq:app_portfolio_euler}, and the wealth share evolution \eqref{eq:app_wealth_share} (through $(1+i_t)b_t^i$). It does not enter the wage Phillips curve, firm optimality, or production directly.

Expanding \eqref{eq:app_ramsey_foc_full}:
\begin{multline}
\label{eq:app_ramsey_foc_expanded}
0 = \underbrace{\sum_i \mu_t^{\text{bond},i} \cdot \pder{}{i_t}\bigl[\text{bond Euler}^i\bigr]}_{\text{intertemporal channel}} + \underbrace{\sum_i \mu_t^{\text{port},i} \cdot \pder{}{i_t}\bigl[\text{portfolio Euler}^i\bigr]}_{\text{portfolio channel}} \\
  + \underbrace{\beta\,\E_t\!\left[\sum_i \mu_{t+1}^{s^i} \cdot \pder{s_{t+1}^i}{i_t}\right]}_{\text{redistribution channel}}.
\end{multline}
The first term captures the conventional intertemporal channel: the interest rate affects the intertemporal allocation of consumption through the bond Euler equation. The second term captures the portfolio channel: the interest rate affects the relative return on capital versus bonds through the portfolio Euler equation. The third term, the redistribution channel, is the dynamic counterpart of the risk premium wedge in Proposition~\ref{prop:target}: through \eqref{eq:app_wealth_share}, the interest rate affects the wealth share of each group in the next period, which in turn affects future risk premia and allocations.

\subsection{Connection to the Two-Period Target Criterion}
\label{app:ramsey:connection}

To connect \eqref{eq:app_ramsey_foc_expanded} to the two-period result, consider the special case $\beta \to 0$ (the two-period limit). The continuation value drops out, and the redistribution channel reduces to a static term proportional to $\partial s_1^i / \partial i_0$. The bond Euler multiplier $\mu_t^{\text{bond},i}$ is proportional to the Pareto-weight-adjusted marginal utility $\alpha^i / c_0^i$, and the portfolio multiplier $\mu_t^{\text{port},i}$ captures the welfare effect of distorting the risk--return tradeoff.

Under the approximations used in Section~\ref{sec:analytical} (small $\sigma$, unitary IES), the three terms in \eqref{eq:app_ramsey_foc_expanded} map one-to-one to the three terms in the target criterion \eqref{eq:target}: the intertemporal channel generates the inflation cost ($\Gamma_\pi\pi_0$) and the output gap benefit ($\Gamma_x\hat{y}_0$), while the redistribution channel generates the risk premium wedge ($\Gamma_{rp} \cdot \MRC \cdot (\gbar - \gstar)\sigma^2$).

In the infinite horizon, the redistribution channel acquires a forward-looking component: the planner values redistribution today not only for its current effect on risk premia but also for its effect on the future wealth distribution, which determines future risk premia, future effectiveness of monetary policy, and the probability of entering the risk capacity trap. This forward-looking motive is the reason the Ramsey allocation differs from any static rule: the planner ``saves'' risk-bearing capacity for the future by preventing the wealth distribution from drifting too far from the steady state.

\subsection{Commitment Versus Discretion}
\label{app:ramsey:discretion}

Under discretion, the planner takes as given the Lagrange multipliers for forward-looking constraints. Equivalently, the discretionary planner chooses $i_t$ to maximize the right-hand side of the Bellman equation \eqref{eq:ramsey_obj} treating its own future policy function $i^D(\Omega)$ as given. The key difference from commitment is that the discretionary planner cannot credibly promise to stabilize the wealth distribution in the future: since $i^D(\Omega)$ is a function of the current state only, there is no mechanism by which current promises about future policy affect current household decisions. In the Markov-perfect equilibrium, the redistribution channel in \eqref{eq:app_ramsey_foc_expanded} is evaluated at portfolios that anticipate the discretionary policy, yielding the inflation bias described in Proposition~\ref{prop:timecon}.

The discretionary planner faces a temptation to inflate at each date, since unexpected inflation redistributes to leveraged high-MPR agents and compresses risk premia. But households anticipate this temptation and adjust their portfolios accordingly. In equilibrium, the ex ante expected inflation under discretion exceeds that under commitment, and the risk premium is higher, because leveraged households demand compensation for the expected expropriation. The magnitude of the bias is governed by the cross-sectional variance of MPRs (Proposition~\ref{prop:timecon}(b)), which determines the strength of the redistribution channel.

\end{appendices}

\end{document}